\documentclass[11pt]{amsart}
\usepackage{graphicx} 
\graphicspath{Images/}
\usepackage{subfiles}
\usepackage{caption}
\usepackage{geometry}
\usepackage{subcaption}
\usepackage[hidelinks]{hyperref}
\usepackage{amssymb}
\usepackage{dsfont}
\def\acts{\curvearrowright}
\usepackage{amsmath}
\usepackage{amsthm}
\usepackage{thmtools}
\usepackage{thm-restate}
\renewcommand{\dot}[0]{ {\cdot}}
\usepackage{etoolbox}
\usepackage{bbm}
\usepackage{lipsum}
\usepackage{color}

\usepackage[style=alphabetic,sorting=nty,
    sortcites=true,doi=false,url=false,
    giveninits=true,hyperref]{biblatex}
\DeclareFieldFormat[incollection]{pages}{#1}
\DefineBibliographyStrings{english}{%
  page             = {},
  pages            = {},
}

\renewbibmacro{in:}{}
\renewbibmacro*{journal}{%
  \iffieldundef{shortjournal}
    {%
      \iffieldundef{journaltitle}
        {}
        {%
          \printtext[journaltitle]
            {%
              \printfield[titlecase]{journaltitle}%
              \setunit{\subtitlepunct}%
              \printfield[titlecase]{journalsubtitle}%
             }%
         }%
    }
    {\printtext[journaltitle]{\printfield[titlecase]{shortjournal}}}%
}

\DeclareFieldFormat[misc]{title}{#1}
\DeclareFieldFormat[thesis]{title}{#1}
\DeclareFieldFormat[incollection]{title}{#1}
\DeclareFieldFormat[inproceedings]{title}{#1}
\DeclareFieldFormat[article]{title}{#1}
\DeclareFieldFormat[article]{volume}{\mkbibbold{#1}}

\DeclareFieldFormat[article]{number}{(#1)}
\DeclareFieldFormat[inproceedings]{year}{(#1)}
\renewbibmacro*{volume+number+eid}{%
  \printfield{volume}%
  \printfield{number}%
  \setunit{\addcomma\space}%
  \printfield{eid}}

\renewbibmacro*{journal+issuetitle}{%
  \usebibmacro{journal}%
  \setunit*{\addspace}%
  \iffieldundef{journaltitle}{}{\usebibmacro{volume+number+eid}}%
  \setunit{\addcolon\space}%
  \usebibmacro{issue}%
  \newunit}

\DeclareFieldFormat[article]{pages}{#1} 
\renewbibmacro*{note+pages}{%
  \printfield{note}%
  \setunit{\addcomma\space}%
  \printfield{pages}%
  \setunit{\addspace}%
  \printtext{\mkbibparens{\thefield{year}}}
  \newunit}

\usepackage[toc,page]{appendix}

\newcommand{\p}[2]{\frac{\partial^{#1}}{\partial {#2}^{#1}}}

\newcommand{\bb}[1]{\mathbb #1}
\renewcommand{\d}[0]{\; {\rm d}}
\newcommand{\td}[0]{{\rm d}}
\newcommand{\1}[0]{\mathds 1}
\newcommand{\Z}[0]{\bb Z}

\newcommand{\R}[0]{\bb R}
\newcommand{\C}[0]{\bb C}

\renewcommand{\H}[0]{\bb H}
\renewcommand{\P}[0]{\bb P}

\newcommand{\Eg}[0]{\bb E_{\Omega_g}}

\newcommand{\cF}[0]{\mathcal F}
\newcommand{\cA}[0]{\mathcal A}

\newcommand{\D}[0]{{\rm D}}
\newcommand{\mysubsec}[1]{\subsection*{\;\; #1}}
\newcommand{\od}[0]{{\rm OD}}

\newcommand{\cS}[0]{\mathcal S}

\newcommand{\cM}[0]{\mathcal M}
\newcommand{\cN}[0]{\mathcal N}
\newcommand{\cMg}[0]{{\mathcal M_g}}
\newcommand{\cMgn}[0]{\mathcal M_{g,n}}
\newcommand{\cT}[0]{\mathcal T}

\newcommand{\cP}[0]{\mathcal P}
\newcommand{\xz}[0]{{(X,z)}}

\newcommand{\cL}[0]{\mathcal L}

\newcommand{\G}[0]{\Gamma}
\newcommand{\g}[0]{\gamma}
\newcommand{\ug}{T}
\newcommand{\simf}[0]{\overset{free}{\sim}}

\newcommand{\supp}[0]{{\rm supp\;}}
\newcommand{\hf}[0]{{\hat f}}
\newcommand{\hh}[0]{\hat h}

\newcommand{\id}[0]{{\rm id}}

\newcommand{\ra}[0]{\rightarrow}

\newcommand{\ba}[0]{\backslash}

\newcommand{\WP}[0]{{\rm WP}}

\newcommand{\psltr}[0]{{\rm PSL}_2(\R)}
\newcommand{\mcg}[0]{{\rm MCG}}
\newcommand{\mcgg}[0]{{\rm MCG}_g}

\newcommand{\Var}[0]{{\rm Var}}

\newcommand{\Varg}[0]{\Var_{\Omega_g}}

\newcommand{\cG}[0]{\mathcal G}

\newcommand{\sns}[0]{{\rm sns}}
\newcommand{\ssep}[0]{{\rm ssep}}
\newcommand{\rest}[0]{{\rm rest}}
\newcommand{\ns}[0]{{\rm nsim}}

\newcommand{\ir}[0]{{\rm InjRad}}

\newcommand{\ach}[0]{{\rm arccosh}}
\newcommand{\ash}[0]{{\rm arcsinh}}

\newcommand{\csch}[0]{{\rm csch}}
\newcommand{\eps}[0]{\varepsilon}
\newcommand{\imii}[0]{\int_{-\infty}^\infty}

\newcommand{\izi}[0]{\int_0^\infty}
\newcommand{\E}[0]{\mathbb E}
\newcommand{\prt}[1]{\left( #1 \right)}
\newcommand{\sqb}[1]{\left[ #1 \right]}

\numberwithin{equation}{section} 
\newtheorem{theorem}{Theorem}[section]
\newtheorem{lemma}[theorem]{Lemma} 
\newtheorem*{lemma*}{Lemma}

\newtheorem{proposition}[theorem]{Proposition}
\newtheorem{corollary}[theorem]{Corollary}
\newtheorem{claim}{Claim}
\newtheorem*{claim*}{Claim}
\theoremstyle{definition}
\newtheorem{numnotation}{Notation}

\numberwithin{numremark}{section}
\theoremstyle{remark}
\newtheorem*{remark}{Remark}

\theoremstyle{definition}
\newtheorem{definition}[theorem]{Definition}
\theoremstyle{plain}
\numberwithin{theorem}{section}

\newcommand{\osc}[0]{{\rm osc}}

\newcommand{\vol}[0]{{\rm vol}}
\newcommand{\area}[0]{{\rm area}}

\newenvironment{proofblack}[1][Proof]{%
  \begin{proof}[#1]%
}{%
  \end{proof}%
}

\renewcommand{\paragraph}[1]{%
  \;\newline\textbf{#1}%
}
\renewcommand{\phi}{\varphi}
\title[Local Weyl law and length-minimising loops]{Local Weyl law and length-minimising loops\\ on hyperbolic surfaces}
\author{Daniel Meriaz}
\address[Daniel Meriaz]{School of Mathematics, University of Bristol, Bristol, BS8 1UG, U.K.}
\email{daniel.meriaz@bristol.ac.uk}

\begin{document}
    \begin{abstract}
        We study the variance of a local Weyl law over a fixed smooth energy window, when averaged over large Weil--Petersson hyperbolic surfaces. 
        Our results are consistent with the predictions of Berry's random wave model.
        Our approach allows to explicitly integrate certain test functions which depend on lengths of based geodesic loops, and relate them to the associated lengths of the closed geodesics in their free-homotopy class. 
        We thus utilise Mirzakhani’s work, with exact stationary phase arguments, to identify correct main and error terms, making explicit the asymptotic behaviour of the variance of the local Weyl law. Furthermore, we introduce the geometric notion of length-minimising geodesic loops and sequences, based at a point. We prove a complete characterisation of the topology of these, namely that they are simple. This forms a key ingredient in our study, and yields a new streamlined argument to bound the contributions of remainder terms which depend on lengths of pairs of different short primitive geodesic loops. To illustrate the generality of our results, we further introduce a family of ``exploring" loops based at a point, which might be of independent interest.
    \end{abstract}
\maketitle
\tableofcontents
\section{Introduction}\label{sec: introduction}
Let $X$ be a compact, oriented, and connected hyperbolic surface without boundary. Let $\{\phi_j \}_{j\geq 0}$ be an orthonormal eigenbasis of Laplacian eigenfunctions, with associated eigenvalues $0=\lambda_0 < \lambda_1 \leq \lambda_2 \leq \cdots \to \infty$. Furthermore, let $r_j :=\sqrt{\lambda_j - \frac 14}\in \R_{\geq 0} \cup i(0, \frac 12]$.

We study the \emph{local Weyl law} of eigenfunctions, over a fixed energy window, in the regime of large-volume Weil--Petersson random hyperbolic surfaces. 

To fix notation, let $f$ be an even test function  with compactly supported Fourier transform $\hf\in C_c^\infty(\R)$,
and pick two parameters $L\geq1,\tau>0$. Define $h(r):= f(L(r-\tau))+f(L(r+\tau))$.
We analyse the following function, defined at points $z\in X$,
\begin{equation*}
    N_{f,L,\tau}\xz :=N\xz := \sum_{j\geq 0} h(r_j) |\phi_j(z)|^2.
\end{equation*}
This is a smoothed version of the probability densities of eigenfunctions in a frequency window of width $1/L$ about the frequency $\tau$.

We study the expectation and variance of $N\xz$ over the following space of \emph{hyperbolic surfaces with a marked point},
\begin{equation*}
    \Omega_g:= \{\xz:\text{$X$ is a genus $g$ hyperbolic surface, }z\in X \}/ \text{marked isometries},
\end{equation*}
where $X$ is random with respect to the normalised Weil--Petersson measure, and $z$ is uniformly distributed over $X$. See Section \ref{subsec: the probability space} for precise definitions.

The goal of this paper is to prove the following asymptotic on the \emph{variance} of $N\xz$.
\begin{theorem}\label{thm: main theorem on the variance}
    For fixed $\tau>0, f$, and $L\geq 1$, as $g\to\infty$,
    \begin{align*}
         \Varg \prt{N\xz} &= \frac{1}{4\pi(g-1)}\frac{\tau \tanh \pi \tau}{\pi L} \|f\|_{L^2(\R)}^2 + O_{f,\tau}\prt{\frac{1}{L^2g}} 
        +O_{f,L,\tau}\prt{\frac{1}{g^2}}.
    \end{align*}
    Moreover,
    \begin{equation*}
        \Eg \sqb{N\xz} = \frac{1}{4\pi}\imii h(r) r\tanh\pi r \d r + O_{f,L,\tau}\prt{\frac 1g}.
    \end{equation*}
\end{theorem}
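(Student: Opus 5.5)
We will use the pre-trace formula on $X=\Gamma\backslash\H$. Since $h$ is even with $\hh$ compactly supported, we can write $N\xz=\sum_{\gamma\in\Gamma} k(d(\tilde z,\gamma\tilde z))$, where $k$ is the point-pair invariant whose Selberg transform is $h$. The identity term gives $k(0)=\frac{1}{4\pi}\imii h(r)r\tanh\pi r\d r$. The remaining terms are indexed by based geodesic loops at $z$: each non-identity $\gamma$ gives a loop of length $\ell=d(\tilde z,\gamma\tilde z)$. The support of $\hh$ is contained in $[-c/L\cdot,\ldots]$, more precisely $\hat h$ is supported in $[-LR,LR]$ when $\operatorname{supp}\hf\subset[-R,R]$. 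The Abel transform then shows that $k(\ell)$ vanishes for $\ell> LR$. For $\ell$ in the support we have $k(\ell)=O(\ell e^{-\ell/2})$ up to oscillation. Its stationary-phase expansion is
\[
k(\ell)\approx \frac{1}{L}\,\mathrm{Re}\,\hf(\ell/L)\,e^{i\tau\ell}\,\frac{\text{(explicit factor in }\tau)}{\sqrt{\sinh\ell}}.
\]
We expect this exact form to be needed for the $\tanh\pi\tau$ constant.

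\textbf{Step 1 (expectation).} We have $\Eg[N]=k(0)+\Eg\big[\sum_{\gamma\neq 1}k(\ell_\gamma(z))\big]$. We group loops by the free-homotopy class of the underlying closed geodesic of length $\ell(\beta)$ and integrate over $z\in X$. For a primitive class, unfolding to the cylinder $\langle\beta\rangle\backslash\H$ turns $\int_X\sum_{\gamma\sim\beta}k(d(z,\gamma z))\d z$ into an explicit one-dimensional integral $F(\ell(\beta))$, namely the geometric side of the Selberg trace formula, $\frac{\ell(\beta)}{\sinh(\ell/2)}\hh(\ell)$ up to constants. Averaging over the moduli space with Mirzakhani's integration formula then gives $\frac{1}{\vol(X)}\int F(\ell)V_{g-1,2}(0,0,\ell)\ell\,d\ell/V_g$. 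By the volume asymptotics this integral is $O(1/g)$ for fixed $L$, because $\hh$ is compactly supported and $\vol(X)=4\pi(g-1)$. Non-simple and non-primitive classes contribute $O_{L}(1/g)$ by the standard counting bounds.

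\textbf{Step 2 (variance).} We expand $N^2=\sum_{\gamma_1,\gamma_2}k(\ell_{\gamma_1})k(\ell_{\gamma_2})$ and subtract $(\Eg N)^2$. The identity–identity term cancels. The cross terms with one identity factor cancel against the expectation up to $O(1/g^2)$. The principal contribution comes from the diagonal pairs $\gamma_2=\gamma_1^{\pm1}$, i.e.\ $2\sum_{\gamma\ne1}k(\ell_\gamma)^2$. We unfold this as in Step 1 to get $\frac{1}{\vol X}\sum_\beta G(\ell(\beta))$ with $G(\ell)\approx\int k^2$ over the cylinder. We then apply Mirzakhani's integration formula together with $V_{g-1,2}(0,0,\ell)/V_g\to \frac{4\sinh^2(\ell/2)}{\ell^2}$ to leading order. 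This reduces the sum to $\frac{1}{4\pi(g-1)}\int_0^\infty 2k(\ell)^2\,2\pi\sinh\ell\,d\ell$ plus error. The integral equals $\frac{1}{\pi}\int_0^\infty h(r)^2 r\tanh\pi r\,dr$ by Plancherel for the spherical transform, minus the missing small-$\ell$ region, which is handled by exact stationary phase. Since $h(r)^2=f(L(r-\tau))^2+f(L(r+\tau))^2$ up to $O(L^{-\infty})$, the Plancherel integral yields $\frac{\tau\tanh\pi\tau}{\pi L}\|f\|_2^2+O(L^{-2})$ after Taylor expanding $r\tanh\pi r$ at $\tau$. The correction from the difference between the loop measure and the full hyperbolic measure only affects $\ell\lesssim1$ and contributes $O(1/(L^2 g))$.

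\textbf{Step 3 (off-diagonal pairs).} This is the main obstacle: pairs $\gamma_1\neq\gamma_2^{\pm1}$ of distinct primitive loops at $z$, both of length $\le LR$. Pairs whose loops generate a subsurface of Euler characteristic $-1$ (a pair of pants or a one-holed torus containing $z$) must be shown to contribute $O_L(1/g^2)$, after subtracting the product of expectations. Here we would use length-minimising loops. For a based pair, we replace it by a canonical length-minimising sequence at $z$, which we show is simple. Its filling is then an embedded pair of pants or a one-holed torus whose boundary length is at most $2(\ell_1+\ell_2)$. Mirzakhani's formula applied to the separating or nonseparating pants then gives a bound of the form $\frac{1}{g^2}\int e^{O(L)}$, which is acceptable for fixed $L$. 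Pairs that fill a larger subsurface are handled by the $O(1/g^2)$ bounds for $\chi\le-2$ subsurfaces, combined with counting at most $e^{O(L)}$ loops per surface. We would first try to prove simplicity of length-minimising loops by a curve-shortening argument: at any self-intersection, a cut-and-paste strictly shortens the loop while preserving the based class.
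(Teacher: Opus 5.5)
Your route to the main term differs from the paper's and it works: instead of the elliptic-integral evaluation and stationary phase of Section~\ref{subsec: asymptotics of main integral}, you compare the Mirzakhani-weighted cylinder integral with $\int_{\H}k^2$ and apply spherical Plancherel. Your ``correction'' becomes precise through an exact identity. It comes from Lemma~\ref{lem: general space average} by substituting $\rho=w_\ell(x)$ and integrating out $\ell$: for radial $F$,
\begin{equation*}
\izi\Bigl(2\ell\izi F(w_\ell(x))\d x\Bigr)\frac{4\sinh^2(\ell/2)}{\ell}\d\ell=2\pi\izi F(\rho)\sinh\rho\d\rho-4\izi F(\rho)\sinh\rho\,\arcsin\Bigl(\frac{1}{\cosh(\rho/2)}\Bigr)\d\rho.
\end{equation*}
Take $F=k^2$ and use $|k(\rho)|\ll_{f,\tau}L^{-1}(1+\rho)e^{-\rho/2}$, which follows from $|P_{-1/2+ir}(\cosh\rho)|\le P_{-1/2}(\cosh\rho)$ for the Legendre functions and $\imii|h(r)r|\td r\ll L^{-1}$. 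Then the last term is $O(L^{-2})$, and the leading term is $\frac{1}{4\pi(g-1)}\frac{1}{2\pi}\imii h(r)^2r\tanh\pi r\d r$. This is exactly the random-wave value of Appendix~\ref{app: berry agrees}, so your route needs no stationary phase and explains the agreement with Berry's model conceptually.

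The first gap is in the bookkeeping around this. Your split of $N_\osc^2$ into pairs with $\g_2=\g_1^{\pm1}$ and pairs with distinct primitive roots omits the pairs $(\g^n,\g^m)$ with $|n|\neq|m|$. These are not $O(g^{-2})$. After unfolding they are sums over all short simple closed geodesics, exactly like the diagonal, so they are of order $1/g$ and cannot be sent to Step~3. They, together with the $n\geq2$ part of $2\sum_{\g\neq1}k(\ell_\g)^2$, must be shown to be $O_{f,\tau}(L^{-2})$ before dividing by $4\pi(g-1)$, with a constant independent of $L$. ``Standard counting bounds'' only give $O_L(1/g)$, which is useless against a main term of size $1/(Lg)$. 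What you need is the pointwise bound with its factor $L^{-1}$ (your stated $k(\ell)=O(\ell e^{-\ell/2})$ lacks it), and then a check that $\sum_{(n,m)\neq(1,1)}\izi\izi|k(w_{n\ell}(x))k(w_{m\ell}(x))|\sinh^2(\ell/2)\d x\td\ell\ll_{f,\tau}L^{-2}$, using $\sinh(w_{n\ell}(x)/2)=\sinh(n\ell/2)\sqrt{1+x^2}$. In the paper these terms sit inside $H=\sum_{n,m}H_{n,m}$ and are bounded via Lemma~\ref{lem: sinh squared Anm integral n+m geq 3}.

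In Step~3 the missing idea is a pointwise bound on $\od\xz$. The number of geodesic loops of length at most $L$ at $z$ is controlled only through $\ir\xz$ (Lemma~\ref{lem: nmaximum lattice points}). It is unbounded as $z$ approaches a short closed geodesic, so ``at most $e^{O(L)}$ loops'' is not available. The paper observes that an off-diagonal term requires two distinct primitive loops of length $\le L$ at $z$. The loop collar lemma (Lemma~\ref{lem: l_ loop collar lemma}, $\sinh\frac{\ell_1}{2}\sinh\frac{\ell_2}{2}\geq1$) then forces $\ir\xz\geq\frac12a(L)$. Hence $\od\xz\ll_L1$, and $\od$ vanishes off $X_L$.

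It then suffices that $\E_g^\WP[\vol X_L]=O_L(1/g)$. This follows by placing each $z\in X_L$ in a collar of area at most $2e^{L/2}$ around the closed geodesic of its systole (Lemma~\ref{lem: l_ injrad bound on area around systole}). One charges $z$ to the short embedded pair of pants or one-holed torus filled by the systole and a second shortest primitive loop, and uses that the expected number of these is $O_L(1/g)$. Your separate treatment of pairs filling $\chi\le-2$ subsurfaces is then unnecessary. Subtracting the product of expectations plays no role either, since $(\Eg[N_\osc])^2=O(g^{-2})$.

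Finally, the topological input in your sketch is not right as stated.
\begin{itemize}
\item A geodesic loop is already the unique shortest loop in its based class, so no cut-and-paste can shorten it while preserving the class. The actual argument (Theorem~\ref{thm: top of len min}) writes $\delta$ in $\pi_1\xz$ as a product of two classes whose geodesic lengths are strictly less than $\ell(\delta)$. These lie in $G$ by minimality, forcing $\delta\in G$.
\item Simplicity alone does not give $\chi=-1$. You also need that the second loop meets the systole only at $z$ (Theorem~\ref{thm: len min seq simplicity}). Otherwise $\g_1\cup\g_2$ is a graph of Euler characteristic $\le-2$, and its filling need not be a pair of pants or a one-holed torus.
\end{itemize}
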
 
\mysubsec{Gaussian eigenfunction behaviour}
It is expected, in the high energy limit of \emph{generic} chaotic systems, that \emph{all statistics of spatial fluctuations of eigenfunctions} are predicted by \emph{Berry's random wave model} (RWM) \cite{MVBerry_1977}. This RWM is generated by a superposition of Gaussian waves of fixed frequency and random phases.
A certain corollary of this model is that high-energy eigenfunctions should behave like suitably normalised i.i.d.\ Gaussians, independent of their energies (c.f.\ also the \emph{Gaussian distribution conjecture} \cite{HejhalRacknertopographyofmaasswaveofrms} \cite{humphriesEquidistributionShrinkingSets2017}).

We prove in Appendix \ref{app: berry agrees} (c.f.\ Theorem \ref{thm: random wave}) that Theorem \ref{thm: main theorem on the variance} agrees with this prediction.
Explicitly, we model the eigenfunctions $\{\phi_j \}_{j=0}^\infty$ by a sequence $\{\xi_j^X\}_{j=0}^\infty$ of i.i.d.\ real centred Gaussians, of variance $(4\pi(g-1))^{-1}$ on $X$. Then,
\begin{equation*}
    \Varg \prt{\sum_{j\geq 0} h(r_j) |\xi_j^X(z)|^2} = \Varg \prt{N\xz}+ E,
\end{equation*}
where $E=E(f,L,\tau,g)$ is bounded by the same error terms appearing in Theorem \ref{thm: main theorem on the variance}.
\mysubsec{Strategy}
To prove the asymptotics of Theorem \ref{thm: main theorem on the variance},  we use Selberg's pre-trace formula to express our local Weyl law as
\begin{equation*}
    N\xz = \bar N + N_\osc \xz
\end{equation*}
where $\bar N$ is the constant term
\begin{equation*}
    \bar N := \imii h(r) r\tanh\pi r\d r,
\end{equation*} 
and $N_\osc\xz$ is a certain (oscillatory) sum over lengths of \emph{geodesic loops} based at $z$, which depends on the choice of $f,L,\tau$. 

For our purposes of computing the variance, it suffices to analyse the second moment
\begin{equation*}
    \Eg[N_\osc \xz^2].
\end{equation*}
The random variable $N_\osc\xz^2$ is a sum over \emph{pairs of geodesic loops} based at $z$.

By our particular choice of random variable, the double sum restricts to a sum over pairs of geodesic loops of bounded length, depending on $L$ and $f$. We call such geodesic loops \emph{short}.

We split the double sum $N_\osc\xz^2$ into two: pairs of geodesic loops that are powers of the same primitive loop, and pairs that are not. These we call the \emph{diagonal} and \emph{off-diagonal} terms, respectively. See Figure \ref{fig: diagonal and off-diagonal example} for an example of this split.
\begin{figure}[h!]
    \centering
    \begin{subfigure}[t]{0.45\textwidth}
        \centering
        \includegraphics[width=0.95\linewidth]{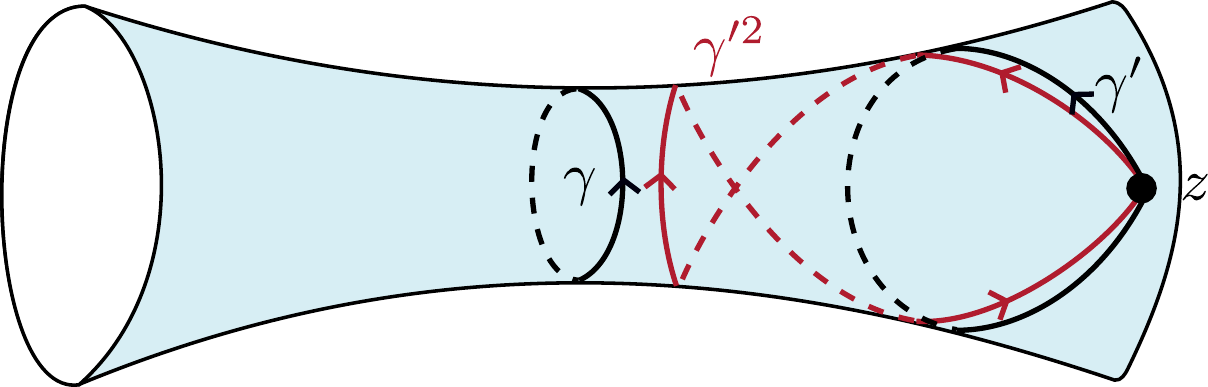}
        \caption{Pair of geodesic loops $\g', \g'^2$ in the diagonal terms.}
        \label{fig: cylinderdiagonalpowers}
    \end{subfigure}
    ~
    \begin{subfigure}[t]{0.4\textwidth}
        \centering
        \includegraphics[width=0.95\linewidth]{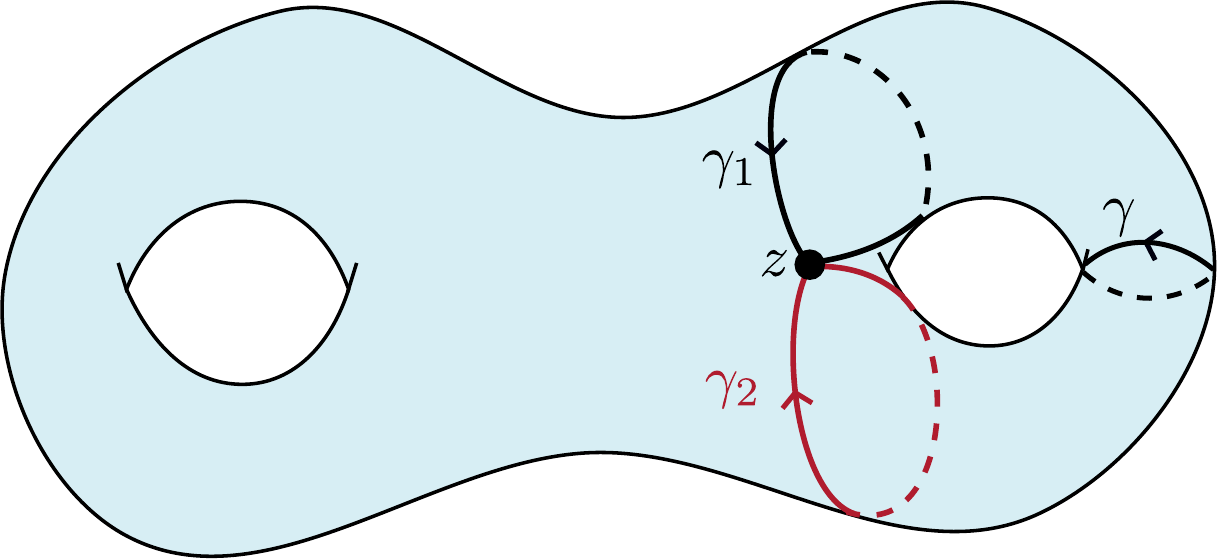}
        \caption{Pair of (primitive) geodesic loops $\g_1,\g_2$ in the off-diagonal terms.}
        \label{fig: offdiagexample}
    \end{subfigure}
    \caption{Example of diagonal and off-diagonal pairs of geodesic loops based at a point $z$. They are freely homotopic to (powers of) the closed geodesic $\g$.}
    \label{fig: diagonal and off-diagonal example}
\end{figure}
\mysubsec{Diagonal terms}
The main term of Theorem \ref{thm: main theorem on the variance} arises from the diagonal terms, whereas the off-diagonal terms yield contributions to the $O(g^{-2})$ remainder terms.

In the diagonal terms, on any fixed surface $X$, one first groups the pairs by the associated \emph{closed geodesic} the primitive geodesic loop is freely-homotopic to. Then, in a computation not too dissimilar to computations in the proof of Selberg's trace formula, the space average of the diagonal terms reduces to a sum over the length of this {closed geodesic}, forgetting about the geodesic loop (c.f.\ Lemma \ref{lem: general space average} and Proposition \ref{prop: diagonal on deterministic}).

Thanks to Mirzakhani's pioneering work (c.f.\ Section \ref{subsec: mirz int formula}), we can explicitly compute this expectation. After a stationary phase argument, the main term in Theorem \ref{thm: main theorem on the variance} arises out of the contributions of the simple nonseparating closed geodesics, $\g_1 = \g_2^{\pm1}$. 

This is the subject of Section \ref{sec: diagterms}.
\mysubsec{Length-minimising loops}
We introduce the new notion of \emph{length-minimising loops} based at a point $z\in X$, c.f.\ Definition \ref{def: length min loop}. Informally, a length-minimising loop  with respect to some subgroup $G\lneq \pi_1\xz$, is a geodesic loop whose homotopy class is not in $G$, with minimal length among such geodesic loops. For example, a systole $\g_1$ is length-minimising with respect to $G_1 := \{\id\}$; and the \emph{second shortest primitive loop}, which we here define, is length-minimising with respect to $G_2:= \{\g_1^m \}_{m\in \Z}$. 

As a corollary of more general results on \emph{length-minimising sequences} (c.f.\ Corollary \ref{cor: length-minimising topology}), we prove the following important result on the shape of the second shortest primitive loop, c.f.\ Figure \ref{fig: second prim is simple}.
\begin{theorem}\label{thm: second shortest simp and triv}
    Let $\g_2$ be a second shortest primitive loop, based at $z$, with respect to a systole $\g_1$. Then $\g_2$ is simple and intersects $\g_1$ trivially at $z$.
\end{theorem}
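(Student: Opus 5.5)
We argue by contradiction using curve-shortening surgeries at self-intersection points, in the spirit of the classical proof that a systole is simple. Throughout, geodesic loops are based at $z$ and identified with elements of $\pi_1(X,z)$. By definition, $\g_2$ minimises length among geodesic loops whose class does not lie in $G_2=\{\g_1^m\}_{m\in\Z}$. Note first that $\g_1$ is itself simple: if it had a self-intersection, cutting there would produce a strictly shorter nontrivial loop at $z$ (after tightening), contradicting minimality with respect to $G_1=\{\id\}$.

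\textbf{Step 1: simplicity of $\g_2$.} Suppose $\g_2$ has a transverse self-intersection at a point $p$. We distinguish two cases.

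\emph{Case $p\neq z$.} Then $\g_2$ decomposes as a concatenation $a\cdot b\cdot c$, where $b$ is a loop based at $p$ and $a\cdot c$ is a path through $z$ returning via $p$. This yields two loops based at $z$:
\begin{itemize}
\item $\beta_1=a\cdot c$, of length $\ell(a)+\ell(c)<\ell(\g_2)$;
\item $\beta_2=a\cdot \bar b\cdot c$, the loop with $b$ reversed, of length $\ell(\g_2)$.
\end{itemize}
Each $\beta_i$ has a corner, so its geodesic representative is strictly shorter than $\ell(\g_2)$ unless it is nullhomotopic. By minimality of $\g_2$, both $[\beta_1]$ and $[\beta_2]$ must therefore lie in $G_2$. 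Their quotient $[\beta_1]^{-1}[\g_2]$ is conjugate in $\pi_1$ to $[b]$, which is a nontrivial element. Hence $\g_2$ is expressed as a product of powers of $\g_1$ and conjugates of $b^{\pm1}$. Using that $\pi_1$ is torsion-free, that centralisers are cyclic, and that $\g_2$ is primitive and not in $G_2$, we derive that $b$ must be conjugate into $\langle\g_1\rangle$, and then that $\g_2\in G_2$, a contradiction.

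\emph{Case $p=z$.} Then $\g_2=\alpha\cdot\beta$, where $\alpha,\beta$ are loops at $z$ each strictly shorter than $\g_2$. Then $\alpha,\beta\in G_2$, so $\g_2\in G_2$, a contradiction.

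The case $p\neq z$ involves a delicate use of the group structure. It is the content of the general length-minimising sequence results, and I would cite Corollary \ref{cor: length-minimising topology} for it.

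\textbf{Step 2: trivial intersection at $z$.} Suppose $\g_1$ and $\g_2$ cross transversally at $z$, in the sense that their tangent directions alternate around $z$. Smoothing the corner at $z$ shows that the concatenations $\g_1\g_2$ and $\g_1^{-1}\g_2$ cannot both have geodesic representatives as long as $\g_2$. I would use hyperbolic trigonometry for the loops at the basepoint: if $\theta$ is the angle between $\g_1$ and $\g_2$ at $z$, then one of the two products has a corner angle at most $\pi/2$ in the appropriate sense, and hence its geodesic representative has length strictly less than $\ell(\g_2)$. Neither product lies in $G_2$, because $\g_2\notin G_2$. This contradicts minimality, unless the geometry forces the product to be shorter than $\g_1$, which contradicts the systole property.

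\textbf{Main obstacle.} I expect the main obstacle to be Step 2. The hard part is choosing the right convex-corner combination so that the length decrease is strict in every configuration. Interior intersections of $\g_1$ and $\g_2$ away from $z$ are handled analogously by the surgery of Step 1.
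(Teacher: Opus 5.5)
\textbf{The real gap is Step 2.} In this paper, two loops sharing an endpoint \emph{intersect trivially} if they meet only at times $0$ and $1$. So what has to be shown is $\g_1\cap\g_2=\{z\}$; nothing is claimed about the cyclic order of the four germs at $z$.

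The non-crossing at $z$ you try to prove is in fact false in general. When the regular neighbourhood of $\g_1\cup\g_2$ is a once-holed torus, the germs at $z$ necessarily alternate. The paper explicitly keeps this case: see Section~\ref{sec: offdiagterms} and Figure~\ref{fig: regneigh2boundaries}. It occurs, e.g., at $z=a\cap b$ in a suitable once-holed torus where $a$ is very short and the closed geodesic $b$ crosses $a$ orthogonally, so that $\g_1=a$ and $\g_2=b$. So no corner-angle argument can succeed. Indeed, minimality only gives $\ell_{(X,z)}(\g_1^{k}\g_2)\ge\ell(\g_2)$, never the reverse.

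Meanwhile you dismiss the actual content, ruling out intersection points $p\neq z$, as ``analogous to Step 1''. It is not analogous. Step 1 uses only the minimality of $\g_2$ relative to $G_2=\{\g_1^m\}_{m\in\Z}$, whereas here you must also use the minimality of $\g_1$ relative to $\{\id\}$, together with a case split. The paper's argument (Theorem~\ref{thm: len min seq simplicity}) runs as follows.
\begin{itemize}
\item The meeting at $p$ is transverse, since an overlap along an interval would force $\g_2=\g_1^{\pm1}$.
\item Split $\g_2=\beta_1\beta_2$ and $\g_1=\alpha_1\alpha_2$ at $p$, oriented so that $\ell(\beta_1)\le\ell(\g_2)/2$ and $\ell(\alpha_1)\le\ell(\g_1)/2$.
\item Then $\beta_1\bar\alpha_1$ has length at most $\ell(\g_2)$ and a corner at $p$, so it lies in $G_2$.
\item If $\ell(\alpha_1\beta_2)\le\ell(\g_2)$, the same holds for $\alpha_1\beta_2$, and $\g_2\sim(\beta_1\bar\alpha_1)(\alpha_1\beta_2)\in G_2$.
\item Otherwise $\ell(\alpha_1)>\ell(\beta_1)$. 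Then $\beta_1\bar\alpha_1$ and $\beta_1\alpha_2$ are both strictly shorter than the systole, hence contractible. So $\g_1\sim\overline{\beta_1\bar\alpha_1}\cdot\beta_1\alpha_2$ would be contractible, a contradiction.
\end{itemize}

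\textbf{Step 1, case $p\neq z$.} Citing Corollary~\ref{cor: length-minimising topology} here is circular. The statement is exactly that corollary for the sequence $\{\id\}\lneq\{\g_1^m\}_{m\in\Z}\lneq\pi_1\xz$, which is how the paper proves it, and the corollary would also give Step 2 outright.

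The torsion-free/cyclic-centraliser detour does not work as written either. Knowing that $b$ is conjugate into $\langle\g_1\rangle$ in the abstract says nothing about the specific element $[\bar c\, b\, c]\in\pi_1\xz$.

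However, your own surgery closes the argument at once. The loop $\beta_2=a\bar b c$ has a genuine corner at $p$, because self-intersections away from $z$ are transverse (Lemma~\ref{lem: on geodesic loops}(ii)). Hence $\beta_1,\beta_2\in G_2$. Moreover $[\g_2]=[\beta_1][\beta_2]^{-1}[\beta_1]$, since $ac\cdot\bar c b\bar a\cdot ac\sim abc$, so $\g_2\in G_2$.

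This is a valid and slightly slicker variant of the paper's surgery in Theorem~\ref{thm: top of len min}. The paper instead conjugates $b$ by the shorter arm, using $\eta_1 b\bar\eta_1$ or $\bar\eta_2 b\eta_2$, and so needs a case split on $\ell(\eta_1)$ versus $\ell(\eta_2)$.
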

\begin{figure}[h!]
    \centering
        \includegraphics[width=0.25\textwidth]{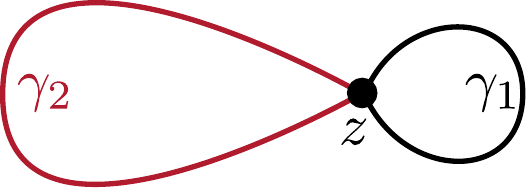}
    \caption{The second shortest primitive loop $\g_2$ is simple and intersects the systole $\g_1$ trivially.}
    \label{fig: second prim is simple}
\end{figure}

As an illustration of the generality of the definitions of length-minimising loops and sequences, we introduce the family of \emph{exploring loops} (c.f.\ Definition \ref{def: exploring loops}), which might be of independent interest. These are a length-minimising sequence $\g_1,\ldots,\g_n$ of geodesic loops, where $n\leq 2g-1$. They  iteratively ``explore" the topology of the surface, trivially intersect, and each of these are simple. By ``explore" we mean that the surface weakly filled by $\g_1,\ldots,\g_j$ has Euler characteristic $1-j$, for all $j\in \{1,\ldots, n\}$ (c.f.\ Theorem \ref{thm: euler char of exploring}). Thus, we exhibit a family of iteratively defined length-minimising geodesic loops, whose topology we know, which weakly fill the surface.

This is the subject of Section \ref{sec: len minimising loops}.
\mysubsec{Off-diagonal terms}
We combine our previous results to bound the size of the off-diagonal terms, which, as noted above, correspond to lengths of pairs of geodesic loops based at $z$, which are powers of two different primitive loops.

For the off-diagonal terms to be nonzero at a point $z$, there need to be \emph{two different} short primitive geodesic loops based at $z$, by our choice of test function. By length-minimality, this means that \emph{both} the systole and the second shortest primitive loop need be short.

We can then dispense of the expectation of the off-diagonal terms, and send it to the remainder terms. Thanks to Theorem \ref{thm: second shortest simp and triv}, and work of Monk and Thomas \cite{monkthomastanglefree}, by relating the short pair of loops $\g_1,\g_2$ to the short embedded geodesic once-holed torus or pair of pants that they fill, we prove that the off-diagonal terms yield an error term of $O_{f,L,\tau}\prt{g^{-2}}$ to the variance (c.f.\ Proposition \ref{prop: off diagonal term claim}).

We would like to emphasize a certain aspect of this proof. To relate these points to the short geodesic pair of pants or once-holed tori they fill, a dependency on the injectivity radius at the point $z$ arises. We use the fact that one has two short loops to our advantage. Using a certain \emph{loop collar lemma}, Lemma \ref{lem: l_ loop collar lemma}, the injectivity radius is uniformly lower bounded by a function of $L$.

Thus, points with two different primitive short loops, cannot have small injectivity radius. 

This is the subject of Section \ref{sec: offdiagterms}.
\mysubsec{Related works}
The results of this paper sit firmly in the field of \emph{quantum chaos} (c.f.\ the survey paper \cite{marklof2006arithmetic}).  Hyperbolic surfaces are a good toy model in this field, owing to the chaotic properties of their geodesic flow, and the exact pre-trace formula of Selberg's \cite{Selberg_1956} (c.f.\ Lemma \ref{lem: selberg trace formula}), relating spectral data of a surface with its geometry.

Recently, through the work of Mirzakhani (c.f.\ the survey of her work \cite{wright2020tour}), it has become possible to probabilistically study (with respect to Weil--Petersson probability) spectral properties of hyperbolic surfaces, c.f.\ the survey article \cite{monknaudsurvey}.

Rudnick in \cite{rudnick2023goestatisticsmodulispace} proved that the variance of eigenvalues in a fixed-energy window, when averaged over moduli space, behave as postulated by the Bohigas--Giannoni--Schmit conjecture \cite{BGSconjecture}. These results were extended by Rudnick--Wigman \cite{RudnickWigmanCLT} \cite{rudnick2025almost} to prove central limit and almost sure convergence results. Their results may be viewed in terms of the space average $\int_X N\xz \d z$ of our local Weyl law $N\xz$, whereas we directly analyse eigenfunctions pointwise.

A central conjecture in the field is the \emph{quantum unique ergodicity} conjecture of Rudnick and Sarnak \cite{rudnicksarnakQUE}, which implies complete delocalisation of high-energy eigenfunctions. Lindenstrauss \cite{lindenstrauss2006invariant} and Soundararajan \cite{soundararajan2010quantum} proved this conjecture in certain arithmetic cases of hyperbolic surfaces. 

In the spirit of large-scale eigenfunction delocalisation results, Brooks and Lindenstrauss \cite{brooks_non-localization_2013} proved that individual eigenvectors on large regular graphs cannot concentrate a fixed proportion of their $L^2$-mass on small vertex sets. Anantharaman and Le Masson \cite{anantharaman2015quantum} strengthen this by proving quantum ergodicity for most eigenvectors on large regular graphs with few short cycles. 

In the setting of hyperbolic surfaces, Le Masson and Sahlsten \cite{lemassonsahlsten2017} establish an analogous fixed-window quantum ergodicity theorem under Benjamini–Schramm convergence. In Gilmore--Le Masson--Sahlsten--Thomas \cite{GLMST2021} various $L^p$-norm bounds on eigenfunctions of large surfaces are proven. For their results, they too required an understanding of the distribution of points with several short primitive geodesic loops, though the tools and results in this paper are different from theirs, as we take averages over the points of the surfaces themselves.
Thomas \cite{thomas_delocalisation_2022} obtained quantitative non-concentration bounds for individual eigenfunctions, similarly to Brooks--Lindenstrauss, which imply results in high probability for large-genus Weil--Petersson random surfaces. More recently, Hippi \cite{hippi2026quantummixingbenjaminischrammconvergence} extended quantum ergodicity to quantum mixing, analysing off-diagonal elements. Hippi--Lequen--Mikkelsen--Sahlsten--Uebersch\"ar \cite{hippi2026quantummixingschrodingereigenfunctions} generalise this result from Laplace eigenfunctions to eigenfunctions of Schrödinger operators with potentials.

For local Weyl laws in a fixed deterministic setting,  Karnaukh \cite{karnaukh1996thesis} and Jakobson--Polterovich \cite{jakobsonEstimatesSpectralFunction2007} prove lower bounds on the remainder terms of the local Weyl law on negatively curved manifolds.  

We finish by noting the works of Abert--Bergeron--Le Masson \cite{abert:hal-03918418} and Garc\'ia-Ruiz \cite{garciaruiz2023relationdifferentformulationsberrys} which state rigorous conjectures for high-energy eigenfunctions to have the Berry random wave property.
\mysubsec{Plan of the paper}
Besides the introduction, the article consists of five other sections:
\begin{itemize}
    \item[(i)] Section \ref{sec: prelims and defs}: provides some preliminaries on the analysis of hyperbolic surfaces and their moduli spaces.
    \item[(ii)] Section \ref{sec: eigenfunction stat}: introduces the local Weyl law and the probability space we will work with, and proceed with the first steps towards the proof of Theorem \ref{thm: main theorem on the variance}.
    \item[(iii)] Section \ref{sec: diagterms}: calculates the asymptotics of the diagonal terms.
    \item[(iv)] Section \ref{sec: len minimising loops}: defines length-minimising loops and sequences, and prove their simplicity. Furthermore, we provide the definition and prove properties of exploring loops.
    \item[(v)] Section \ref{sec: offdiagterms}: estimates the contributions of the off-diagonal terms, using Theorem \ref{thm: second shortest simp and triv} and the results in Section \ref{sec: len minimising loops}.
\end{itemize}

\mysubsec{Acknowledgements}
I would like to express my gratitude to my advisors, Jens Marklof and Laura Monk, for suggesting the problem, and for their invaluable discussions and support throughout. I am grateful to Yuxin He for an enlightening observation made on an earlier version of a result in this paper, and Ze\'ev Rudnick for helpful comments on a previous draft of the paper. I would also like to thank Yulin Gong, Aapo Karvonen, and Luke Turvey for many interesting conversations.

This research was supported by EPSRC DTP S101056-120.

\mysubsec{Data access statement} No new data were generated nor analysed during this study.

\section{Preliminaries and definitions}\label{sec: prelims and defs}
We will use the following asymptotic notation throughout: we say $f(x) = O_A(g(x))$ or $f(x) \ll_A g(x)$ if there exists a constant $C = C(A) > 0$ such that $|f(x)| \leq C|g(x)|$ for all $x>1$.
\subsection{Topological definitions}\label{subsec: topological definitions}
    We define the topological notions that we will use throughout. These are mainly concerned with the notion of loops and fundamental group.
    
        Let $X$ be a connected topological space. An \textit{arc} is a continuous function $c:[0,1]\ra X$ with \emph{endpoints} $e_0(c):=c(0), e_1(c):=c(1)$. We say an arc is a \emph{loop} if $e_0(c)=e_1(c)$ and call $e_0(c)$ its \emph{basepoint}.

        A \textit{self-intersection} is defined as two times $0 \leq t_1 < t_2 < 1$ such that $c(t_1) =c(t_2)$, with \textit{intersection point} $c(t_1)=c(t_2)$. An arc is \textit{simple} if it has no self-intersections. Two arcs sharing at least one endpoint \emph{intersect trivially} if they intersect each other only at times $0$ and $1$.

        We say two arcs $c_0,c_1$ are \textit{compatible} if $e_1(c_0) = e_0(c_1)$. Loops with the same basepoint are compatible. 
        
        For two compatible arcs $c_0,c_1$ we define their \textit{concatenation} as
        \begin{equation*}
            (c_0\dot c_1)(t):= \begin{cases}
                c_0(2t) & t\leq \frac 12 \\
                c_1(2t-1) & t>\frac 12.
            \end{cases}
        \end{equation*}
        We furthermore let $\bar c(t):= c(1-t)$ denote the \textit{reversal} of $c$.
        
        Two loops $c_0,c_1:[0,1]\ra X$ based at $z$ are \textit{homotopic}, which we denote by $c_0\sim c_1$, if there exists a \textit{homotopy} between them, i.e.\ a continuous function $h:[0,1]\times [0,1] \ra X$ such that $h(0,\cdot) = c_0, h(1,\cdot)=c_1$ and $\forall t\in [0,1], h(t,\cdot)$ is a loop based at $z$.
        The \textit{fundamental group based at} $z, \pi_1\xz$, is the set of homotopy classes of loops based at $z$, where the group multiplication is defined as concatenation of homotopy class representatives, and inverses are reversals. As $X$ is connected, up to isomorphism, the group $\pi_1\xz$ is independent of the basepoint $z$, and we denote the \emph{fundamental group} by $\pi_1(X)$.

        We say a loop is \textit{contractible} if it is homotopic to the loop $c(t) \equiv c(0)$, i.e.\ it is in the homotopy class of the identity.
        
        We say a loop is \textit{primitive} if it is not homotopic to a proper power of another loop. Our definition precludes the identity loop from being primitive.

        Two loops $c_0,c_1:[0,1]\ra X$ are \emph{freely homotopic}, denoted by $c_1 \simf c_2$, if there exists a free-homotopy between them, i.e.\ a continuous function $h:[0,1]\times [0,1] \ra X$ such that $h(0,\cdot) = c_0, h(1,\cdot)=c_1$ and $h(\cdot, 0)=h(\cdot,1)$.

        Finally, if $(X,d_X)$ is a metric space, then we denote the \emph{length} of an arc $c$ by $\ell(c)$, i.e.\ 
        \begin{equation*}
            \ell(c) := \sup_{0=t_0<t_1<\cdots< t_n =1} \left\{\sum_{i=1}^n d_X(c(t_{i-1}),c(t_{i})) \right\}.
        \end{equation*}
\subsection{Hyperbolic space and hyperbolic surfaces}
We will work with the Poincar\'e upper half-plane 
\begin{equation*}
    \H = \{z= x+iy: x,y\in \R, y>0 \},
\end{equation*}
as a model for hyperbolic space.
It is endowed with the standard hyperbolic metric
\begin{equation*}
    \d s^2=\frac{\d x^2+{\rm d}y^2}{y^2}.
\end{equation*}
The distance between $z_1,z_2\in \H$ with respect to the above metric is denoted by $d(z_1,z_2)$. 

We have the hyperbolic volume form
\begin{equation*}
    \td z=\frac{\td x\td y}{y^2}.
\end{equation*}

The Laplace--Beltrami operator (or simply the Laplacian) is
\begin{equation*}
    \Delta = -y^2\prt{\p{2}{x}+\p{2}{y}}.
\end{equation*}

Throughout the article we will denote by $X$ a closed, connected, and oriented hyperbolic surface.

Such a surface has $\H$ as its universal cover, and it can be identified with $\G \ba \H$ where $\G \leq \psltr \cong \text{Isom}^+(\H)$ is a cocompact Fuchsian group such that $\G \cong \pi_1(X)$. The metric on $X$ is naturally inherited from $\H$ by the quotient of this action.

\subsection{Geodesic loops and closed geodesics}
From the hyperbolicity of $X$, every homotopy class of loops in $\pi_1\xz$ has a unique length-minimising representative, which we will call a \emph{geodesic loop} (c.f.\ \cite[Theorem 1.5.3]{buser1992geometry}).

Thus $\pi_1 \xz$ is in one-to-one correspondence with the set of geodesic loops based at $z$. We will generally identify between the geodesic loop representative and their homotopy class interchangeably, provided the basepoint $z$ is implicitly understood.

If $c$ is a loop based at $z$, then we denote by $\ell_\xz(c)$ the length of the unique geodesic loop in the homotopy class of $c$. If $c$ is contractible then we set $\ell_\xz (\g) := 0$. 

Equivalently, if $\g\in \pi_1\xz$ is a geodesic loop, then $\ell(\g) = d(\tilde z,\tilde\g.\tilde z)$ where $\tilde z,\tilde \g$ are lifts to the universal cover $\H$.

Every loop on $X$ is \emph{freely homotopic} to a unique length-minimising free-homotopy representative which we call a \emph{closed geodesic} (c.f.\ \cite[Theorem 1.6.6]{buser1992geometry}). These are in one-to-one correspondence with the \emph{conjugacy classes} of $\pi_1(X)$. If $\g$ is a closed geodesic, we let $\ell(\g)$ be its length. 
Equivalently $\ell(\g) = d(\tilde z,\tilde \g.\tilde z)$, where $z\in \g$, and $\tilde z,\tilde \g\in \psltr$ are lifts to the universal cover $\H$.

We let $\cP\subseteq\pi_1(X)$ be the set of primitive elements of the fundamental group, and we denote the conjugacy classes of primitive elements by $[ \cP ]$. On hyperbolic surfaces, every element $\id\neq\gamma\in \pi_1(X)$ has a unique primitive element $\gamma_0\in \cP$ and unique $n\geq 1$ such that $\gamma = \gamma_0^n$.

We conclude by recalling a few standard facts about geodesic loops.
\begin{lemma}\label{lem: on geodesic loops}\;
        \begin{itemize}
            \item[{(\rm i)}]  A geodesic loop is smooth outside of its endpoints.
            \item[{\rm (ii)}]  A geodesic loop's self-intersections are finite and transverse, except perhaps at its basepoint.
            \item[{\rm (iii)}]  If $c$ is a loop that is not a geodesic loop, then $\ell_\xz(c) < \ell(c)$.
        \end{itemize}
    \end{lemma}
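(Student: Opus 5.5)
The plan is to work in the universal cover, writing $X=\G\ba\H$ and fixing a lift $\tilde z$ of $z$. A loop $c$ based at $z$ lifts to an arc $\tilde c$ from $\tilde z$ to $\tilde\g.\tilde z$, where $\tilde\g\in\G$ corresponds to the homotopy class of $c$. Homotopies of loops based at $z$ lift to homotopies of arcs with both endpoints fixed. Conversely, every arc in $\H$ from $\tilde z$ to $\tilde\g.\tilde z$ projects to a loop in the class $\g$, because $\H$ is simply connected. So $\ell_\xz(c)$ is the infimum of the lengths of arcs in $\H$ joining $\tilde z$ to $\tilde\g.\tilde z$. This infimum equals $d(\tilde z,\tilde\g.\tilde z)$, and it is attained uniquely by the hyperbolic geodesic segment $[\tilde z,\tilde\g.\tilde z]$, since $\H$ is uniquely geodesic. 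The geodesic loop is the projection of this segment.

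For (i), the geodesic segment in $\H$ is a smooth, unit-speed, constant-curvature-zero curve. The covering map $\H\to X$ is a local isometry, so the projection is smooth on the open interval $(0,1)$. The only possible non-smooth point is the basepoint, where the incoming and outgoing tangent directions at $z$ need not agree. For (iii), if $c$ is not the geodesic loop, then its lift $\tilde c$ is an arc from $\tilde z$ to $\tilde\g.\tilde z$ different from the geodesic segment. By the strict uniqueness of length minimisers in $\H$, we get $\ell(c)=\ell(\tilde c)>d(\tilde z,\tilde\g.\tilde z)=\ell_\xz(c)$. Here I use that lengths of arcs are preserved under lifting by the local isometry.

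For (ii), I will argue transversality first. Suppose $c(t_1)=c(t_2)$ with $0<t_1<t_2<1$, both times in the smooth part of the loop. Suppose also that the tangent vectors there were parallel (equal or opposite). By uniqueness of geodesics with given initial point and direction, the two germs would then coincide as unparametrised geodesics. Lifting, this would mean some deck transformation $\tilde\eta\neq \id$ maps a subsegment of $[\tilde z,\tilde\g.\tilde z]$ onto another subsegment of it. Such a $\tilde\eta$ preserves the complete geodesic line through $\tilde z$ and $\tilde\g.\tilde z$. Since $\tilde \eta$ is hyperbolic or the identity (cocompact $\G$, no elliptics), it acts on that line as a translation, or possibly as a reversal of orientation along it. A reversal is impossible because orientation-preserving isometries fixing a line setwise while flipping it are elliptic of order two. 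So $\tilde\eta$ translates along the line, and the loop is then a portion of a closed geodesic traversed consistently. In that case the self-intersections in the interior would either be absent or else occur only when the segment wraps around past the basepoint. I expect this bookkeeping to be the main obstacle: one must show that tangential self-overlap in the interior forces the loop to pass through its basepoint in the overlap, which is exactly the excluded case "except perhaps at its basepoint". The overlap intervals then all lie on a closed geodesic through $z$, so the only non-transverse coincidences occur at $z$.

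Finiteness then follows from compactness. The set of pairs $(t_1,t_2)$ with $c(t_1)=c(t_2)$, restricted away from the basepoint, is closed. Transverse intersections are isolated, since two smooth curves crossing transversally meet in a neighbourhood only at that point. A closed set of isolated points in a compact region is finite. Near the endpoints $t\in\{0,1\}$, the curve is a geodesic germ leaving or entering $z$, and I would handle it separately. Either two such germs are transverse, or they coincide along a segment, which is again the basepoint case. The remaining task is to check that only finitely many interior intersections can accumulate to the basepoint. This follows because a short geodesic germ at $z$ is embedded, as the injectivity radius of $X$ at $z$ is positive.
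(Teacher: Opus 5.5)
The paper gives no proof of this lemma (it is recalled as standard), and your universal-cover arguments for (i) and (iii) are fine. The gap is in (ii), at exactly the step you call bookkeeping. It cannot be carried out, because (ii) is false as literally stated.

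Your lifting argument is correct up to a point. A tangential coincidence $c(t_1)=c(t_2)$ at interior times forces the line $A\supset[\tilde z,\tilde\gamma.\tilde z]$ to be the axis of a hyperbolic element of $\Gamma$ that translates along $A$ and never flips it. So $d(\tilde c(t_1),\tilde c(t_2))$ is a positive multiple of the translation length $\ell_0$ of the primitive $\eta_0$ with axis $A$. But then every pair $(t,\,t+\ell_0/\ell(c))$ with $0\le t<1-\ell_0/\ell(c)$ is also a tangential self-intersection. The intersection points $c(t)$ sweep out an arc of the closed geodesic, not just $z$. The clause ``except perhaps at its basepoint'' constrains the intersection point $c(t_1)=c(t_2)$. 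That $c|_{[t_1,t_2]}$ passes through $z$ somewhere does not put the intersection there.

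Concretely, let $z$ lie on a simple closed geodesic $\beta$ of length $\ell$, with $\beta(0)=z$ and arc-length parametrisation, and let $\gamma$ be its class at $z$. The geodesic loop of $\gamma^2$ is $c(t)=\beta(2\ell t)$, and $c(t)=c(t+\tfrac12)$ for all $t\in[0,\tfrac12)$. This gives uncountably many tangential self-intersections at every point of $\beta$.

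Primitivity does not rescue the statement either. Let $z$ be the crossing point of a figure-eight geodesic $\beta$, with $\beta(a)=z$ for some $0<a<\ell$ and lobes $x=\beta|_{[0,a]}$, $y=\beta|_{[a,\ell]}$. Then $xyx$ is primitive. Its geodesic loop is the projection of the segment $\tilde\beta|_{[0,\ell+a]}$, which runs over the lobe $x$ twice in the same direction.

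What your argument does prove is a corrected version of (ii). First, a geodesic loop never has a self-intersection with $\dot c(t_2)=-\dot c(t_1)$, by your elliptic-element argument. Second, suppose $[\tilde z,\tilde\gamma.\tilde z]$ does not lie on the axis of a primitive $\eta_0\in\Gamma$ with $\ell(c)>\ell(\eta_0)$. Then all self-intersections are transverse and finitely many, including those at $z$ by your germ argument. Your compactness argument then goes through, once you also use the positive global injectivity radius of $X$ to keep pairs $(t_1,t_2)$ away from the diagonal.

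You should prove (ii) under that hypothesis rather than try to push the overlap into the basepoint. The paper's only real use of (ii) is in the proof of Theorem~\ref{thm: top of len min}, to see that $\eta_1\cdot b\cdot\bar\eta_1$ is not a geodesic loop. That overlap is orientation-reversing, so the part your argument does establish is enough there.
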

\subsection{Integral transforms}
Throughout we will make use of the following transforms, for functions $f$ in an appropriate function space.

We use the non-normalised \emph{Fourier transform} 
\begin{equation}\label{def: fourier transform}
        \mathcal Ff(x)=\hat f(x):= \frac{1}{2\pi}\int_{-\infty}^\infty f(y)e^{-ixy}\d y,
\end{equation}
whose \emph{inverse Fourier transform} is
\begin{equation*}
    \cF^{-1}f(x)=\int_{-\infty}^\infty f(y)e^{ixy}\d y.
\end{equation*}

Now let $f\in C_c^\infty(\R)$. We define the \emph{Abel transform}
\begin{equation}\label{def: A and Qbel transform}
        \cA f(u)=\sqrt 2 \int_{u}^\infty \frac{f(\rho)\sinh \rho}{\sqrt{\cosh\rho-\cosh u}}\d\rho,
\end{equation}
with inverse given by the formula 
\begin{equation*}
    \cA^{-1} f(\rho)=-\frac{1}{\sqrt 2 \pi}\int_\rho^\infty \frac{f'(u)}{\sqrt{\cosh u- \cosh\rho}}\d u.
\end{equation*}

Finally, we define the \emph{Selberg transform} as 
    \begin{equation}\label{def: selberg transform}
        \cS:= \cF^{-1}\circ \cA.
    \end{equation}
\subsection{Selberg's (pre-)trace formula}
The following fundamental theorem allows us to relate between the spectral data and the geometry of a surface. 
\begin{theorem}[Selberg's pre-trace formula \cite{Selberg_1956}]\label{thm: pre-trace formula}
    For $h$ even with $\hh\in C_c^\infty(\R)$, define $k := \mathcal S^{-1}(h)$. Furthermore, let $\{\phi_j \}_{j\geq 0}$ be an $L^2$-orthonormal eigenbasis of the Laplacian on the surface $X$ with eigenvalues $0=\lambda_0<\lambda_1\leq\lambda_2\leq \cdots$. Let $\lambda_j = \frac 1 4 +r_j^2$ with $r_j\geq0$ or $r_j\in i(0, \frac 12]$. Then for all $z\in X$,
    \begin{equation}\label{eq:pre-trace formula}
        \sum_{j=0}^\infty h(r_j)|\phi_j(z)|^2 = \frac{1}{4\pi}\int_{-\infty}^\infty h(r)\tanh(\pi r)r \d r+\sum_{\gamma \in \pi_1\xz -\{\id\}}k(\ell_\xz(\g)).
    \end{equation}
\end{theorem}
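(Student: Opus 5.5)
The plan is to prove the pre-trace formula by the standard construction of the automorphic kernel, and then to identify the identity-term integral via the Selberg/Harish-Chandra transform. Write $X=\G\ba\H$ and let $\tilde z\in\H$ be a lift of $z$. Set $k=\cS^{-1}(h)=\cA^{-1}\cF(h)$; concretely $g:=\hh$ is compactly supported and smooth, and $k(\rho)=\cA^{-1}g(\rho)$. Since $\hh\in C_c^\infty(\R)$ is even, the inverse Abel formula shows that $k$ is smooth and supported in $[0,R]$, where $\supp \hh\subseteq[-R,R]$. We regard $k$ as a point-pair invariant on $\H$, namely $k(d(w,w'))$, and form the automorphic kernel
\begin{equation*}
K(w,w'):=\sum_{\g\in\G}k(d(w,\g w')).
\end{equation*}
Because $k$ has compact support and $\G$ acts properly discontinuously, this is a finite sum locally. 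Hence $K$ is smooth and $\G$-biinvariant, and it defines a smooth kernel on $X\times X$.

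The first key step is that every eigenfunction of $\Delta$ is an eigenfunction of the integral operator with radial kernel $k(d(w,w'))$ on $\H$. Writing $\lambda=\frac14+r^2$, the eigenvalue is the Selberg transform $h(r)$. We prove this in the standard way. Average $\phi$ over rotations about a point $w$ to obtain a radial eigenfunction, which must equal $\phi(w)\,\omega_r(d(w,\cdot))$, where $\omega_r$ is the spherical function (by uniqueness of smooth radial solutions of the ODE). Then $\int_\H k(d(w,w'))\phi(w')\,\td w'=\phi(w)\int_\H k(d(i,w'))\omega_r(d(i,w'))\,\td w'$. Evaluating this last integral in horocyclic coordinates, with the choice $\omega_r$ represented via $y^{1/2+ir}$ averaged, gives precisely $\cF^{-1}\cA k(r)=h(r)$ under the normalisations \eqref{def: fourier transform} and \eqref{def: A and Qbel transform}. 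Checking these constants, including the $\sqrt2$ and $2\pi$, is where I expect the main bookkeeping to sit. Unfolding, the operator with kernel $K$ on $L^2(X)$ then satisfies $\int_X K(z,w)\phi_j(w)\,\td w=h(r_j)\phi_j(z)$.

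Next we expand $K$ spectrally. Since $K(z,\cdot)$ is smooth on $X$, it has the $L^2$ expansion $K(z,w)=\sum_j h(r_j)\phi_j(z)\overline{\phi_j(w)}$. Because $h$ decays rapidly (as $\hh\in C_c^\infty$) and Weyl's law with the local sup-norm bound $\sum_{r_j\le T}|\phi_j(z)|^2\ll T^2$ holds, this series converges absolutely and uniformly. We may therefore set $w=z$, which gives $\sum_j h(r_j)|\phi_j(z)|^2=\sum_{\g\in\G}k(d(\tilde z,\g\tilde z))$. The terms with $\g\neq\id$ are exactly $k(\ell_\xz(\g))$, using the identification of $\pi_1\xz$ with $\G$ and $\ell_\xz(\g)=d(\tilde z,\g\tilde z)$.

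The remaining step, and the main obstacle, is the identity term $k(0)$. It must be shown to equal $\frac1{4\pi}\imii h(r)\,r\tanh(\pi r)\,\td r$. This is the Plancherel inversion for the spherical transform. One route is to compute $k(0)=\cA^{-1}g(0)=-\frac1{\sqrt2\pi}\int_0^\infty g'(u)(\cosh u-1)^{-1/2}\,\td u$, substitute $g(u)=\frac1{2\pi}\imii h(r)e^{-iru}\,\td r$, and exchange the order of integration. This reduces the claim to the classical integral $\int_0^\infty \frac{\sin ru}{\sinh(u/2)}\,\td u=\pi\tanh(\pi r)$, since $\sqrt{2(\cosh u-1)}=2\sinh(u/2)$. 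The exchange of integrals is justified because $h$ is Schwartz and the singularity at $u=0$ is integrable once the factor $\sin(ru)$ is present. Tracking the constants then yields the factor $\frac1{4\pi}$, which completes \eqref{eq:pre-trace formula}.
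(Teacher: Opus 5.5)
The paper does not prove this statement: it quotes Selberg's pre-trace formula as a black box, and the only related input it supplies is the smoothness and support of $k$ in Lemma~\ref{lem: props of k}, itself cited to Marklof. So there is no internal argument to compare with.

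Your proposal is the standard derivation, and it is correct. Its steps are the automorphic kernel, Selberg's eigenfunction principle via radial averaging and the test function $y^{1/2+ir}$, the spectral expansion evaluated on the diagonal, and spherical Plancherel for the identity term.

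The constants match the paper's normalisations:
\begin{itemize}
\item Integrating $k(d(i,x+iy))$ over $x$ gives exactly $\sqrt{y}\,(\mathcal{A}k)(\log y)$. The eigenvalue is therefore $\int_{\mathbb{R}}\hat h(a)e^{ira}\,da=h(r)$.
\item For the identity term,
\[
k(0)=\frac{1}{4\pi^2}\int_{\mathbb{R}} r\,h(r)\int_0^\infty\frac{\sin ru}{\sinh(u/2)}\,du\,dr=\frac{1}{4\pi}\int_{\mathbb{R}} h(r)\,r\tanh(\pi r)\,dr.
\]
Fubini is justified by $|\sin ru|/\sinh(u/2)\le 2|r|$ near $u=0$ and exponential decay for large $u$.
\end{itemize}

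Two things you assert rather than prove deserve a line each if you write this up:
\begin{itemize}
\item The uniform bound $\sum_{r_j\le T}|\phi_j(z)|^2\ll T^2$ should be obtained without appeal to the pre-trace formula itself. For instance, apply Bessel's inequality to $K(z,\cdot)$ for the kernel $\mathbf{1}_{[0,\delta]}(\rho)$, with $\delta\asymp 1/T$ smaller than the injectivity radius of $X$. This gives $\delta^4\sum_{r_j\le c/\delta}|\phi_j(z)|^2\ll\delta^2$ uniformly in $z$.
\item Smoothness of $K$ across the diagonal needs $k$ to be smooth as a function of $\cosh\rho$. This holds because $\hat h$ is even, but for your argument continuity of $K$ already suffices.
\end{itemize}
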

Integrating the pre-trace formula on both sides over $X$ yields \emph{Selberg's trace formula}.
\begin{lemma}[Selberg's trace formula \cite{Selberg_1956}]\label{lem: selberg trace formula}
    Let $h$ be even with $\hh\in C_c^\infty(\R)$. Then with the same notation as above,
    \begin{equation*}
        \sum_{j=0}^\infty h(r_j) = \frac{\area(X)}{4\pi}\imii h(r)\tanh (\pi r)r\d r + \sum_{\g\in [\cP]}\sum_{n=1}^\infty \frac{\ell(\g)}{2\sinh \prt{\frac{n\ell(\g)}{2}}}\hh(n\ell(\g)).
    \end{equation*}
    In particular, the following integral identity holds 
    \begin{equation}\label{eq: integral identity stf}
        \int_X\sum_{\gamma \in \pi_1\xz -\{\id\}}k(\ell_\xz(\g))\d z = \sum_{\g\in [\cP]}\sum_{n=1}^\infty \frac{\ell(\g)}{2\sinh \prt{\frac{n\ell(\g)}{2}}}\hh(n\ell(\g)).
    \end{equation}
\end{lemma}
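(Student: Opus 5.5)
The approach is to integrate the pre-trace formula \eqref{eq:pre-trace formula} over $X$ and then unfold the sum over nontrivial $\g$ into conjugacy classes.

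\textbf{Spectral side.} By orthonormality, $\int_X|\phi_j(z)|^2\d z=1$. Since $\hh\in C_c^\infty(\R)$, $h$ is Schwartz, and by Weyl's law the sum $\sum_j h(r_j)|\phi_j|^2$ converges absolutely and uniformly in $z$. So we may integrate term by term, and the left-hand side becomes $\sum_j h(r_j)$. The constant term integrates to $\frac{\area(X)}{4\pi}\imii h(r)\tanh(\pi r)r\d r$. It therefore suffices to prove the identity \eqref{eq: integral identity stf}; the trace formula follows from it.

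\textbf{Lifting to $\H$.} Write $X=\G\ba\H$ with a fundamental domain $F$. Identify $\pi_1\xz$ with $\G$ via lifts, so that $\ell_\xz(\g)=d(\tilde z,\g\tilde z)$ for $\tilde z\in F$. The sum $\sum_{\g\neq\id}k(d(\tilde z,\g\tilde z))$ is a finite sum, since $k=\cS^{-1}h$ is compactly supported by the Paley--Wiener property of $\cA^{-1}$ applied to $\hh$. Each nontrivial $\g$ factors uniquely as $\g=\sigma\g_0^n\sigma^{-1}$ with $\g_0\in\cP$ primitive, $n\geq1$, and $\sigma$ ranging over $\G/\langle\g_0\rangle$. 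Here $\g_0$ runs over a set of representatives of $[\cP]$, with the standard care about $\g_0$ versus $\g_0^{-1}$, which lie in different conjugacy classes. Substituting $\tilde z\mapsto\sigma\tilde z$ and summing over $\sigma$ unfolds $\int_F$ into $\int_{\langle\g_0\rangle\ba\H}k(d(w,\g_0^n w))\d w$, a cylinder integral. Interchanging sum and integral is justified by finiteness: $k$ has compact support and $F$ is compact.

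\textbf{Cylinder integral (main computation).} Conjugate so that $\g_0 w=e^{\ell}w$ with $\ell=\ell(\g_0)$. A fundamental domain for $\langle\g_0\rangle$ is $\{1\le|w|<e^\ell\}$. Use the coordinates $w=e^{s}(\sinh u+ i)/\cosh u$, i.e.\ $s$ along the axis and signed distance $u$ from it. In these coordinates $\d w=\cosh u\,\d s\,\d u$ and $\cosh d(w,\g_0^nw)=1+2\sinh^2(n\ell/2)\cosh^2u$. Integrating $s$ over $[0,\ell)$ gives
\begin{equation*}
\ell\int_{-\infty}^\infty k\bigl(\ach(1+2\sinh^2(n\ell/2)\cosh^2u)\bigr)\cosh u\d u .
\end{equation*}
Now substitute $v=\sinh(n\ell/2)\sinh u$, and express $k$ as a function $K$ of $\cosh\rho$. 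The integral then becomes $\frac{\ell}{\sinh(n\ell/2)}\int_{-\infty}^\infty K(\cosh n\ell+2v^2)\d v$. Next substitute $\cosh\rho=\cosh n\ell+2v^2$ and compare with the definition \eqref{def: A and Qbel transform}: the integral equals $\frac{1}{2}\cA k(n\ell)$, up to the normalising constants. Finally, $\cA k=\cF(\cS k)=\cF h=\hh$ up to the $2\pi$ conventions of \eqref{def: fourier transform}. This yields $\frac{\ell(\g)}{2\sinh(n\ell(\g)/2)}\hh(n\ell(\g))$.

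The main obstacle is this last step: tracking the constants $\sqrt2$, $2\pi$ and $\frac12$ through the Abel substitution so that the coefficient comes out exactly as $\frac{\ell}{2\sinh(n\ell/2)}$. I would fix the normalisation by checking it against the conventions for $\cS$, so that $\cS^{-1}$ as used in Theorem \ref{thm: pre-trace formula} is the consistent one. Summing over $n$ and $[\cP]$ then gives \eqref{eq: integral identity stf}, and hence the trace formula.
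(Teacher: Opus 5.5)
Your proposal is correct, and it takes a different route from the paper, which does not actually prove this lemma. The paper cites Selberg for the trace formula and remarks that integrating the pre-trace formula over $X$ yields it. It then reads off the identity \eqref{eq: integral identity stf} as an ``in particular'', by comparing the integrated pre-trace formula with the cited trace formula.

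You reverse the logic. You prove \eqref{eq: integral identity stf} directly, by unfolding over conjugacy classes and evaluating the hyperbolic orbital integral, and only then deduce the trace formula. This is the classical proof, and it is self-contained modulo the pre-trace formula. Your cylinder computation is essentially the paper's own Lemma~\ref{lem: general space average} with $t=1$ and $q_1=k$, followed by the Abel substitution.

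Your hedges about normalising constants are unnecessary, because with the paper's conventions everything comes out exactly. Indeed,
\[
2\int_0^\infty k\bigl(\ach(\cosh n\ell+2v^2)\bigr)\,{\rm d}v=\frac{1}{\sqrt2}\int_{n\ell}^\infty\frac{k(\rho)\sinh\rho}{\sqrt{\cosh\rho-\cosh n\ell}}\,{\rm d}\rho=\frac12\,\cA k(n\ell).
\]
Moreover $\cA k=\cF h=\hh$ holds with no $2\pi$ discrepancy, since the paper's $\cF^{-1}$ is the exact inverse of $\cF$. So the coefficient $\ell/(2\sinh(n\ell/2))$ appears without adjustment.

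One minor point: term-by-term integration of the spectral side does not need uniform convergence in $z$. Weyl's law alone would not give that; you would need a local Weyl law or a sup-norm bound. Tonelli suffices instead, using $\int_X|\phi_j|^2=1$ together with $\sum_j|h(r_j)|<\infty$, which follows from Weyl's law.
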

\subsection{Moduli spaces and the Weil--Petersson volume}\label{subsec: mod spaces and WP vol}
Throughout this section, we fix $g,n\geq 0$ with $2-2g-n<0$, and $\Sigma_{g,n}$ to be a smooth, compact, connected, and oriented surface of genus $g$ and $n$ boundary components, labelled by $\{1,\ldots, n\}$.

    For a length vector $\vec x =(x_1,\ldots, x_n)\in \R_{\geq 0}^n$ we define the \textit{moduli space} $\cM_{g,n}(\vec x)$ to be the set of bordered hyperbolic surfaces of genus $g$ with $n$ geodesic boundary components $b_1,\ldots, b_n$ such that for all $i, \ell(b_i)=x_i$, quotiented by the isometries preserving $b_i$ set-wise for all $i$. A boundary geodesic $b_i$ with $\ell(b_i) = 0$ is a cusp, by convention. 
    
Define the \emph{mapping class group} 
\begin{equation*}
    \mcg_{g,n}:=\left\{\begin{aligned}
        &\text{orientation-preserving homeomorphisms } h:\Sigma_{g,n}\to \Sigma_{g,n}  \\
        & \text{s.t.\ } h(b_i) = b_i, \forall i
    \end{aligned}
     \right\}
     /\text{isotopy}
\end{equation*}
and the space of marked hyperbolic surfaces, the \textit{Teichm\"uller space}
\begin{equation*}
    \cT_{g,n}(\vec x):= \left\{(Y,\phi):\;\begin{aligned}
        &Y\text{ bordered hyperbolic surface of signature } (g,n) \\
        &\phi:\Sigma_{g,n}\ra Y \text{ homeomorphism} \\
        & \forall i, \phi(b_i) \text{ is a boundary geodesic of length } x_i
    \end{aligned}\right\}/ \sim
\end{equation*}
where 
\begin{equation*}
    (Y,\phi)\sim (Z,\psi) \iff \exists \text{ isometry } i:Y\ra Z \text{ s.t. } \psi^{-1} \circ i \circ \phi \text{ is isotopic to } \id_{\Sigma_{g,n}}.
\end{equation*}
One has the action $\mcg\prt{\Sigma_{g,n}}\acts \cT_{g,n}(\vec x)$ by precomposition of the marking, i.e.\
\begin{equation*}
    \forall h\in \mcg_{g,n}, \quad h.(Y,\phi):= (Y,\phi\circ h^{-1}),
\end{equation*}
and we have the following quotient identification
\begin{equation*}
    \cM_{g,n}(\vec x)=\cT_{g,n}(\vec x)/\mcg_{g,n}.
\end{equation*}

The spaces $\cM_{g,n} := \cM_{g,n}(\vec 0)$ and $ \cT_{g,n}:= \cT_{g,n}(\vec 0)$ correspond to surfaces of genus $g$ with $n$ cusps, and similarly $\cMg, \cT_g, \mcgg$ to those without boundaries.

By work of Weil \cite{Weil} and Wolpert \cite{wolpertElementaryFormulaFenchelNielsen1981}, the space $\cT_{g,n}(\vec x)$ has a natural symplectic form, the \emph{Weil--Petersson form}, which is invariant under $\mcg_{g,n}$, and hence induces a volume form on $\cM_{g,n}(\vec x)$ which we denote by$\d\vol_{g,n,\vec x}^\WP$. One has that the total volume
\begin{equation*}
    \vol_{g,n,\vec x}^\WP \prt{\cMgn (\vec x)} <\infty.
\end{equation*}
We let $V_{g,n}(\vec x) := \vol_{g,n,\vec x}^\WP \prt{\cMgn (\vec x)}$ (and $V_{g,n}, V_g$ following the above conventions) with the exception of the $g=n=1$ case, where we set $V_{1,1}(x):= \frac 12\vol_{1,1, x}^\WP \prt{\cM_{1,1} ( x)}$. The constant $\frac 12$ arises from the existence of an involution symmetry on any once-holed torus with boundary (c.f.\ \cite[Section 2.8]{wright2020tour}).
\subsection{Mirzakhani's integration formula}\label{subsec: mirz int formula}
We introduce the mechanism by which we can integrate certain functions over moduli spaces.

First, we say a loop $c$ on a surface $\Sigma_{g,n}$ is \emph{essential} if it is not freely homotopic to a point nor to a boundary component (or cusp) of $\Sigma_{g,n}$. 
\begin{definition}[Multicurves]\label{def: multicurves}
A \emph{multicurve} $\vec \g$ is an ordered tuple of disjoint essential simple loops $(\g_1,\ldots, \g_k)$ such that for every index $i\neq j$, $\g_i$ is not freely homotopic to neither $\g_j$ nor $\g_j^{-1}$.
\end{definition}

The mapping class group naturally acts on free homotopy classes of multicurves by postcomposition. We say that two free homotopy classes of multicurves $\vec\g_1, \vec \g_2$ have the same \emph{topological type} if they are contained in the same mapping class group orbit. 

Any free homotopy class of multicurves $\vec \g = (\g_1,\ldots,\g_k)$ has a (unique up to reparametrization) representative which is a multi-geodesic. Thus we can define the \textit{total length} $\ell_{\Sigma_{g,n}}(\vec\g)$ and \textit{length vector} $\vec \ell_{\Sigma_{g,n}}(\vec\g)$ as
\begin{equation*}
    \ell_{\Sigma_{g,n}}(\vec\g):= \sum_{i=1}^k \ell_{\Sigma_{g,n}}(\g_i), \quad\vec\ell_{\Sigma_{g,n}}(\vec\g):= (\ell_{\Sigma_{g,n}}(\g_1),\ldots,\ell_{\Sigma_{g,n}}(\g_k)).
\end{equation*}

\begin{definition}[Geometric functions]\label{def: geometric functions}
    Let $\vec \g$ be a multicurve, and $F:\R_{>0}^k \to \R$ a compactly supported function. A \emph{geometric function} is a function $\cM_g\ra\R$ that can be written as:
    \begin{equation*}
        F^{\vec\g}(X)=\sum_{\vec\alpha \in \mcg. \vec \g} F(\vec\ell_X(\vec\alpha)).
    \end{equation*}
\end{definition}
\begin{remark}
    Note that a fixed term in the sum is only defined in Teichm\"uller space $\cT_g$, but the summation over the orbit $\mcg . \bar\g$ makes it invariant under the mapping class group, and hence is a well-defined function on the moduli space $\cMg$.
\end{remark}
The celebrated \textit{Mirzakhani integration formula} allows to integrate geometric functions with respect to the Weil--Petersson volume form, as an integral over $\R_{\geq 0}^k$.
To write this formula explicitly we need to understand the surface post cutting $\Sigma_g$ by the curves in $\vec\g$. We may, and will, assume that $\vec \g$ consists of its geodesic representatives.

The cut surface $\Sigma_g -\vec\g$ may be written as the disjoint union $\bigsqcup_{i=1}^q \Sigma_{g_i,n_i}$ of its connected components. The $k$ curves of $\vec\g$ form the $2k$ boundary components of $\bigsqcup_{i=1}^q \Sigma_{g_i,n_i}$. If the multicurve $\vec\g$ had lengths $\vec x\in \R_{> 0}^k$ on $X$, then these lengths become the boundary lengths of the (maybe disconnected) surface $\Sigma_g-\bar\g$. Therefore each component has a length vector $\vec x_i \in \R_{> 0}^{n_i}$. We then define
\begin{equation*}
    V_g(\vec \g, \vec x):=\prod_{i=1}^q V_{g_i,n_i}(\vec x_i).
\end{equation*}
\begin{theorem}[\cite{Mirzakhani2006simplegeodesics}]\label{thm: mirzakhani integration formula}
    Let $g\geq3$ and $F^{\vec \g}$ be a geometric function. Then its integral over moduli space with respect to the Weil--Petersson volume equals
    \begin{equation*}
        \int_\cMg F^{\vec \g}(X)\d \vol_g^\WP(X) = \int_0^\infty F(x) V_g(\vec\g,\vec x) x_1\cdots x_k \d x_1 \cdots\td x_k.
    \end{equation*}
\end{theorem}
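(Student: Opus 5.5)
The plan is to prove the formula by \emph{unfolding} the sum over the mapping class group orbit onto an intermediate cover of $\cMg$. I then compute the integral on that cover in Fenchel--Nielsen coordinates adapted to $\vec\g$, using Wolpert's formula for the Weil--Petersson form.

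\textbf{Step 1: unfolding.} Let $\mathrm{Stab}(\vec\g)\leq \mcgg$ be the stabiliser of the (ordered, oriented-up-to-reversal) free homotopy class of $\vec\g$. Form the intermediate cover
\begin{equation*}
    \cM_g^{\vec\g}:=\cT_g/\mathrm{Stab}(\vec\g)\longrightarrow \cMg .
\end{equation*}
Points of $\cM_g^{\vec\g}$ are pairs $(X,\vec\alpha)$ with $\vec\alpha\in\mcgg.\vec\g$ realised on $X$. The function $(X,\vec\alpha)\mapsto F(\vec\ell_X(\vec\alpha))$ is well defined on this cover, and its pushforward (sum over fibres) is exactly $F^{\vec\g}$. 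Since the Weil--Petersson form is $\mcgg$-invariant, we get
\begin{equation*}
    \int_{\cMg}F^{\vec\g}\d\vol^\WP_g=\int_{\cM_g^{\vec\g}}F(\vec\ell_X(\vec\g))\d\vol^\WP .
\end{equation*}
This is a routine fundamental-domain argument. The assumption $g\geq 3$ ensures that a generic surface has no nontrivial automorphisms, so no extra orbifold factor of $2$ appears.

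\textbf{Step 2: coordinates on the cover.} Complete $\vec\g$ to a pants decomposition and take Fenchel--Nielsen coordinates $(\ell_i,\tau_i)$. By Wolpert's theorem the Weil--Petersson form is $\sum_i \td\ell_i\wedge\td\tau_i$, so the volume form is $\prod_i\td\ell_i\,\td\tau_i$. Now describe $\mathrm{Stab}(\vec\g)$. Up to finite-index issues, it is generated by three pieces:
\begin{itemize}
    \item[(a)] the Dehn twists $D_{\g_1},\dots,D_{\g_k}$, which act by $\tau_i\mapsto\tau_i+\ell_i$ for $i\leq k$;
    \item[(b)] the mapping class groups $\mcg_{g_j,n_j}$ of the components of $\Sigma_g-\vec\g$;
    \item[(c)] possible elements reversing orientation of some $\g_i$, or half-twists.
\end{itemize}
A fundamental domain for $\cM_g^{\vec\g}$ is then fibred over $\vec x=(\ell_1,\dots,\ell_k)\in\R_{>0}^k$. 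The twist parameters $\tau_i$ for $i\leq k$ range over $[0,x_i)$. The remaining coordinates range over a fundamental domain for $\prod_j\cM_{g_j,n_j}(\vec x_j)$, whose Weil--Petersson volume is $\prod_j V_{g_j,n_j}(\vec x_j)=V_g(\vec\g,\vec x)$. Since $F(\vec\ell_X(\vec\g))$ depends only on $\vec x$, integrating out the twists gives the factor $x_1\cdots x_k$, and Fubini yields the stated formula.

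\textbf{Main obstacle.} The delicate step is identifying $\mathrm{Stab}(\vec\g)$ exactly, so that the twist ranges and the fibres are correct without missing finite factors. Two sources of such factors need checking.
\begin{itemize}
    \item[(i)] \emph{Once-holed torus components.} When some component is a once-holed torus, its elliptic involution fixes every point of Teichm\"uller space. This contributes a factor $\tfrac12$, which is precisely why the paper adopts the convention $V_{1,1}(x)=\tfrac12\vol^\WP(\cM_{1,1}(x))$.
    \item[(ii)] \emph{Curve symmetries.} Mapping classes that reverse a curve, or permute curves of the tuple, could a priori produce extra factors. The multicurve is an \emph{ordered} tuple of pairwise non-homotopic curves, so permutations are excluded from the orbit. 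Reversals do not change lengths.
\end{itemize}
I would handle this by checking carefully that the map from the fibred domain to $\cM_g^{\vec\g}$ is generically one-to-one. Concretely: the exact sequence
\begin{equation*}
    1\to\Z^k\to\mathrm{Stab}(\vec\g)\to\prod_j\mcg_{g_j,n_j}
\end{equation*}
has image of the expected index, and the hyperelliptic-type involutions are accounted for exactly once.
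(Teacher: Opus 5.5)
The paper does not prove this theorem; it is quoted from Mirzakhani. Your route, unfolding onto $\cT_g/\mathrm{Stab}(\vec\g)$ and then applying Wolpert's formula, is the standard one. However, your handling of orientations is wrong. With the convention you chose, the identity you would actually prove differs from the stated one by a factor of $2$ in the basic case of a nonseparating curve.

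\textbf{The gap.} You take $\mathrm{Stab}(\vec\g)$ to stabilise the curves \emph{up to reversal}. In (ii) you then dismiss reversing mapping classes because they ``do not change lengths''. Lengths are not the issue; the issue is how much of $\cT_g$ these elements quotient out. Take $k=1$, $\g$ nonseparating and $g\geq 3$, and let $\mathrm{Stab}^+(\g)$ be the index-two subgroup preserving the orientation of $\g$. An orientation-preserving $\sigma\in\mathrm{Stab}(\g)\setminus\mathrm{Stab}^+(\g)$ sends the left side of $\g$ to its right side. So it exchanges the two boundary components of $\Sigma_g-\g\cong\Sigma_{g-1,2}$. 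Generic $X$ has no isometries, so $\sigma$ acts freely almost everywhere on $\cT_g/\mathrm{Stab}^+(\g)$. Hence integrals of $F(\ell_X(\g))$ over your cover are half those over $\cT_g/\mathrm{Stab}^+(\g)$. The latter space fibres over $\cM_{g-1,2}(x,x)$ with \emph{labelled} boundaries, with twist circles of length $x$. Your argument would therefore yield
\[
\int_{\cMg}F^{\g}\,\mathrm{d}\vol_g^{\WP}=\tfrac12\int_0^\infty F(x)\,V_{g-1,2}(x,x)\,x\,\mathrm{d}x .
\]
In general you lose a factor $[\mathrm{Stab}(\vec\g):\mathrm{Stab}^+(\vec\g)]$. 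The same problem appears in Step 2. For your unoriented stabiliser there is no map to $\prod_j\mcg_{g_j,n_j}$, since that group fixes each boundary component. So the base of your fibration is a quotient of $\prod_j\cM_{g_j,n_j}(\vec x_j)$ by boundary swaps, not $\prod_j\cM_{g_j,n_j}(\vec x_j)$ itself. The exact sequence you propose to check would fail.

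\textbf{The fix.} Use the paper's convention. Loops are parametrised maps, and the definition of a multicurve distinguishes $\g_j$ from $\g_j^{-1}$. The separating type $i$ is also defined by which side lies on the left. So $\mcg.\vec\g$ is an orbit of ordered tuples of \emph{oriented} classes. Every element of the stabiliser then preserves both sides of each $\g_i$, hence every labelled boundary component of every piece. This gives a genuine surjection $\mathrm{Stab}(\vec\g)\to\prod_j\mcg_{g_j,n_j}$ with kernel the twist group $\mathbb{Z}^k$.

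The one remaining correction is your point (i), whose mechanism should be stated precisely. If $\g_i$ bounds a one-holed torus, the elliptic involution on that side, composed with a half twist in the collar, lies in the stabiliser and squares to $D_{\g_i}$. It acts by $\tau_i\mapsto\tau_i+x_i/2$, so the twist range is $[0,x_i/2)$, not $[0,x_i)$ as you wrote. This halving is exactly absorbed by the convention $V_{1,1}(x)=\tfrac12\vol^{\WP}(\cM_{1,1}(x))$. With these two corrections your Fenchel--Nielsen computation gives the stated formula.
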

In the sequel, we will only require computing integrals of multicurves consisting of a single closed geodesic, on surfaces $X$ of genus $g$. A simple closed geodesic $\g$ is characterised by its topological type, and we say it is:
\begin{itemize}
    \item \emph{simple nonseparating} (or $\sns$ for short) if $X - \g$ is connected;
    \item or \emph{simple separating} (or $\ssep$ for short) \emph{of type} $i\in \{1,\ldots, g-1\}$ if $X - \g$ has two connected components: on the left side of $\g$ a surface of signature $(i, 1)$ and on its right a surface of signature $(g-i,1)$.
\end{itemize}
With these definitions we have that
\begin{equation}\label{eq: volume depending on topological types}
    V_g(\g,x) = \begin{cases}
        V_{g-1,2}(x,x) & \text{ if } \g \; \sns \\
        V_{i,1}(x)V_{g-i,1}(x) & \text{ if } \g \;\ssep \text{ of type }i.
    \end{cases}
\end{equation}
\subsection{Volume estimates}
To estimate expectations over moduli space, understanding the asymptotic behaviour of Weil--Petersson volumes is necessary.
We will use the following identities.

For large $g$, one has the following explicit asymptotics (c.f.\ \cite[Proposition 3.1]{MirzakhaniPetri2019} and \cite[Footnote p.\ 3]{AnantharmanLauraCorrectAsymp})
\begin{equation}\label{eq: wp ratio Vgn with lengths}
        \frac{V_{g,n}(x_1,\ldots,x_n)}{V_{g,n}}= \prod_{i=1}^n \frac{\sinh(x_i/2)}{x_i/2} + O_n\prt{\frac{\sum_{i=1}^n x _i}{g}\exp\prt{\frac{\sum_{i=1}^nx_i}{2}}}.
\end{equation}

The following upper bound is implicit in the work of Mirzakhani \cite{Mirzakhani2013}, and it is explicitly stated and proven in Nie--Wu--Xue \cite[Lemma 22]{NieWuXue},
\begin{equation}\label{eq: trivial wp ratio Vgn with lengths}
    \frac{V_{g,n}(x_1,\ldots, x_n)}{V_{g,n}} \leq \prod_{i=1}^n\frac{\sinh(x_i/2)}{x_i/2}.
\end{equation}

For large $g$, one has the following asymptotic (c.f.\ \cite[Theorem 1.4]{mirzakhani_towards_2015}):
    \begin{equation}\label{eq: volume ratio V g-1 n+2 g n}
        \frac{V_{g-1,n+2}}{V_{g,n}}=1+O_n\prt{\frac{1}{g}}
    \end{equation}

Finally, one has the following asymptotic on the Weil--Petersson volumes of surfaces cut out by simple-separating multicurves (c.f.\ \cite[Lemma 3.2]{MirzakhaniPetri2019}).

Fix $q\geq 2$. Then 
\begin{equation}\label{eq: volume ratio separating multicurves}
        \frac{1}{V_g}\sum \prod_{i=1}^q V_{g_i,b_i} =O\prt{ \frac{1}{g^{q-1}}}
\end{equation}
where the sum runs over all topological types of multicurves $\vec \g$ cutting it into $q$ pieces $\Sigma_{g_i,b_i}$ with $b_i\geq 1$,  and $\sum_{i=1}^q 2-2g_i-b_i=2-2g$.
\section{The eigenfunction statistic}\label{sec: eigenfunction stat}
In this section we introduce our setup, and take the first steps in computing the asymptotics of the variance of our local Weyl law. 
\subsection{Pointwise local Weyl law}
Let $f:\C\ra\C$ be an even function such that $\hf\in C_c^\infty([-1,1])$. Then define for $L, \tau >0$ the function
\begin{equation*}
    h_{L,\tau}(r)=h(r):=f(L(r-\tau))+f(L(r+\tau)),
\end{equation*}
with Fourier transform
\begin{equation}\label{eq: hhat form}
    \hh(u)=\frac{2}{L}\cos(\tau u)\hf\prt{\frac{u}{L}}.
\end{equation}

We study the random variable
\begin{equation}\label{eq: definition of N(X,z)}
    N_{L,\tau}(X,z)=N(X,z):= \sum_{j\geq 0}h(r_j)|\phi_j(z)|^2 .
\end{equation}
Note that this quantity is independent of the choice of eigenbasis, as any change of basis is unitary.

By the pre-trace formula, Theorem \ref{thm: pre-trace formula}, the quantity $N(X,z)$ may be split into 
\begin{equation}\label{eq: splitting of N}
    N(X,z)= \bar N + N_{\osc}(X,z)
\end{equation}
with 
\begin{equation*}
\bar N = \frac{1}{4\pi} \int_{-\infty}^\infty h(r)\tanh(\pi r)r\d r
\end{equation*}
being the constant ``topological term" depending only on the parameters $f, L,\tau$, and not $\xz$, and
\begin{equation}\label{eq: def of Nosc}
    N_{\osc}(X,z):= \sum_{\g\in \pi_1\xz-\{\id\}} k(\ell(\g)).
\end{equation} 
These are the ``oscillatory" (or ``geometric") terms.

The kernel $k$ can be written as $k = \mathcal S^{-1}(h)$, which by  (\ref{def: selberg transform}) equals 
\begin{equation}\label{eq: k but expanded with h'}
    k(x)=-\frac{1}{\sqrt 2 \pi}\int_x^\infty \frac{\hh'(y)}{\sqrt{\cosh y-\cosh x}}\td y.
\end{equation}
\begin{lemma}[Properties of $k$]\label{lem: props of k} 
One has that $k\in C^\infty(\R_{\geq 0})$, and that there exists a constant $C = C(f,L,\tau)$ such that for all $x\in \R_{\geq 0}$, 
\begin{equation*}
    |k(x)| \leq C \1_{[0,L]}(x).
\end{equation*}
\end{lemma}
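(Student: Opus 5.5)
Since $\hf\in C_c^\infty([-1,1])$, formula (\ref{eq: hhat form}) gives $\hh(u)=\frac 2L\cos(\tau u)\hf(u/L)$, which is smooth, even, and supported in $[-L,L]$. So $\hh'$ is smooth and supported in $[-L,L]$ as well. I will work directly from the explicit representation (\ref{eq: k but expanded with h'}),
\begin{equation*}
    k(x)=-\frac{1}{\sqrt 2\pi}\int_x^\infty \frac{\hh'(y)}{\sqrt{\cosh y-\cosh x}}\d y .
\end{equation*}

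\textbf{Support.} If $x>L$, the integrand vanishes identically on $[x,\infty)$, so $k(x)=0$. This gives the indicator $\1_{[0,L]}$.

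\textbf{Smoothness.} The integrand is singular at $y=x$, so I remove the singularity by a change of variables. Set $y=x+t$ and write $\cosh(x+t)-\cosh x = t\,G(x,t)$, where
\begin{equation*}
    G(x,t)=\int_0^1\sinh(x+st)\d s .
\end{equation*}
A cleaner route is the substitution $\cosh y=\cosh x+s^2$, with $s\geq 0$. Then
\begin{equation*}
    \frac{\d y}{\sqrt{\cosh y-\cosh x}}=\frac{2\d s}{\sinh y}=\frac{2\d s}{\sqrt{(\cosh x+s^2)^2-1}} ,
\end{equation*}
and therefore
\begin{equation*}
    k(x)=-\frac{\sqrt2}{\pi}\int_0^\infty \frac{\hh'\prt{\ach(\cosh x+s^2)}}{\sqrt{(\cosh x+s^2)^2-1}}\d s .
\end{equation*}

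This form is clean away from $x=0$. Near $x=s=0$ the denominator vanishes, so I also use that $\hh'$ is odd and smooth. Writing $\hh'(y)=\sinh(y)\,\psi(\cosh y)$, the quotient $\hh'(y)/\sinh y$ is an even smooth function of $y$, hence a smooth function $\psi$ of $\cosh y$. Here $\psi$ is smooth on $[1,\infty)$ and compactly supported. This evenness step is the only delicate point, and I expect it to be the main obstacle. Concretely, I need that an even smooth function of $y$ is a smooth function of $\cosh y-1$. This is Whitney's theorem on even functions, since $\cosh y-1$ is $y^2$ times a positive smooth even factor.

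With this, $\sinh y$ cancels and
\begin{equation*}
    k(x)=-\frac{\sqrt2}{\pi}\int_0^{\infty}\psi(\cosh x+s^2)\d s .
\end{equation*}
Since $\psi$ is compactly supported, the effective range of $s$ is bounded uniformly for $x\geq 0$. Differentiating under the integral sign in $x$ is therefore justified to all orders, so $k\in C^\infty(\R_{\ge0})$.

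\textbf{Bound.} From the last formula,
\begin{equation*}
    |k(x)|\le \frac{\sqrt2}{\pi}\|\psi\|_\infty\sqrt{\cosh L} .
\end{equation*}
This uses only that the integrand vanishes unless $\cosh x+s^2\le\cosh L$. The right-hand side is a constant $C(f,L,\tau)$, and combined with the support statement this gives $|k(x)|\le C\1_{[0,L]}(x)$. Alternatively, boundedness follows immediately from continuity of $k$ on the compact set $[0,L]$.
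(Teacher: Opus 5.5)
Your proof is correct. The support and boundedness steps match the paper's proof. Since $\supp\hh'\subseteq[-L,L]$, the integral vanishes for $x>L$, and a bound on $[0,L]$ follows. The paper gets the bound from continuity on a compact interval; your constant $\frac{\sqrt2}{\pi}\|\psi\|_\infty\sqrt{\cosh L}$ is an explicit version of the same thing.

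The genuine difference is smoothness. The paper does not prove it and simply cites it as classical (\cite[Proposition 3]{Marklof2011}), whereas you give a self-contained argument. Your substitution $\cosh y=\cosh x+s^2$ is correct: it gives $\d y/\sqrt{\cosh y-\cosh x}=2\d s/\sinh y$.

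The factorisation $\hh'(y)=\sinh(y)\,\psi(\cosh y)$ is exactly where the evenness of $\hh$ is needed, and your justification holds up. The quotient $\hh'(y)/\sinh y$ is smooth and even, so Whitney's theorem writes it as $\Phi(y^2)$. Moreover $\cosh y-1=\rho(y^2)$ with $\rho(u)=\cosh\sqrt u-1$, which is entire with $\rho'>0$, hence a diffeomorphism of $[0,\infty)$. Therefore $\psi(c)=\Phi(\rho^{-1}(c-1))$ is smooth up to $c=1$ and vanishes for $c\geq\cosh L$.

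It follows that $k(x)=-\frac{\sqrt2}{\pi}\int_0^{\sqrt{\cosh L}}\psi(\cosh x+s^2)\d s$ is smooth on $[0,\infty)$, including one-sided derivatives at $x=0$. That boundary point is the only nontrivial part of the smoothness claim, and it is what the citation covers.

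Your $\psi$ is just $g'$, where $g(\cosh y)=\hh(y)$, so your argument is essentially the classical proof in the usual ``$Q$-function'' formulation of the Selberg transform. What it buys over the paper is self-containedness, and it also shows that $k$ extends to a smooth even function on $\R$. The abandoned first attempt with $y=x+t$ can simply be deleted.
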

\begin{proof}
    It is a classical fact that $k\in C^\infty(\R_{\geq 0})$ (c.f.\ \cite[Proposition 3]{Marklof2011}).
    It follows immediately from $\supp \hf\subseteq[-1,1]$ that $\supp \hh \subseteq[-L,L]$, by (\ref{eq: hhat form}). Thus $\supp \hh' \subseteq[-L,L]$ which yields by the identity (\ref{eq: k but expanded with h'}) that $\supp k \subseteq [0,L]$. This, together with the smoothness of $k$, yields the claimed bound.
\end{proof}
\subsection{The probability space}\label{subsec: the probability space}
Let $\cMg$ be endowed with the normalised Weil--Petersson measure, denoted by
\begin{equation*}
    \mu_g^\WP:= \frac{1}{V_g}\td \vol_g^\WP.
\end{equation*}

We define our sample space $\Omega_g$ as
\begin{equation*}
    \Omega_g:= \{(X,z):X \text{ is a genus $g$ hyperbolic surface}, z\in X \}/\text{marked isometries}.
\end{equation*}
By marked isometries we mean that $(X,z)\sim (X',z')$ if there exists an orientation-preserving isometry $i:X\to X'$ such that $i(z) =z'$. 

We equip it with the probability measure that is $\mu_g^\WP$ on the first coordinate, and the normalised area measure $\td\mu = \frac{\td z}{\vol X} = \frac{\td z}{4\pi(g-1)}$ on the second.
This turns $\Omega_g$ into a probability space. 

When taking expectations over this probability space, we will always first integrate over the surface, to yield a Weil--Petersson expectation over $\cMg$. Explicitly, if $M:\Omega_g \to \R$ is a random variable, then
\begin{equation}\label{eq: expectation in our model definition}
    \Eg\left[M\xz \right] = \E_g^\WP[\tilde M(X)],
\end{equation}
where $\tilde M$ is the associated \emph{normalised space average}
\begin{align*}
    \tilde M:\cMg &\to \R \\
    X &\mapsto  \int_X M\xz \frac{\td z}{4\pi(g-1)}.
\end{align*}
\subsection{Expectation of local Weyl law} 
Using the pre-trace formula (\ref{eq: splitting of N}), we can compute the expectation
\begin{equation*}
    \Eg[N\xz] = \bar N + \Eg[N_\osc\xz].
\end{equation*}
By Selberg's trace formula, Lemma \ref{lem: selberg trace formula}, and work of Rudnick \cite{rudnick2023goestatisticsmodulispace}, the expectation $\Eg\left[ N_\osc(X,z)\right]$ may be explicitly estimated.

\begin{lemma}\label{lem: rudnick main expectation}
    For fixed $f,\tau>0$ and $L\geq 1$, as  $g\to\infty$,
    \begin{align}\label{eq: expectation with main and off terms}
    \begin{split}
        \Eg \left[N_\osc\xz\right] =
        \frac{1}{4\pi (g-1)}I_f(L,\tau) + O_{L,f,\tau}(g^{-2}),
    \end{split}
    \end{align}
    where 
    \begin{equation*}
        I_f(L,\tau):= \frac{4}{L}\int_0^\infty \sum_{m=1}^\infty \hf\prt{\frac{mx}L} \frac{\sinh^2(x/2)}{\sinh(mx/2)} \cos(\tau mx)\d x.
    \end{equation*}
\end{lemma}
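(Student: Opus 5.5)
We first reduce the expectation to a Weil--Petersson average of a trace-formula quantity. By (\ref{eq: expectation in our model definition}), $\Eg[N_\osc\xz]=\E_g^\WP[\tilde N_\osc(X)]$, where $\tilde N_\osc(X)=\frac{1}{4\pi(g-1)}\int_X N_\osc\xz\d z$. By the integral identity (\ref{eq: integral identity stf}) in Lemma \ref{lem: selberg trace formula}, this becomes
\begin{equation*}
\tilde N_\osc(X)=\frac{1}{4\pi(g-1)}\sum_{\g\in[\cP]}\sum_{n\ge1}\frac{\ell(\g)}{2\sinh(n\ell(\g)/2)}\hh(n\ell(\g)).
\end{equation*}
The problem is therefore exactly the expectation of the geometric side of the trace formula, as in Rudnick \cite{rudnick2023goestatisticsmodulispace}. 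Since $\supp\hh\subseteq[-L,L]$, only primitive closed geodesics of length at most $L$ contribute. Each of these terms is bounded by $O_{f,L,\tau}(1)$, because $\ell/\sinh(n\ell/2)\le 2/n$.

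Next we split the primitive closed geodesics by topological type: simple nonseparating, simple separating of type $i$, and non-simple. For each simple type, the sum over the mapping class group orbit is a geometric function of a single curve, with $F(x)=\sum_n \frac{x}{2\sinh(nx/2)}\hh(nx)$ (a primitive curve and its inverse give the same oriented closed geodesic counted in $[\cP]$; we keep track of the factor $2$ from orientation here). Mirzakhani's formula (Theorem \ref{thm: mirzakhani integration formula}) with (\ref{eq: volume depending on topological types}) gives the nonseparating contribution
\begin{equation*}
\frac{1}{V_g}\izi F(x)V_{g-1,2}(x,x)\,x\d x.
\end{equation*}
By (\ref{eq: wp ratio Vgn with lengths}) and (\ref{eq: volume ratio V g-1 n+2 g n}), $V_{g-1,2}(x,x)/V_g=\frac{\sinh^2(x/2)}{(x/2)^2}(1+O(1/g))+O(xe^{x}/g)$ uniformly for $x\le L$. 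Substituting $\hh(nx)=\frac2L\cos(\tau nx)\hf(nx/L)$ from (\ref{eq: hhat form}), the main term becomes
\begin{equation*}
\izi\sum_n\frac{x}{2\sinh(nx/2)}\cdot\frac2L\cos(\tau nx)\hf\prt{\frac{nx}{L}}\frac{4\sinh^2(x/2)}{x^2}\,x\d x=I_f(L,\tau).
\end{equation*}
The error terms are $O_{L,f,\tau}(1/g)$ because the integrand is compactly supported in $[0,L]$; near $x=0$ the sum over $n$ is harmless since $\hf(nx/L)$ vanishes for $n>L/x$ and $\frac{x}{\sinh(nx/2)}\le 2/n$. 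The separating contributions are bounded via (\ref{eq: trivial wp ratio Vgn with lengths}) and (\ref{eq: volume ratio separating multicurves}) with $q=2$. This gives $O(g^{-1})$ in moduli-space mass, so $O(g^{-2})$ after the prefactor $\frac{1}{4\pi(g-1)}$.

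The main obstacle is the non-simple closed geodesics of length at most $L$. We would invoke Mirzakhani--Petri \cite{MirzakhaniPetri2019}, or the corresponding bound used in Rudnick's work: the expected number of non-simple closed geodesics of length at most $L$ is $O_L(1/g)$. One route is that such a geodesic fills a subsurface of Euler characteristic at most $-2$ with boundary length at most $2L$, and the expected number of those is $O_L(g^{-1})$ by (\ref{eq: volume ratio separating multicurves})-type estimates. Alternatively we take this directly from \cite{rudnick2023goestatisticsmodulispace}, which is the intended citation. Since each term is $O_{f,L,\tau}(1)$, this contributes $O(g^{-1})$, and after the prefactor $\frac{1}{4\pi(g-1)}$ the total is $\frac{1}{4\pi(g-1)}I_f(L,\tau)+O_{L,f,\tau}(g^{-2})$.
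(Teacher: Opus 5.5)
Your route is the paper's. The reduction through (\ref{eq: expectation in our model definition}) and (\ref{eq: integral identity stf}) is exactly what the paper does before citing \cite[Prop.~4.1]{rudnick2023goestatisticsmodulispace}. Your splitting by topological type (Mirzakhani's formula with the volume asymptotics for simple curves, Mirzakhani--Petri for non-simple ones) is the content of that proposition, and your main-term computation giving $I_f(L,\tau)$ is correct.

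\textbf{The gap.} One step is wrong as written. You claim each geodesic's weight $\sum_{n\geq 1}\frac{\ell}{2\sinh(n\ell/2)}\hh(n\ell)$ is $O_{f,L,\tau}(1)$ because $\ell/\sinh(n\ell/2)\leq 2/n$. But summing $2/n$ over $n\leq L/\ell$ only gives $O(\log(L/\ell))$. In fact the weight grows like $\hh(0)\log(1/\ell)$ as $\ell\downarrow 0$, which is unbounded whenever $\hf(0)\neq 0$. This is harmless for the simple types, where the logarithm is integrable against $x\,\td x$ as you note. Your treatment of the non-simple geodesics, however, rests on the uniform bound. The missing input is that every non-simple closed geodesic has length at least $4\ash(1)$, the same fact the paper uses in Claim \ref{claim: E[S nonsimple] bound}. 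On such geodesics the weight is genuinely bounded in terms of $f,L,\tau$, and the Mirzakhani--Petri count $O_L(1/g)$ then finishes the argument.

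\textbf{Two smaller corrections.} First, your orientation remark is backwards: $\gamma$ and $\gamma^{-1}$ are distinct classes in $[\cP]$. No extra factor $2$ appears because the paper's version of Mirzakhani's formula is the one for oriented curves. You can see this from the form $V_{g-1,2}(x,x)$ with no symmetry factor, and from the separating types $i=1,\ldots,g-1$, which are distinguished by which side has genus $i$. Hence the sum over $[\cP]$ restricted to a simple type is exactly an oriented orbit sum. With unoriented orbits, non-separating curves carry a factor $\tfrac12$ that cancels the $2$. Your display is therefore right, but not for the reason you state.

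Second, in your alternative route for non-simple geodesics, the filled subsurface need only satisfy $\chi\leq -1$, not $\chi\leq -2$: a figure-eight geodesic fills a pair of pants. The counting argument still goes through, since the boundary multicurve always cuts the filled subsurface off from the rest, but the claim as written is false.
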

\begin{proof}
    By (\ref{eq: expectation in our model definition}),
    \begin{equation*}
        \Eg[N_\osc\xz] = \frac{1}{4\pi (g-1)}\int_\cMg \int_X N_\osc(X,z)\d z \td\mu_g^\WP(X).
    \end{equation*}
    By definition (\ref{eq: def of Nosc}),
    \begin{equation*}
        \Eg[N_\osc\xz]=\frac{1}{4\pi(g-1)}\int_\cMg \int_X \sum_{\g\in\pi_1\xz- \{ \id\}}k(\ell(\g))\d z \td\mu_g^\WP(X). 
    \end{equation*}
    By (\ref{eq: integral identity stf}) of Lemma \ref{lem: selberg trace formula} this equals,
    \begin{equation*}
         \frac{1}{4\pi(g-1)}\int_\cMg \sum_{\g\in[\cP]}\sum_{n=1}^\infty \frac{\ell(\g)}{2\sinh\prt{\frac{n\ell}{2}}}{\hh(n\ell(\g))} \d\mu_g^\WP(X).
    \end{equation*}
    The integral over moduli space is exactly the term computed in \cite[Prop.\ 4.1]{rudnick2023goestatisticsmodulispace}, whence the statement of the lemma follows.
\end{proof}
\subsection{The variance}
As $N = \bar N + N_\osc$ and $\bar N$ is a constant throughout $\Omega_g$, one has that
\begin{equation*}
    \Varg (N) = \Varg(N_\osc).
\end{equation*}
Thus by (\ref{eq: def of Nosc})
\begin{align}\label{eq: form of variance}
    \begin{split}
        \Varg \prt{N_\osc} = \Eg\left[\sum_{\g_0,\g_1\in\pi_1\xz-\{\id\}} k(\ell(\g_0))k(\ell(\g_1))\right]-\prt{\Eg\left[N_\osc\xz\right]}^2.
    \end{split}
\end{align}
Hence the variance reduces to an expectation over lengths of pairs of geodesic loops, based at the same point.

We say a pair $\g_0,\g_1\in \pi_1\xz-\{\id\}$ of geodesic loops is \emph{diagonal} if both are nonzero integer powers of the same primitive loop, and are \emph{off-diagonal} otherwise. We denote the sum over these types of pairs by $\D\xz$ and $\od\xz$, respectively. Explicitly, these are
\begin{align}\label{eq: def of D and OD}
    \begin{split}
        \D(X,z) &:=\sum_{\g\in \cP}\sum_{n,|m|\geq 1} k(\ell_\xz(\g^n))\;k(\ell_\xz(\g^m))\\
        \od(X,z) &:= \sum_{\gamma_0\neq \gamma_1^{\pm 1}\in \cP}\sum_{n,m\geq 1}k(\ell_\xz(\g^n))\;k(\ell_\xz(\g^m)).
    \end{split}
\end{align}
The summation indices are used to avoid double counting. In the diagonal terms, we first choose an oriented primitive loop and then sum over its positive and negative iterates; in the off-diagonal terms, we sum over powers of two oriented geodesic loops, neither being positive or negative iterates of each other.

Theorem \ref{thm: main theorem on the variance} follows from combining, with Lemma \ref{lem: rudnick main expectation}, the following two propositions about the expectations of the diagonal and off-diagonal terms.
\begin{proposition}\label{prop: main diagonal term claim}
    For fixed $\tau>0, f$, and $L\geq 1$, as $g\to\infty$, the expectation of the diagonal terms satisfies
    \begin{align*}
        \Eg[\D(X,z)] = \frac{1}{4\pi(g-1)} \frac{\tau \tanh \pi \tau}{\pi L} \|f\|_{L^2(\R)}^2 + O_{f,\tau}\prt{\frac{1}{L^2 g}}
        +O_{f,L,\tau}\prt{\frac{1}{g^2}}.
    \end{align*}
\end{proposition}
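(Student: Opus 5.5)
We follow the strategy from the introduction. First reduce to closed geodesics on a fixed surface, then integrate over moduli space with Mirzakhani's formula, and finally extract the main term by stationary phase.

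\textbf{Step 1: unfolding on a fixed surface.} On a fixed $X$ we write $\int_X \D(X,z)\,\td z$ by grouping primitive loops $\g\in\cP$ at $z$ according to their conjugacy class $[\g]\in[\cP]$. Exactly as in the proof of Selberg's trace formula, summing over $z\in X$ and over the conjugates $\sigma\g\sigma^{-1}$ unfolds to an integral over $\langle\g\rangle\backslash\H$. Choose coordinates where $\g$ acts by $w\mapsto e^{\ell}w$ with $\ell=\ell(\g)$, and let $\rho$ be the distance to the axis. Then $\cosh d(w,\g^n w)=1+2\sinh^2(n\ell/2)\cosh^2\rho$, and
\begin{equation*}
\int_X \D(X,z)\,\td z=\sum_{[\g]\in[\cP]}\ell(\g)\int_0^\infty \Big(\sum_{n,|m|\geq1}k(d_n(\rho))\,k(d_m(\rho))\Big)2\pi\cosh\rho\,\td\rho \;=:\;\sum_{[\g]}\Phi(\ell(\g)).
\end{equation*}
Here $\cosh d_n(\rho)=1+2\sinh^2(n\ell/2)\cosh^2\rho$. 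This is the content of the forthcoming Lemma \ref{lem: general space average} and Proposition \ref{prop: diagonal on deterministic}. By Lemma \ref{lem: props of k}, $\Phi$ is bounded, and it is supported in $\ell\leq L$.

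\textbf{Step 2: Mirzakhani integration.} We split $[\cP]$ into simple nonseparating geodesics, simple separating geodesics, and non-simple ones. For the $\sns$ class, Theorem \ref{thm: mirzakhani integration formula} together with (\ref{eq: volume depending on topological types}), (\ref{eq: wp ratio Vgn with lengths}) and (\ref{eq: volume ratio V g-1 n+2 g n}) gives
\begin{equation*}
\frac{1}{V_g}\int_0^\infty \Phi(x)V_{g-1,2}(x,x)\,x\,\td x=\int_0^\infty\Phi(x)\,4\sinh^2(x/2)\,\td x\,\big(1+O_L(g^{-1})\big).
\end{equation*}
The factor $1/2$ from orientation is accounted for in the counting. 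The separating classes contribute $O_{L}(g^{-1})$ by (\ref{eq: volume ratio separating multicurves}) and (\ref{eq: trivial wp ratio Vgn with lengths}). Non-simple closed geodesics of length at most $L$ have Weil--Petersson expected count $O_L(g^{-1})$; this is the known bound of Mirzakhani--Petri. After dividing by $4\pi(g-1)$, all of these become $O_{f,L,\tau}(g^{-2})$.

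\textbf{Step 3: the explicit integral (main obstacle).} It remains to show
\begin{equation*}
\int_0^\infty \Phi(x)\,4\sinh^2(x/2)\,\td x=\frac{\tau\tanh\pi\tau}{\pi L}\|f\|_2^2+O_{f,\tau}(L^{-2}).
\end{equation*}
The plan is to substitute the variable $u=d_n(\rho)$ for each $n$. The hyperbolic weight $\sinh^2(x/2)\cosh\rho$ combines with the Jacobian so that, after summing over $n$ and changing variables, the $x$-integral of a product $k(d_n)k(d_m)$ becomes an integral of $k(u)k(v)$ against an explicit kernel. We then apply the Selberg/Abel inversion, i.e.\ (\ref{eq: k but expanded with h'}) and $\cS=\cF^{-1}\circ\cA$, to rewrite everything on the spectral side as $\int h(r)^2\, r\tanh(\pi r)\,\td r/(\text{const})$. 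This is the natural Plancherel-type identity, since $\int_{\H}k(d(z,w))^2\,\td w$ equals $\frac{1}{4\pi}\int h^2\, r\tanh\pi r\,\td r$.

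With $h=f(L(\cdot-\tau))+f(L(\cdot+\tau))$, the cross terms are $O(L^{-\infty})$. Each squared term gives $\frac{1}{L}\|f\|_2^2\,\tau\tanh\pi\tau$ plus a Taylor error $O(L^{-2})$ from expanding $r\tanh\pi r$ around $\tau$ (an exact stationary phase). The terms with $n\neq\pm m$ or with $|n|\geq2$ should oscillate like $\cos(\tau(n\pm m)x)$. Stationary phase bounds them by $O_{f,\tau}(L^{-2})$, or they are absorbed; only $m=\pm n$ survives at leading order.

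The hardest part is making this change of variables exact while keeping uniform control of the $n$-sums and of the Taylor error with constants depending only on $f,\tau$. If the direct route is unwieldy, the first thing to try is working with $\hh(u)=\frac2L\cos(\tau u)\hf(u/L)$ on the Fourier side and applying Parseval there.
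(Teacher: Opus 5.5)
Your Steps 1 and 2 follow the paper: Lemma~\ref{lem: general space average} and Proposition~\ref{prop: diagonal on deterministic}, then the $\sns$/$\ssep$/non-simple split with the same volume estimates and Mirzakhani--Petri. There are, however, normalisation slips that matter for a statement whose content is a constant.

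\begin{itemize}
\item The area element on $\langle\g\rangle\backslash\H$ in Fermi coordinates is $\cosh\rho\,\td\rho\,\td t$ with $\rho\in\R$. The weight is therefore $2\cosh\rho$, not $2\pi\cosh\rho$; with this correction $\Phi=H$.
\item Since $x\,V_{g-1,2}(x,x)/V_g\approx 4\sinh^2(x/2)/x$, the quantity to evaluate is $\izi\Phi(x)\frac{4\sinh^2(x/2)}{x}\,\td x$. Your displayed Step 3 identity carries an extra factor $\pi x$ in the integrand, so it is false as written.
\item $\Phi$ is not bounded: it grows like $1/\ell$ as $\ell\downarrow 0$ (Lemma~\ref{lem: linear_props of H}). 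This is harmless only because the Mirzakhani weight vanishes at $0$.
\end{itemize}

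The genuine gap is Step 3, which is the entire analytic content of the proposition. What you give is a plan, and two of its load-bearing claims are inaccurate.

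First, the change of variables does not turn the $(n,|m|)=(1,1)$ term exactly into the Plancherel integral. For fixed $\rho$, put $\sinh\frac u2=\cosh\rho\,\sinh\frac\ell2$, so that $\sinh u\,\td u=\cosh^2\rho\,\sinh\ell\,\td\ell$. Integrating out $\rho$ then gives
\[
16\izi\!\!\izi \sinh^2\tfrac{\ell}{2}\,k(d_1(\rho))^2\cosh\rho\,\td\rho\,\td\ell \;=\;16\izi k(u)^2\sinh u\;\kappa(u)\,\td u,\qquad \kappa(u)=\tfrac12\arctan\prt{\sinh\tfrac u2}.
\]
The kernel $\kappa$ is not constant; it only tends to $\pi/4$. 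The limit does produce the correct main term, namely $4\pi\izi k^2\sinh u\,\td u=2\int_{\H}k(d(i,w))^2\,\td w=\frac1{2\pi}\imii h(r)^2\,r\tanh(\pi r)\,\td r$. But you must still show the remainder is $O_{f,\tau}(L^{-2})$. Since $\frac\pi4-\kappa(u)\le \frac{1}{2\sinh(u/2)}$, the remainder is bounded by $16\izi k(u)^2\cosh\frac u2\,\td u$. Controlling this needs a uniform pointwise bound such as $|k(u)|\ll_{f,\tau}L^{-1}e^{-u/2}$. That bound follows from (\ref{eq: k but expanded with h'}), because $\hh'(0)=0$ and $\hh',\hh''=O_{f,\tau}(L^{-1})$.

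Second, your reason for discarding the other $(n,m)$ fails. For $n=m\ge 2$ there is no relative phase, so stationary phase gives nothing. What actually makes these terms small is absolute convergence.

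\begin{itemize}
\item From $e^{d_n(\rho)}\ge 2\sinh^2\frac{n\ell}{2}\cosh^2\rho$ and the bound on $k$,
\[
|k(d_n)k(d_m)|\ll_{f,\tau} L^{-2}\bigl(\sinh\tfrac{n\ell}2\sinh\tfrac{m\ell}2\cosh^2\rho\bigr)^{-1}.
\]
\item Against $\sinh^2\frac\ell2\cosh\rho\,\td\rho\,\td\ell$, this is summable over $n+m\ge 3$. For large $\ell$, the weight $\sinh^2\frac\ell2\,e^{-(n+m)\ell/2}$ is integrable. Near $\ell=0$, the factor $\sinh^2\frac\ell2$ compensates $\bigl(\sum_n 1/\sinh\frac{n\ell}{2}\bigr)^2\ll \ell^{-2}\log^2(2/\ell)$.
\end{itemize}

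Only for $n=m=1$ does the weight fail to decay, so there the $\ell$-integral runs over a range of length $\asymp L$. This is why that term alone is of size $1/L$ rather than $1/L^2$ (cf.\ Lemma~\ref{lem: sinh squared Anm integral n+m geq 3}).

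With these two estimates supplied, your route closes, and it genuinely differs from the paper's. The paper inserts the Abel inversion, evaluates the transverse integral through the elliptic integral $K$, and obtains $\tanh\pi\tau$ from the special-function identity of Lemma~\ref{lem: linear_cosine K integral}. Your route avoids elliptic integrals entirely. It also makes transparent the agreement with the random-wave term $\frac1{2\pi}\imii h^2\,r\tanh\pi r\,\td r$.
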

\begin{proposition}\label{prop: off diagonal term claim}
    For fixed $\tau>0, f$, and $L\geq 1$, as $g\to\infty$, the expectation of the off-diagonal terms is bounded by
    \begin{equation*}
        \Eg[\od(X,z)] = O_{f,L,\tau}\prt{\frac{1}{g^2}}.
    \end{equation*}
\end{proposition}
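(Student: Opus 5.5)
Since $|k(x)|\le C\1_{[0,L]}(x)$ by Lemma \ref{lem: props of k}, every nonzero term of $\od\xz$ comes from a point $z$ carrying two distinct primitive geodesic loops $\g_0\neq\g_1^{\pm1}$ of length at most $L$. I will first bound $|\od\xz|\le C^2 \#\{(\g,\g')\text{ off-diagonal}: \ell(\g),\ell(\g')\le L\}$. The number of geodesic loops of length $\le L$ based at $z$ is $\ll e^{L}/\ir(z)$ by a packing argument in $\H$, so we need $\ir(z)$ bounded below. Next, let $\g_1$ be a systole at $z$ and $\g_2$ a second shortest primitive loop, in the sense of the length-minimising loops of Section \ref{sec: len minimising loops}. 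If $\od\xz\ne0$, then $\ell(\g_1)\le\ell(\g_2)\le L$. The loop collar lemma (Lemma \ref{lem: l_ loop collar lemma}) then gives $\ir(z)\ge c(L)>0$. So $|\od\xz|\le C(f,L,\tau)\,\1_{E}\xz$, where $E$ is the event that $z$ has two distinct primitive loops of length $\le L$. It therefore suffices to show $\P_{\Omega_g}(E)=O_L(g^{-2})$.

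By Theorem \ref{thm: second shortest simp and triv}, $\g_1,\g_2$ are simple and meet only at $z$. A regular neighbourhood of $\g_1\cup\g_2$ is therefore a one-holed torus or a pair of pants; no other topology is possible for a figure-eight or a two-petal wedge. Its boundary curves have lengths $\le 2L$. Tightening them to geodesics gives an embedded geodesic subsurface $S$ of Euler characteristic $-1$ with total boundary length $\le 2(\ell(\g_1)+\ell(\g_2))\le 4L$. The point $z$ lies within distance $O(L)$ of $S$, and in fact $z\in S$ after a mild argument, since the loops are homotopic into $S$. Some boundary curves may be inessential (bounding discs) or coincide; I will handle these degenerate cases separately. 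In particular, two boundaries of a pair of pants cannot both be trivial on a closed surface. Hence
\begin{equation*}
\P(E)\le \frac{1}{4\pi(g-1)}\E_g^\WP\Big[\sum_{S}\area(N_{L}(S))\1_{\ell(\partial S)\le 4L}\Big]\ll_L \frac1g\,\E_g^\WP\big[\#\{S:\ell(\partial S)\le 4L\}\big],
\end{equation*}
where $S$ ranges over embedded one-holed tori and pairs of pants.

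The expectation is a geometric function in the sense of Definition \ref{def: geometric functions}. Applying Mirzakhani's formula (Theorem \ref{thm: mirzakhani integration formula}), with the bound (\ref{eq: trivial wp ratio Vgn with lengths}) for the compact-support integrals, gives volumes of the complement. For a one-holed torus this is $V_{1,1}(x)V_{g-1,1}(x)/V_g=O(g^{-1})$ by (\ref{eq: volume ratio separating multicurves}). For pants with non-separating complement it is $V_{g-2,3}/V_g=O(g^{-1})$, using (\ref{eq: volume ratio V g-1 n+2 g n}) twice together with the standard $V_{g,n+1}\asymp g V_{g,n}$ asymptotic. The separating configurations are $O(g^{-2})$ or smaller. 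This is exactly the input from Monk--Thomas \cite{monkthomastanglefree}: the expected number of short embedded Euler characteristic $-1$ subsurfaces is $O_L(1/g)$. Combining, $\E[\od]\ll_L g^{-1}\cdot g^{-1}=g^{-2}$.

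The main obstacle is the topological reduction from a point with two short loops to a short embedded subsurface containing $z$. This step must deal with degenerate boundaries and with the $L$-dependence of the area. It relies essentially on the simplicity and trivial intersection given by Theorem \ref{thm: second shortest simp and triv}, and on the injectivity radius control from the collar lemma. The volume estimates themselves are standard.
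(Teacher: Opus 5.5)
Your proposal is correct and follows essentially the paper's route. You bound $\od\xz$ by $C(f,L,\tau)\1_{X_L}$ via the loop collar lemma and lattice-point counting, then use Theorem~\ref{thm: second shortest simp and triv} to attach to each such $z$ an embedded geodesic one-holed torus or pair of pants $S$ of total boundary length at most $4L$, and finish with the Monk--Thomas bound of Lemma~\ref{lem: nexpectation of oneholed tori or pops}.

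The one difference is how $z$ is localised. The paper places $z$ in the collar of area at most $2e^{L/2}$ around the closed geodesic freely homotopic to the systole (Lemma~\ref{lem: l_ injrad bound on area around systole}). You instead use an $O(L)$-neighbourhood of $S$, and this is sound. In the cover $\H/\langle\g_1,\g_2\rangle$, a point outside the convex core lies in a funnel with cyclic fundamental group, so one of the two based loops must cross into the core, which gives $d(z,S)\le L/2$. However, drop the aside that $z\in S$, since it fails in general: $z$ can sit just across a boundary geodesic of a pair of pants.
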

The proofs of these propositions can be found in \hyperref[proof: linear_proof of main diagonal proposition]{Section \ref{sec: diagterms}} and \hyperref[prf: n proof of main offdiag proposition]{Section \ref{sec: offdiagterms}} respectively.
\begin{proof}[Proof of Theorem \ref{thm: main theorem on the variance}]\label{proof: proof of main theorem on the variance}
    For the expectation of $N\xz$, it follows from (\ref{eq: splitting of N}) and Lemma \ref{lem: rudnick main expectation} that
    \begin{equation*}
        \Eg \left[N\xz\right] = \frac{1}{4\pi}\imii h(r)r\tanh\pi r \d r + O_{f,L,\tau}\prt{\frac 1g}.
    \end{equation*}
    
    For the variance, by (\ref{eq: form of variance}) and the linearity of the expectation
    \begin{equation}\label{eq: linear_variance split}
        \begin{split}
            \Varg(N\xz)&=\Varg (N_\osc\xz)\\& = \Eg[\D \xz] + \Eg[\od\xz] - \prt{\Eg[N_\osc]}^2.
        \end{split}
    \end{equation}
    By Lemma \ref{lem: rudnick main expectation}, the square of the expectation equals
    \begin{equation*}\label{eq: exp nosc square is g -2}
        \prt{\Eg[N_\osc]}^2 = \prt{\frac{1}{4\pi (g-1)}I_f(L,\tau) + O_{L,\tau}(g^{-2})}^2 = O_{f,L,\tau}(g^{-2}).
    \end{equation*}
    Thus after plugging in the expectations of the diagonal and off-diagonal terms, which are  Propositions \ref{prop: main diagonal term claim} and \ref{prop: off diagonal term claim} above, into the variance formula (\ref{eq: linear_variance split}), one arrives at the theorem statement:
        \begin{align*}
            \Varg \prt{N\xz} &= \frac{1}{4\pi(g-1)} \frac{\tau \tanh \pi \tau}{\pi L} \|f\|_{L^2(\R)}^2 + O_{f,\tau}\prt{\frac{1}{Lg}}
            +O_{f,L,\tau}\prt{\frac{1}{g^2}}.
        \end{align*}
\end{proof}

\section{Asymptotics of the diagonal terms}\label{sec: diagterms}
The purpose of this section is to prove Proposition \ref{prop: main diagonal term claim}, the asymptotics of the expectation of the diagonal terms $\Eg \left[\D\xz\right]$. This expectation, crucially, is amenable to exact computations.

The section is split as follows.
\begin{itemize}
    \item[(i)] Section \ref{subsec: linear_special functions}: defines special functions which will arise in our calculations, and collects some necessary results.
    \item[(ii)] Section \ref{subsec: computation of integral on a surface}: computes the form of $\int_X \D\xz \d z$ in terms of a particular function  $H:\R_{>0}\ra\R$ of the lengths of the (oriented) closed geodesics on the surface. Furthermore, we introduce Proposition \ref{prop: asymp of relevant H integral}, which is central, and forms the basis of the asymptotics of Proposition \ref{prop: main diagonal term claim}.
    \item[(iii)] Section \ref{subsec: linear_splitting into topological types}: proves Proposition \ref{prop: main diagonal term claim} using Mirzakhani's work and Proposition \ref{prop: asymp of relevant H integral}.
    \item[(iv)] Section \ref{subsec: asymptotics and stationary phase}: further develops the form of the function $H$ above, and proves necessary asymptotics of auxiliary functions which appear in this form.
    \item[(v)] Section \ref{subsec: asymptotics of main integral}: proves Proposition \ref{prop: asymp of relevant H integral}.
\end{itemize}
\subsection{Special functions}\label{subsec: linear_special functions}
The following function will be a key character in the sequel.
\begin{definition}[Complete elliptic integral of the first kind]\label{def: ellipticK}
    The complete elliptic integral of the first kind is defined as
    \begin{equation}\label{def:ellipticK}
        K(x):=\int_0^{\frac{\pi}{2}}\frac{\td\theta}{\sqrt{1-x^2\sin^2\theta}}
    \end{equation}
    for $x\in[0,1)$.
\end{definition}
The elliptic $K$-function has the following properties, which can be found in e.g.\ \cite{NIST:DLMF}.
\begin{lemma}[Properties of $K$]\label{lem: props of K}\;
\begin{enumerate}
    \item One has the Taylor expansion 
    \begin{equation}\label{eq: ellipticK taylor expansion}
        K(x)=\frac{\pi}{2}\sum_{n=0}^\infty \frac{\prt{\frac 12}_n\prt{\frac 12}_n}{(n!)^2} x^{2n},
    \end{equation}
    where $(z)_n = z(z+1)\cdots (z+n-1)$ is the Pochhammer symbol.
    \item One has
    \begin{equation*}
        K(x) \leq \frac \pi 2 - \frac 12 \log \prt{1-x^2}.
    \end{equation*}
    with the divergence as $x\uparrow 1$ of $K(x)$ being asymptotic to
    \[K(x)\underset{x\uparrow1}\sim -\frac12\log(1-x^2).\]
\end{enumerate}
\end{lemma}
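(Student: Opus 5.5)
The plan is to work entirely from the power series. Part (1) comes from expanding the integrand binomially. Part (2) then follows by comparing coefficients with the series of $-\frac12\log(1-x^2)$, with the central binomial coefficient estimate as the key input.

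\emph{Taylor expansion.} For $x\in[0,1)$ and $\theta\in[0,\frac\pi2]$ we have $x^2\sin^2\theta\le x^2<1$. So the binomial series
\[
(1-x^2\sin^2\theta)^{-1/2}=\sum_{n\ge0}\frac{(\frac12)_n}{n!}x^{2n}\sin^{2n}\theta
\]
converges uniformly in $\theta$, since it is dominated by $\sum_n \frac{(\frac12)_n}{n!}x^{2n}<\infty$. I will integrate term by term and use Wallis' formula $\int_0^{\pi/2}\sin^{2n}\theta\,\td\theta=\frac\pi2\frac{(\frac12)_n}{n!}$. This gives (\ref{eq: ellipticK taylor expansion}) directly.

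\emph{The upper bound.} Set $c_n:=\frac\pi2\bigl(\frac{(\frac12)_n}{n!}\bigr)^2$, so that $K(x)=\frac\pi2+\sum_{n\ge1}c_nx^{2n}$. On the other side, $-\frac12\log(1-x^2)=\sum_{n\ge1}\frac{x^{2n}}{2n}$. Since all terms are nonnegative, it suffices to show $c_n\le\frac1{2n}$ for every $n\ge1$.

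Note that $\frac{(\frac12)_n}{n!}=\binom{2n}{n}4^{-n}$. The needed inequality is therefore the classical estimate $\binom{2n}{n}4^{-n}\le(\pi n)^{-1/2}$, because then $c_n\le\frac\pi2\cdot\frac1{\pi n}=\frac1{2n}$. I would prove that estimate by showing that $a_n:=n\bigl(\binom{2n}{n}4^{-n}\bigr)^2$ is increasing. Indeed,
\[
\frac{a_{n+1}}{a_n}=\frac{n+1}{n}\Bigl(\frac{2n+1}{2n+2}\Bigr)^2=\frac{(2n+1)^2}{4n(n+1)}>1 .
\]
By Wallis' product, $a_n\to\frac1\pi$, so $a_n\le \frac1\pi$ for all $n$. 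This step is the only real content, and I expect it to be the main (mild) obstacle: one has to identify that the constant $\frac\pi2$ in the claimed bound is exactly what makes the coefficient comparison sharp.

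\emph{The asymptotic.} Since $a_n\uparrow\frac1\pi$, the differences $\frac1{2n}-c_n$ are nonnegative. I want them to be $O(n^{-2})$. From the ratio formula, $\log a_{n+1}-\log a_n=-\log\bigl(1-\frac1{(2n+1)^2}\bigr)=O(n^{-2})$. Summing the tail gives $\frac1\pi-a_n=O(n^{-1})$, hence $\frac1{2n}-c_n=\frac\pi{2n}\bigl(\frac1\pi-a_n\bigr)=O(n^{-2})$.

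It follows that $K(x)+\frac12\log(1-x^2)=\frac\pi2-\sum_{n\ge1}\bigl(\frac1{2n}-c_n\bigr)x^{2n}$ stays bounded on $[0,1)$, between $\frac\pi2-\sum_{n\ge1}(\frac1{2n}-c_n)$ and $\frac\pi2$. Since $-\frac12\log(1-x^2)\to\infty$ as $x\uparrow1$, a bounded difference gives $K(x)\sim-\frac12\log(1-x^2)$.
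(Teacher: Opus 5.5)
Your proof is correct. It takes a genuinely different route from the paper, which gives no argument for this lemma and simply refers to the DLMF for both the series and the inequality.

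Your argument is fully self-contained and elementary:
\begin{enumerate}
  \item The Wallis integral gives the series.
  \item Monotonicity of $a_n=n\bigl(\binom{2n}{n}4^{-n}\bigr)^2$, together with $a_n\to\frac1\pi$, gives $c_n\le\frac1{2n}$, hence the upper bound.
  \item The tail estimate $\frac1{2n}-c_n=O(n^{-2})$ makes $K(x)+\frac12\log(1-x^2)$ bounded, which yields the asymptotic.
\end{enumerate}

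It actually proves a little more than the statement. Since $K(x)+\frac12\log(1-x^2)=\frac\pi2-\sum_{n\ge1}(\frac1{2n}-c_n)x^{2n}$ with nonnegative coefficients $\frac1{2n}-c_n$, this difference is nonincreasing on $[0,1)$. As $x\uparrow1$ it tends to $\frac\pi2-\sum_{n\ge1}(\frac1{2n}-c_n)$. What the citation buys over your route is essentially only the identification of this limit as $\log 4$, i.e.\ the classical $K(x)=\log\frac{4}{\sqrt{1-x^2}}+o(1)$, and the paper never uses that.

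If you only want the asymptotic, you could skip the $O(n^{-2})$ refinement. The relation $c_n\sim\frac1{2n}$ from Wallis, together with the divergence of $\sum_n\frac{x^{2n}}{2n}$, already gives $\sum_n c_nx^{2n}\sim-\frac12\log(1-x^2)$ by the elementary Abelian comparison for power series with nonnegative coefficients. Your version, however, gives the stronger bounded-difference statement.
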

We now move on to define functions related to the hyperbolic distance which will be of use.

The hyperbolic distance between two points has several different expressions. We will use the following simple one:
\begin{definition}\label{def: w and W}
    For $z=x+iy\in \H$ and $l> 0$ we set
    \begin{equation*}
        w_l(x/y):= d_\H (z,e^l\dot z), \text{ and } W_l(x/y) \text{ its inverse}. 
    \end{equation*}
\end{definition}
Explicitly,
\begin{equation}\label{eq: linear_def of w ell}
    w_l(s):= \ach \prt{\cosh l + (\cosh l-1)s^2} ,\text{ for } s\in \R,
\end{equation}
and its inverse may be expressed as
\begin{equation}\label{eq: linear_def of W ell}
    W_l(s):= \sqrt{\frac{\cosh s-\cosh l}{\cosh l -1}}, \text{ for } |s|\geq l.
\end{equation}

Finally, for reasons that will become apparent in Section \ref{subsec: asymptotics of main integral}, we prove the following (satisfying) integral identity with respect to the elliptic $K$-function.
\begin{lemma}\label{lem: linear_cosine K integral}
    For all $s\in \R$, the following identity holds
    \begin{equation*}
        \izi \cos(s v) \izi \frac{K\prt{\sqrt{\frac{e^y-1}{e^{y+v}-1}}}}{\sqrt{e^{y+v }-1}} \d y\td v=\begin{cases}
            \frac{\pi^2}{4s}\tanh\pi s & s \neq 0 \\
            \frac{\pi^3}{4} & s = 0.
        \end{cases}
    \end{equation*}
\end{lemma}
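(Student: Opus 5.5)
The plan is to evaluate the inner $y$-integral in closed form by unfolding $K$ into its defining integral (\ref{def:ellipticK}), and then to compute the cosine transform in $v$ by Mellin/Beta-integral manipulations. Write
\begin{equation*}
    G(v):=\izi \frac{K\prt{\sqrt{\frac{e^y-1}{e^{y+v}-1}}}}{\sqrt{e^{y+v}-1}}\d y .
\end{equation*}
Inserting $K(k)=\int_0^{\pi/2}(1-k^2\sin^2\theta)^{-1/2}\td\theta$ gives
\begin{equation*}
    G(v)=\int_0^{\pi/2}\izi \frac{\td y\,\td\theta}{\sqrt{a_\theta e^y-b_\theta}},\qquad a_\theta:=e^v-\sin^2\theta,\quad b_\theta:=\cos^2\theta,
\end{equation*}
and Tonelli justifies exchanging the two integrals since the integrand is positive.

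The $y$-integral is elementary. Using $\int \frac{\td y}{\sqrt{ae^y-b}}=\frac{2}{\sqrt b}\arctan\sqrt{(ae^y-b)/b}$ and $a_\theta-b_\theta=e^v-1$, I expect
\begin{equation*}
    G(v)=\int_0^{\pi/2}\frac{2}{\cos\theta}\arctan\prt{\frac{\cos\theta}{\sqrt{e^v-1}}}\td\theta .
\end{equation*}
This shows three things. First, $G$ is continuous on $(0,\infty)$ with at most a mild singularity at $v=0$. Second, $G(v)\le \pi^2/2$. Third, $G(v)=O(e^{-v/2})$ as $v\to\infty$. Together these give $G\in L^1(\R_{>0})$, so the cosine transform converges absolutely and is continuous in $s$. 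The case $s=0$ then follows from the case $s\neq0$ by continuity, since $\frac{\pi^2}{4s}\tanh\pi s\to\frac{\pi^3}{4}$.

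For $s\neq 0$, I integrate by parts in $v$ against $\cos(sv)=\partial_v\sin(sv)/s$. The boundary terms vanish because $\sin(0)=0$ and $G$ decays at infinity. The resulting integrand $\partial_v\arctan(\cos\theta\,(e^v-1)^{-1/2})$ is an explicit algebraic function of $e^{-v}$. After substituting $x=e^{-v}$, it becomes
\begin{equation*}
    -\frac{\cos\theta}{2}\,\frac{x^{1/2}(1-x)^{-1/2}}{1-x\sin^2\theta}.
\end{equation*}
The $v$-integral with weight $\sin(sv)=\operatorname{Im}x^{-is}$ then becomes a Mellin integral of the form $\int_0^1 x^{-is-1/2}(1-x)^{-1/2}(1-x\sin^2\theta)^{-1}\td x$. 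I plan to expand $(1-x\sin^2\theta)^{-1}=\sum_n x^n\sin^{2n}\theta$ and integrate term by term, which gives Beta functions $B(n+\tfrac12-is,\tfrac12)$. Next I integrate over $\theta$ using $\int_0^{\pi/2}\sin^{2n}\theta\,\td\theta=\frac{\pi}{2}\frac{(1/2)_n}{n!}$; the factor $\cos\theta/\cos\theta$ cancels the prefactor. The resulting series is a ${}_2F_1$ at argument $1$ (or, if I keep the Taylor coefficients of $K$ from Lemma \ref{lem: props of K} instead, a ${}_3F_2$ at $1$), which should sum by Gauss's theorem to a ratio of Gamma functions such as $\Gamma(\tfrac12-is)\Gamma(\ldots)$. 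Finally I take the imaginary part and apply the reflection formula $|\Gamma(\tfrac12+is)|^2=\pi/\cosh\pi s$ together with $|\Gamma(is)|^2=\pi/(s\sinh\pi s)$. This should produce $\tanh\pi s$, with the overall factor $\pi^2/(4s)$ coming from $1/s$ and the two factors of $\pi/2$.

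The main obstacle is this last evaluation: organising the double sum/integral so that it collapses to a single Gauss-summable hypergeometric series, and carefully extracting the imaginary part. If the term-by-term series proves awkward, I would instead perform the $\theta$-integral first in closed form, for example via $\int_0^{\pi/2}\frac{\td\theta}{1-x\sin^2\theta}=\frac{\pi}{2\sqrt{1-x}}$. This reduces the problem to the single Beta integral $\int_0^1 x^{-is-1/2}(1-x)^{-1}\ldots$, regularised by the integration by parts. Justifying the interchange of sums and integrals is routine from absolute convergence away from $x=1$, with Abel summation at the endpoint.
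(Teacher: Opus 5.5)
Your reduction is correct, and it takes a different and more self-contained route than the paper. The paper substitutes $x=\sqrt{(e^y-1)/(e^{y+v}-1)}$ and quotes a table integral (Gradshteyn--Ryzhik 6.153) to get $G(v)=\frac{\pi}{2}\log\frac{1+e^{-v/2}}{1-e^{-v/2}}$. It then quotes tabulated cosine transforms. Your unfolding of $K$ and integrating in $y$ first produces the same $G$ by hand: differentiating in $c$ shows $\int_0^{\pi/2}\frac{\arctan(c\cos\theta)}{\cos\theta}\,\mathrm{d}\theta=\frac{\pi}{2}\operatorname{arcsinh}c$, so $G(v)=\pi\operatorname{arcsinh}\bigl((e^v-1)^{-1/2}\bigr)$.

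There is one slip. $G$ is not bounded by $\pi^2/2$; it has a logarithmic singularity, $G(v)\sim\frac{\pi}{2}\log(4/v)$ as $v\downarrow 0$. Use $\arctan u\le u$ instead, which gives $G(v)\le \pi(e^v-1)^{-1/2}$. This still yields $G\in L^1(\R_{>0})$ and $\sin(sv)G(v)\to 0$ as $v\downarrow 0$, so your continuity argument at $s=0$ and the integration by parts are unaffected.

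The genuine gap is the final evaluation. With $a=\frac12$, $b=\frac12-is$, $c=1-is$, your term-by-term computation produces $\frac{\pi}{2}\sqrt{\pi}\,\frac{\Gamma(\frac12-is)}{\Gamma(1-is)}\,{}_2F_1(a,b;c;1)$. Here $c-a-b=0$, so Gauss's theorem, which requires $\operatorname{Re}(c-a-b)>0$, does not apply: the terms decay like $1/n$ and the series diverges. Equivalently, after the $\theta$-integration the complex integrand is $\frac{\pi}{2}x^{-is-1/2}(1-x)^{-1}$, which is not integrable at $x=1$. Only its imaginary part is integrable, so you cannot postpone taking $\operatorname{Im}$ to the end, and there is no Beta value $B(\frac12-is,0)$ to evaluate.

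This is the degenerate logarithmic case, and the answer is a digamma value rather than a Gamma quotient. As $z\uparrow 1$, $\sqrt{\pi}\,\frac{\Gamma(\frac12-is)}{\Gamma(1-is)}\,{}_2F_1(a,b;c;z)=-\log(1-z)+2\psi(1)-\psi(\frac12)-\psi(\frac12-is)+o(1)$. Taking imaginary parts termwise is legitimate, since those terms are $O(n^{-2})$, and by the digamma reflection formula this gives $\operatorname{Im}\psi(\frac12+is)=\frac{\pi}{2}\tanh\pi s$.

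The quickest repair is your own fallback, carried out in the $v$-variable. The $\theta$-integral gives $G'(v)=-\frac{\pi}{2}\frac{e^{-v/2}}{1-e^{-v}}=-\frac{\pi}{4\sinh(v/2)}$, so the integral equals $\frac{\pi}{4s}\int_0^\infty\frac{\sin sv}{\sinh(v/2)}\,\mathrm{d}v$. Expand $\frac{1}{\sinh(v/2)}=2\sum_{k\ge0}e^{-(k+\frac12)v}$, justified by dominated convergence, and use $\sum_{k\ge0}\frac{s}{(k+\frac12)^2+s^2}=\frac{\pi}{2}\tanh\pi s$. This gives $\pi\tanh \pi s$, hence $\frac{\pi^2}{4s}\tanh\pi s$.
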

\begin{proof}
    Let $s \neq 0$. We first compute the integral
    \begin{equation*}
        \izi \frac{K\prt{\sqrt{\frac{e^y-1}{e^{y+v}-1}}}}{\sqrt{e^{y+v}-1}}\d y.
    \end{equation*}
    The change of variables $x = \sqrt{\frac{e^y-1}{e^{y+v}-1}}$, yields that $x$ goes from $0$ to $e^{-v/2}$, the Jacobian is $\frac{2x(e^v -1)}{(1-x^2)(1-x^2e^v)}$, and the square root term $\frac{1}{\sqrt{e^{y+v}-1} }$ equals $\sqrt{\frac{1-x^2e^v}{e^v-1}}$. Plug them in to arrive at
    \begin{equation*}
        \int_0^{e^{-v/2}}K(x) \frac{\sqrt{1-x^2e^v}}{\sqrt{e^v-1}}\frac{2x(e^v-1)}{(1-x^2)(1-e^vx^2)} \d x = 2\sqrt{1-e^{-v}}\int_0^{e^{-v/2}}\frac{xK(x)}{(1-x^2)\sqrt{e^{-v}-x^2}}\d x.
    \end{equation*}
    Using \cite[6.153]{gradshteyn2007} yields an exact form for the integral, namely
    \begin{equation*}
        2\sqrt{1-e^{-v}} \cdot \frac{\pi}{4} \frac{1}{\sqrt{1-e^{-v}}} \log \prt{\frac{1+e^{-v/2}}{1-e^{-v/2}}}.
    \end{equation*}
    
    Thus the double integral equals
    \begin{equation}\label{eq: cos log seemek}
        \frac{\pi}{2}\izi \cos(s v)  \log \prt{\frac{1+e^{-v/2}}{1-e^{-v/2}}}\d v.
    \end{equation}
    Applying the formulae for the cosine transforms of $\log \prt{1+e^{-v/2}}$ and $\log\prt{1-e^{-v/2}}$ \cite[1.4.43, 1.4.44]{oberhettinger1990tables} yields that the above equals
    \begin{equation*}
        \frac{\pi}{2}\prt{\prt{\frac{1}{4s^2}-\frac{\pi}{2s}\csch\prt{2\pi s}}- \prt{\frac{1}{4s^2}-\frac{\pi}{2s}\coth \prt{2\pi s}}},
    \end{equation*}
    which by hyperbolic function identities equals the identity of the lemma.

    We extend the definition to $s = 0$ by continuity, and indeed, the double integral converges at $s=0$ (as can be seen in e.g.\ (\ref{eq: cos log seemek})).
\end{proof}
\subsection{Space average of diagonal terms on a deterministic surface}\label{subsec: computation of integral on a surface}
In this section, we prove that the space average of the diagonal terms $\D\xz$ may be written as a weighted sum over the lengths $\ell(\g)$ of primitive closed geodesics $\g$, weighted by a function $H$. This removes all dependency on marked points $z$ on the surface. This is Proposition \ref{prop: diagonal on deterministic}.
We proceed thereafter to prove some basic results for the function $H$, and introduce a certain integral whose evaluation forms a key ingredient in proving Proposition \ref{prop: main diagonal term claim}.

We begin by proving a general formula for space averages of functions dependent on lengths of geodesic loops, freely homotopic to some fixed closed geodesic. This is similar to computations made in the proof of the Selberg trace formula (c.f.\ \cite{Marklof2011}). 
\begin{lemma}\label{lem: general space average}
    Let $X$ be a hyperbolic surface, and $\g$ be a primitive closed geodesic of length $\ell>0$. Let $t\geq 1$ be an integer. Then let $ n_1,\ldots,n_t \geq 1$ be integers, and $q_1,\ldots, q_t:\R_{>0}\ra \R$ be compactly supported functions. Define 
    \begin{equation*}
        M\xz = \sum_{\substack{\alpha \in \pi_1\xz\\ \alpha \simf \g}}\prod_{j=1}^tq_j(\ell_\xz(\alpha^{n_j})).
    \end{equation*}
    Then the space average of $T\xz$ has the following integral form
    \begin{equation*}
        \int_X M\xz \d z  = 2\ell \izi \prod_{j=1}^t q_j\prt{w_{n_j\ell}\prt{ x }} \td x.
    \end{equation*}
\end{lemma}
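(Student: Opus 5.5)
We unfold the integral over $X$ to the universal cover, as in the proof of Selberg's trace formula. Write $X=\G\ba\H$ and fix a lift $\tilde z\in\H$ of $z$.

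\textbf{Step 1: loops as group elements.} Geodesic loops $\alpha\in\pi_1\xz$ freely homotopic to $\g$ correspond bijectively to elements $\tilde\alpha\in\G$ in the conjugacy class of a fixed hyperbolic element $\tilde\g_0\in\G$ whose translation length is $\ell$. Under this correspondence,
\begin{equation*}
\ell_\xz(\alpha^{n_j})=d(\tilde z,\tilde\alpha^{n_j}\tilde z).
\end{equation*}
Since $\g$ is primitive, the centraliser of $\tilde\g_0$ in $\G$ is the cyclic group $\langle\tilde\g_0\rangle$. Hence the conjugacy class is parametrised by $\sigma\in\langle\tilde\g_0\rangle\ba\G$ via $\tilde\alpha=\sigma^{-1}\tilde\g_0\sigma$. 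Using $d(\tilde z,\sigma^{-1}\tilde\g_0^{n}\sigma\tilde z)=d(\sigma\tilde z,\tilde\g_0^{n}\sigma\tilde z)$, we obtain
\begin{equation*}
M\xz=\sum_{\sigma\in\langle\tilde\g_0\rangle\ba\G}F(\sigma\tilde z),\qquad F(w):=\prod_{j=1}^t q_j\prt{d(w,\tilde\g_0^{n_j}w)}.
\end{equation*}

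\textbf{Step 2: unfolding.} Integrate over a fundamental domain $\mathcal D$ for $\G$. The sum over cosets $\langle\tilde\g_0\rangle\ba\G$ unfolds $\mathcal D$ into a fundamental domain for the cyclic group $\langle\tilde\g_0\rangle$, giving
\begin{equation*}
\int_X M\xz\d z=\int_{\langle\tilde\g_0\rangle\ba\H}F(w)\d w.
\end{equation*}
Exchanging the sum and the integral requires justification. Both are nonnegative if we first bound $|q_j|$ by constants times indicators. Finiteness follows because $q_j$ is compactly supported, and $d(w,\tilde\g_0^{n}w)\to\infty$ as $w$ moves away from the axis, so $F$ is supported in a tubular neighbourhood of the axis of $\tilde\g_0$. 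This neighbourhood has finite area modulo $\langle\tilde\g_0\rangle$.

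\textbf{Step 3: the cylinder integral.} Conjugate so that $\tilde\g_0(w)=e^{\ell}w$, with axis the imaginary axis. By Definition \ref{def: w and W},
\begin{equation*}
d(w,e^{n\ell}w)=w_{n\ell}(x/y).
\end{equation*}
A fundamental domain for $\langle\tilde\g_0\rangle$ is the strip $\{1\le y<e^\ell\}$. Set $s=x/y$, so that $\d x=y\d s$ and the measure becomes
\begin{equation*}
\frac{\td x\,\td y}{y^2}=\td s\,\frac{\td y}{y}.
\end{equation*}
The $y$-integral over $[1,e^\ell)$ gives the factor $\ell$. The $s$-integral runs over $\R$, and since $w_l$ is even in $s$ it equals twice the integral over $(0,\infty)$. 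This yields exactly
\begin{equation*}
2\ell\izi\prod_{j=1}^t q_j\prt{w_{n_j\ell}(s)}\d s.
\end{equation*}

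The main point needing care is Step 1. One must check that the correspondence between loops $\alpha$ and group elements is bijective, with no double counting: each free homotopy class of loops based at $z$ corresponds to exactly one element of the conjugacy class. This uses primitivity of $\g$, so that the centraliser is exactly $\langle\tilde\g_0\rangle$ and not a larger cyclic group. One must also check that the orientation convention is consistent, so that $\alpha\simf\g$ picks out only the conjugacy class of $\tilde\g_0$ and not that of $\tilde\g_0^{-1}$. The remaining steps are routine.
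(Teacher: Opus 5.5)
Your proposal is correct and follows essentially the same route as the paper. You identify the based loops freely homotopic to $\g$ with the conjugacy class of $\tilde\g_0$ in $\G$, unfold via the centraliser (orbit--stabiliser) to a fundamental domain of $\langle\tilde\g_0\rangle$, and then evaluate on the strip $1\le y<e^{\ell}$ using the substitution $x=ys$ and the evenness of $w_{n\ell}$. You also justify the sum--integral interchange and make explicit that primitivity gives centraliser exactly $\langle\tilde\g_0\rangle$, hence the factor $\ell$; the paper leaves both points implicit.
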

\begin{remark}
    In the sequel, we will only use the case $t = 2, q_1 = q_2 = k$ of the lemma.
\end{remark}
\begin{proof}
    Fix a fundamental domain $F\subseteq \H$ of $X$, and a corresponding $\G\leq \psltr$. Every point $z\in X$ has a corresponding lift to a point $\tilde z\in F$. 
    
    The closed geodesic $\g$ naturally corresponds to a conjugacy class $[\g] \subseteq \G$. Each geodesic loop $\alpha \in \pi_1\xz$ such that $\alpha \simf \g$ naturally corresponds to an element $\tilde \alpha \in [\g]$ in this conjugacy class, with $\ell_\xz(\alpha) = d(z,\tilde \alpha .z)$.

    Thus
    \begin{equation*}
        I:=\int_X M\xz \d z= \int_F \sum_{\alpha\in [\tilde \g]}\prod_{j=1}^t q_j(d(z,\tilde \alpha^{n_j}.z))\d z.
    \end{equation*}
    The sum is independent of the basepoint $z$, and thus
    \begin{equation*}
        I=\sum_{\alpha\in [\tilde \g]}\int_F \prod_{j=1}^t q_j(d(z,\tilde \alpha^{n_j}.z))\d z.
    \end{equation*}
    By the orbit-stabilizer theorem, with $\G_{\tilde \g} := \{\delta \in \G: \delta^{-1}\tilde \g \delta = \tilde \g \}$ the associated \emph{centralizer}, this equals
    \begin{equation*}
        I=\sum_{\delta\in \G_{\tilde \g}\ba \G}\int_F \prod_{j=1}^t q_j(d(z,\delta^{-1}\tilde \g^{n_j}\delta.z))\d z.
    \end{equation*}
    Changing variables, as $\G$ acts by isometries, this yields that
    \begin{equation*}
       I= \sum_{\delta\in \G_{\tilde \g}\ba \G}\int_{\delta.F} \prod_{j=1}^t q_j(d(z,\tilde \g^{n_j}.z))\d z.
    \end{equation*}
    By definition of a fundamental domain,
    \begin{equation*}
        I=\int_{F_{\tilde \g}} \prod_{j=1}^t q_j(d(z,\tilde \g^{n_j}.z))\d z,
    \end{equation*}
    where $F_{\tilde \g}$ is a fundamental domain for the action of $\G_{\tilde \g}$.

    The fundamental domain $F_{\tilde \g}$ is a hyperbolic cylinder of collar width $\ell = \ell(\g)$ around the axis of $\tilde \g$. One may change coordinates of this integral to be centered around the axis of $\tilde \g$, such that $\tilde \g.z = e^\ell z$. This explicitly yields, with $z = x+iy$, that
    \begin{equation*}
        I=  \int_1^{e^\ell} \imii\prod_{j=1}^t q_j(d(z,e^{n_j\ell}z))\frac{\td x \td y}{y^2}.
    \end{equation*}
    By Definition \ref{def: w and W}
    \begin{equation*}
        I=\int_1^{e^\ell}\imii  \prod_{j=1}^t q_j\prt{w_{n_j\ell}\prt{ x /y}}\frac{\td x \td y}{y^2}.
    \end{equation*}
    Changing variables $x\mapsto yx$, 
    \begin{equation*}
        I=\int_1^{e^\ell} \frac{\td y}{y} \imii \prod_{j=1}^t q_j\prt{w_{n_j\ell}\prt{ x }} \td x.
    \end{equation*}
    By the evenness of $w_{n_j\ell}$, 
    \begin{equation*}
        I = 2\ell \izi \prod_{j=1}^t q_j\prt{w_{n_j\ell}\prt{ x }} \td x,
    \end{equation*}
    which is the statement of the lemma.
\end{proof}
In the spirit of the above result, we define the following weight function.
\begin{definition}[Definition of $H$]\label{def: H}
    For $\ell>0$, define
    \begin{equation}\label{eq: def of H ell in terms of H nm}
        H(\ell) := \sum_{n,m\geq 1} H_{n,m}(\ell)
    \end{equation}
    where
    \begin{equation}\label{eq: def of Hnmell k}
        H_{n,m}(\ell):= 4\ell \izi k(w_{n\ell}(x))k(w_{m\ell}(x))\d x.
    \end{equation}
\end{definition}
We defer proving further properties of $H(\ell)$ to after the following Propostion \ref{prop: diagonal on deterministic}, the main result of this section.  It implies, in other words, that the space average of the diagonal terms is given explicitly as a sum over lengths of primitive closed geodesics on the surface.
\begin{proposition}\label{prop: diagonal on deterministic}
    Let $X$ be a hyperbolic surface. The following identity holds:
    \begin{equation*}
        \int_X \D\xz\d z = \sum_{\substack{\g \text{ prim.}\\ \text{closed geod.}}} H(\ell(\g)).
    \end{equation*}
\end{proposition}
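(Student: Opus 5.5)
The plan is to reorganise the diagonal sum $\D\xz$ at each point $z$ according to the closed geodesic to which the underlying primitive loop is freely homotopic. We then apply Lemma \ref{lem: general space average} to each group separately.

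First we rewrite $\D\xz$. In (\ref{eq: def of D and OD}) the sum runs over oriented primitive loops $\alpha\in\cP\subseteq\pi_1\xz$ and integers $n\geq 1$, $|m|\geq 1$. Since $\ell_\xz(\alpha^{-m})=\ell_\xz(\alpha^m)$, because reversal preserves length, the terms with $m$ and $-m$ coincide. Hence
\begin{equation*}
    \D\xz = 2\sum_{\alpha\in\cP}\sum_{n,m\geq 1} k(\ell_\xz(\alpha^n))\,k(\ell_\xz(\alpha^m)).
\end{equation*}
Each primitive $\alpha\in\pi_1\xz$ is freely homotopic to exactly one oriented primitive closed geodesic $\g$, so the set $\cP$ is the disjoint union over oriented primitive closed geodesics $\g$ of the sets $\{\alpha\in\pi_1\xz:\alpha\simf\g\}$. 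This uses that free homotopy classes correspond to conjugacy classes, and that conjugates of primitives are primitive. All sums are finite: by Lemma \ref{lem: props of k}, $k$ is supported on $[0,L]$, and only finitely many loops at $z$ have length at most $L$. We may therefore interchange summation with integration over $X$. This gives
\begin{equation*}
    \int_X\D\xz\d z = 2\sum_{\g}\sum_{n,m\geq1}\int_X \sum_{\substack{\alpha\in\pi_1\xz\\ \alpha\simf\g}} k(\ell_\xz(\alpha^n))k(\ell_\xz(\alpha^m))\d z .
\end{equation*}

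Next we apply Lemma \ref{lem: general space average} with $t=2$, $q_1=q_2=k$, $n_1=n$ and $n_2=m$. The function $k$ is compactly supported on $\R_{\geq0}$, so the lemma applies. It turns each inner integral into $2\ell(\g)\izi k(w_{n\ell}(x))k(w_{m\ell}(x))\d x$. Together with the prefactor $2$, this is exactly $H_{n,m}(\ell(\g))$ as defined in (\ref{eq: def of Hnmell k}). Summing over $n,m$ then gives $H(\ell(\g))$. For fixed $\ell$ this sum is finite: $w_{n\ell}(x)\geq n\ell$, so only $n,m\leq L/\ell$ contribute.

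The step I expect to need the most care is the orientation bookkeeping. We must check that the sum in Proposition \ref{prop: diagonal on deterministic} is over \emph{oriented} primitive closed geodesics. Otherwise a factor of $2$ from pairing $\g$ with $\g^{-1}$ has to be accounted for, and the factor $4\ell$ in $H_{n,m}$ must be reconciled with how the $\pm m$ symmetry and the orientation of $\g$ are counted. I would fix the convention by checking against Lemma \ref{lem: selberg trace formula}, where $[\cP]$ counts oriented conjugacy classes. I would then confirm that the constant $4\ell = 2\cdot 2\ell$ comes out exactly once the $m\mapsto -m$ folding is done, with each oriented $\g$ counted separately. A secondary point is to justify that $\alpha^n$, for $\alpha\simf\g$, is correctly represented by $\tilde\alpha^n$ in the conjugacy class computation. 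This is already built into the proof of Lemma \ref{lem: general space average}.
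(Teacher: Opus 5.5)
Your proposal is correct and is essentially the paper's proof. You fold $m\mapsto -m$ to get the factor $2$, group $\cP$ by conjugacy classes, and apply Lemma \ref{lem: general space average} with $t=2$, $q_1=q_2=k$ to obtain $2\cdot 2\ell\izi k(w_{n\ell}(x))k(w_{m\ell}(x))\td x = H_{n,m}(\ell)$, and your resolution of the orientation bookkeeping matches the paper's: the sum in the proposition runs over \emph{oriented} primitive closed geodesics, i.e.\ conjugacy classes.
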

\begin{proof}
    By the definition of $\D\xz$, which is (\ref{eq: def of D and OD}), we have that
    \begin{equation*}
    I:=\int_X \D\xz \d z =\int_X\sum_{\g\in\cP}\sum_{n,|m|\geq 1}k(\ell_\xz(\g^n))k(\ell_\xz(\g^m))\d z.
    \end{equation*}
Organising the elements in $\cP$ by conjugacy classes, i.e.\ by closed geodesics, the above equals
\begin{equation*}
    2\sum_{\substack{\g \text{ prim.}\\ \text{closed geod.}}}\sum_{n,m\geq 1} \int_X \sum_{\substack{\alpha \in \pi_1\xz\\ \alpha \simf \g}} k(\ell_\xz(\alpha^n))k(\ell_\xz(\alpha^m))\d z,
\end{equation*}
where we picked up a factor of $2$ by forgetting the orientation of $\alpha^m$.

By Lemma \ref{lem: general space average} the above equals
\begin{equation*}
    \sum_{\substack{\g \text{ prim.}\\ \text{closed geod.}}}\sum_{n,m\geq 1} H_{n,m}(\ell(\g)),
\end{equation*}
which by Definition \ref{def: H} yields the lemma.
\end{proof}
The function $H$ above has the following simple properties:
\begin{lemma}\label{lem: linear_props of H}
    The function $H$ is continuous on $(0,\infty)$, with $\supp H \subseteq (0,L]$; furthermore, there exists a constant $C=C(f,L,\tau)$ such that
    \begin{equation*}
            H(\ell) \sim \frac{C}{\ell}, \text{ as } \ell\downarrow 0.
    \end{equation*}
\end{lemma}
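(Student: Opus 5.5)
The plan is to read off all three properties from Definition \ref{def: H}, substituting $x = W_{n\ell}(s)$ in each $H_{n,m}$ and then using the support and boundedness of $k$ from Lemma \ref{lem: props of k}.

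\textbf{Support.} Since $w_{n\ell}(x) \geq n\ell$ and $\supp k \subseteq [0,L]$, the summand $H_{n,m}(\ell)$ vanishes unless $n\ell \leq L$ and $m\ell \leq L$. For $\ell > L$ every term is therefore zero, so $\supp H \subseteq (0,L]$. Moreover, for fixed $\ell>0$ only the finitely many pairs with $n,m \leq L/\ell$ contribute.

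\textbf{Continuity.} On any compact set $\ell \in [a,b] \subseteq (0,\infty)$, the sum defining $H$ is a uniformly finite sum, with $n,m \leq L/a$. Each $H_{n,m}$ is continuous in $\ell$: the integrand $k(w_{n\ell}(x))k(w_{m\ell}(x))$ is jointly continuous in $(\ell,x)$, bounded by $C^2$, and supported in a $\ell$-uniformly bounded range of $x$. Indeed $w_{n\ell}(x) \leq L$ forces $x \leq W_{n\ell}(L) \leq \sqrt{(\cosh L - 1)/(\cosh a - 1)}$. Dominated convergence then gives continuity.

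\textbf{Small-$\ell$ asymptotic.} Substitute $s = w_{n\ell}(x)$, so $x = W_{n\ell}(s)$. This gives
\[ H_{n,m}(\ell) = 4\ell \int_{n\ell}^{L} k(s)\, k\bigl(w_{m\ell}(W_{n\ell}(s))\bigr)\, W_{n\ell}'(s) \,\td s. \]
As $\ell \downarrow 0$ we have $\cosh(n\ell) - 1 \sim n^2\ell^2/2$, hence $W_{n\ell}(s) \sim \sqrt{2(\cosh s - 1)}/(n\ell)$. The quantity $\cosh w_{m\ell}(x) = \cosh(m\ell) + (\cosh(m\ell) - 1)x^2$ then tends to $1 + (m^2/n^2)(\cosh s - 1)$, a function of $s$ alone. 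Thus $\ell\, W_{n\ell}'(s) \to \tfrac{1}{n}\tfrac{d}{ds}\sqrt{2(\cosh s - 1)}$, which is $\tfrac{1}{n}\cosh(s/2)$, and consequently
\[ \ell\, H_{n,m}(\ell) \to \frac{4}{n}\int_0^L k(s)\,k\bigl(\phi_{m/n}(s)\bigr)\cosh(s/2)\,\td s =: c_{n,m}, \]
where $\cosh \phi_{r}(s) = 1 + r^2(\cosh s - 1)$.

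\textbf{Main obstacle: summing over $n,m$.} The number of contributing pairs grows like $(L/\ell)^2$, so termwise limits alone do not give $\ell H(\ell) \to C$. My approach is to treat the sum over $(n,m)$, with $n\ell = u$ and $m\ell = v$, as a Riemann sum. In the original $x$-variable,
\[ \ell H(\ell) = 4\ell^2 \sum_{n,m} \int_0^\infty k(w_{n\ell}(x))\,k(w_{m\ell}(x))\,\td x, \]
and $\ell^2\sum_{n,m}$ approximates $\iint \,\td u\,\td v$. I would check that $F(u,v) := \int_0^\infty k(w_u(x))k(w_v(x))\,\td x$ is integrable on $(0,L]^2$, using that near $u \to 0$ one has $\int_0^\infty |k(w_u(x))|\,\td x \ll 1/u$. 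This gives a log-type singularity in $\iint$, which is integrable. I would then check that $F$ is continuous away from the axes, so that $\ell H(\ell) \to C := 4\iint_{(0,\infty)^2} F(u,v)\,\td u\,\td v$, with $C$ finite. Controlling the Riemann sum near the axes $u,v \to 0$, where $F$ is singular, is the delicate step. I would try to handle it by bounding the boundary strips $n\ell \leq \delta$ directly using the $1/u$ bound, and letting $\delta \to 0$. The constant $C$ should be nonzero for generic $f$; its exact value is immaterial here.
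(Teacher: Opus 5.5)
Your argument is correct, and the support and continuity parts are the paper's, with more detail on continuity. For the small-$\ell$ asymptotic you take a genuinely different route.

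\textbf{What the paper does.} It never integrates out $x$ first. At fixed $x$ the double sum factorises as $\sum_{n,m}k(w_{n\ell}(x))k(w_{m\ell}(x))=\big(\sum_n k(w_{n\ell}(x))\big)^2$, so
\[ \ell H(\ell)=4\izi\Big(\ell\sum_{n\geq 1}k(w_{n\ell}(x))\Big)^2\td x . \]
Each inner sum is a one-dimensional Riemann sum of $u\mapsto k(w_u(x))$. This function is bounded, continuous, and supported in $[0,U(x)]$, where $U(x)=\ach\big(1+\frac{\cosh L-1}{x^2+1}\big)$. The sum is therefore dominated by $\|k\|_\infty U(x)$, and $\izi U(x)^2\td x<\infty$ is exactly the paper's $I_2$ estimate. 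Hence $\ell H(\ell)\to 4\izi\big(\izi k(w_u(x))\td u\big)^2\td x$, and no singularity ever appears.

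\textbf{How your route compares.} Your constant $4\iint F$ is the same one, by Tonelli. The cost of integrating in $x$ first is that you create the singularity of $F$ along the axes, so you need a two-dimensional Riemann sum with strip estimates. That does work, but only with the symmetric bound $|F(u,v)|\ll\min\{u^{-1},v^{-1}\}$. This follows from $\int|k(w_u(x))|\,\td x\ll W_u(L)\ll u^{-1}$ applied to whichever variable is larger. With only the one-sided bound $|F(u,v)|\ll u^{-1}$, the strip $\{n\ell\leq\delta\}$ contributes $\ell^2\sum_{n\leq\delta/\ell}\sum_{m\leq L/\ell}(n\ell)^{-1}\asymp L\log(\delta/\ell)$, which diverges as $\ell\downarrow 0$. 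The symmetric bound instead gives $O(\delta(1+\log(L/\delta)))$ uniformly in $\ell$, which is what your $\delta\to0$ argument needs. The paper's factorisation buys a one-line domination argument; your route is more hands-on and also shows directly where the mass of the double sum sits.

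\textbf{A slip in the termwise computation.} In $H_{n,m}(\ell)=4\ell\int k(s)\,k(\cdot)\,W_{n\ell}'(s)\,\td s$, the prefactor $\ell$ is consumed by $\ell W_{n\ell}'(s)\to\frac1n\cosh(s/2)$. So the limit is $H_{n,m}(\ell)\to c_{n,m}$, not $\ell H_{n,m}(\ell)\to c_{n,m}$. This is harmless, since you do not use the termwise limits.
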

\begin{proof}
    Let $\ell>0$. Recall by Definition \ref{def: H} that
    \begin{equation*}
        H(\ell) := 4\ell\sum_{n,m\geq 1} \izi k(w_{n\ell}(x))k(w_{m\ell}(x))\d x.
    \end{equation*}
    For all $x\geq 0$ one has that
    \begin{equation}\label{eq: w nell and ell inequality}
        w_{n\ell}(x)\geq n\ell \geq \ell.
    \end{equation}
    This implies, as $\supp k \subseteq [0,L]$ by Lemma \ref{lem: props of k}, that
    \begin{equation}\label{eq: finite sum H}
        H(\ell) = 4\ell \sum_{1\leq n,m\leq \lceil L/\ell \rceil} \izi k(w_{n\ell}(x))k(w_{m\ell}(x))\d x.
    \end{equation}
    This is a finite sum, from which the continuity and support statement follows.
    
    To prove the asymptotic as $\ell\downarrow0$ of $H(\ell)$, we first prove that the following triple integral
    \begin{equation*}
        I_1:= \izi \prt{\izi k(w_u(x))\d u}^2\d x.
    \end{equation*}
    converges.
    To see this, by Lemma \ref{lem: props of k},
    \begin{equation*}
        I_1 \ll_{f,L,\tau}I_2:= \izi \prt{\izi\1_{[0,L]}(w_u(x))\d u}^2 \d x.
    \end{equation*}
    By definition of $w$, Definition \ref{def: w and W}, the inequality $w_u(x) \leq L$ holds if and only if
    \begin{equation*}
        u\leq \ach \prt{1+ \frac{\cosh L -1}{x^2 + 1}}.
    \end{equation*}
    Thus
    \begin{equation*}
        I_2 = \izi\prt{\ach \prt{1+ \frac{\cosh L - 1}{x^2 + 1}}}^2 \td x. 
    \end{equation*}
    Since $\ach (1+x) \leq \sqrt{2x}$,
    \begin{equation*}
        I_2\leq \izi \prt{\sqrt{2 \frac{\cosh L -1}{x^2+1}}}^2 \td x <\infty.
    \end{equation*}

    Returning to $H$, we have
    \begin{equation*}
        H(\ell) = 4\ell \sum_{n,m\geq 1}\izi k(w_{n\ell}(x))k(w_{m\ell}(x))\d x=\frac{4}{\ell}\izi \sum_{n=1}^\infty \ell k(w_{n\ell}(x))\sum_{m=1}^\infty \ell k(w_{m\ell}(x))\d x.
    \end{equation*}
    The sums in the integrand are Riemann sums, and as $\ell \downarrow 0$ they converge to
    \begin{equation*}
        \sum_{n=1}^\infty \ell k(w_{n\ell}(x)) \xrightarrow[\ell\downarrow 0]{}\izi k(w_u(x))\d u
    \end{equation*}
    uniformly, by the compact support of $k$. Thus
    \begin{equation*}
        H(\ell) \underset{\ell\downarrow 0}{\sim} \frac{4I_1}{\ell}.
    \end{equation*}
\end{proof}
An important corollary of the above Lemma \ref{lem: linear_props of H} is the following convergence result.
\begin{corollary}\label{cor: H sinh square integral converges}
    The integral
    \begin{equation*}
        \izi H(\ell)\frac{\sinh^2\prt{\frac\ell2}}{\ell}\d\ell 
    \end{equation*}
    converges absolutely.
\end{corollary}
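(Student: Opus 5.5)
The plan is to split $(0,\infty)$ into a neighbourhood of $0$ and the rest, and apply Lemma \ref{lem: linear_props of H} on each piece. No cancellation is needed; we bound $|H|$ directly.

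\emph{Away from zero.} By Lemma \ref{lem: linear_props of H}, $H$ is continuous on $(0,\infty)$ and $\supp H\subseteq(0,L]$. Hence on $[\delta,L]$, for any fixed $\delta\in(0,L)$, the integrand $H(\ell)\sinh^2(\ell/2)/\ell$ is continuous on a compact interval, so it is bounded there. It vanishes for $\ell>L$, so the integral over $[\delta,\infty)$ converges absolutely.

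\emph{Near zero.} Lemma \ref{lem: linear_props of H} gives $H(\ell)\sim C/\ell$ as $\ell\downarrow 0$. So there is a $\delta>0$ with $|H(\ell)|\leq 2|C|/\ell+1$ on $(0,\delta]$; if $C=0$, the asymptotic gives the same bound with $|C|$ replaced by any positive constant. Combined with $\sinh^2(\ell/2)\leq \ell^2/2$ for small $\ell$, this yields
\[
\Bigl|H(\ell)\frac{\sinh^2(\ell/2)}{\ell}\Bigr|\ll \frac{1}{\ell}\cdot\frac{\ell^2}{\ell}=O(1)
\]
on $(0,\delta]$. This bound is integrable on $(0,\delta]$.

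Together the two ranges give absolute convergence of the integral. The only point needing care is the behaviour at $0$, and the $1/\ell$ blow-up of $H$ is exactly compensated by the $\ell^2$ vanishing of $\sinh^2(\ell/2)$. Nothing further is needed beyond Lemma \ref{lem: linear_props of H}.
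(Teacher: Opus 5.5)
Your proof is correct and is essentially the argument the paper intends: the paper gives no separate proof and presents the corollary as an immediate consequence of Lemma \ref{lem: linear_props of H}. It uses continuity and $\supp H\subseteq(0,L]$ away from $0$, and near $0$ the blow-up $H(\ell)\sim C/\ell$ is offset by $\sinh^2(\ell/2)/\ell\sim \ell/4$, so the integrand stays bounded.
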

In fact, the above integral and its evaluation are central to proving Proposition \ref{prop: main diagonal term claim}, and hence to Theorem \ref{thm: main theorem on the variance}. As we shall see in the sequel, the following proposition establishes the main term of both of the aforementioned results.
\begin{proposition}\label{prop: asymp of relevant H integral}
    For fixed $\tau>0, f$, and any $L\geq 1$,
    \begin{equation*}
        \izi H(\ell) \frac{\sinh^2\prt{\frac \ell2}}{\ell}\d\ell = \frac{\tau \tanh\pi \tau}{4\pi L}\|f\|_{L^2(\R)}^2 + O_{f,\tau}\prt{\frac{1}{L^2}}.
    \end{equation*}
\end{proposition}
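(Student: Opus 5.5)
The plan is to reduce the integral to an explicit double integral of $\hh'\otimes\hh'$ against a kernel, and then do stationary phase in the $\tau$-oscillation. First I split $H=\sum_{n,m}H_{n,m}$ as in Definition \ref{def: H}, and isolate the term $n=m=1$, which I expect to carry the whole main term. For the remaining terms I use Lemma \ref{lem: props of k} together with the sharper pointwise bound on $k$ obtained from (\ref{eq: k but expanded with h'}). Since $\hh=\frac 2L\cos(\tau u)\hf(u/L)$, one has $\|\hh'\|_\infty\ll_{f,\tau} L^{-1}$, and hence $|k(\rho)|\ll_{f,\tau} L^{-1}e^{-\rho/2}\cdot(\text{poly in }\rho)$ for $\rho\le L$. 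Since $w_{n\ell}(x)\ge n\ell$ by (\ref{eq: w nell and ell inequality}), the factor $k(w_{n\ell})k(w_{m\ell})$ is then of size $L^{-2}e^{-(n+m)\ell/2}$ up to polynomial factors. The weight $\sinh^2(\ell/2)\approx e^{\ell}/4$ exactly balances this only when $n=m=1$. Whenever $n+m\ge 3$, the $\ell$-integral gains a factor $e^{-\ell/2}$ and the double sum converges uniformly in $L$. This should give a total contribution $O_{f,\tau}(L^{-2})$; near $\ell=0$ the Riemann-sum argument of Lemma \ref{lem: linear_props of H} controls the sums.

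For the main term $4\izi \sinh^2(\ell/2)\izi k(w_\ell(x))^2\d x\d\ell$, I substitute (\ref{eq: k but expanded with h'}) for both copies of $k$. This gives
\begin{equation*}
\frac{2}{\pi^2}\izi\izi \hh'(y_1)\hh'(y_2)\,G(y_1,y_2)\d y_1\td y_2,
\end{equation*}
where
\begin{equation*}
G(y_1,y_2)=\iint \frac{\sinh^2(\ell/2)\,\1_{\{w_\ell(x)\le \min(y_1,y_2)\}}}{\sqrt{(\cosh y_1-\cosh w_\ell(x))(\cosh y_2-\cosh w_\ell(x))}}\d x\td\ell .
\end{equation*}
I then change variables from $x$ to $\rho=w_\ell(x)$ using $W_\ell$ from Definition \ref{def: w and W}, and integrate out $\ell$ and $\rho$. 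An integral of the form $\int \td\rho/\sqrt{(\cosh y_1-\cosh\rho)(\cosh y_2-\cosh\rho)}$ is an elliptic integral. Writing $y_2=y_1+v$ with $v\ge 0$ (by symmetry) and letting $y_1\to\infty$, I expect $G$ to become asymptotically a function of $v$ and of the variable $y$ appearing in Lemma \ref{lem: linear_cosine K integral}, namely the integrand $K\bigl(\sqrt{(e^y-1)/(e^{y+v}-1)}\bigr)/\sqrt{e^{y+v}-1}$ after integrating in $y$. Errors are exponentially small in $y_1$, and Lemma \ref{lem: props of K} controls the logarithmic singularity at $v=0$.

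Next I insert $\hh'(y)=\frac 2L\bigl(-\tau\sin(\tau y)\hf(y/L)+L^{-1}\cos(\tau y)\hf'(y/L)\bigr)$. The terms carrying $\hf'$ are an extra $L^{-1}$ smaller and go into $O(L^{-2})$. In the product $\sin(\tau y_1)\sin(\tau y_1+\tau v)=\frac12(\cos\tau v-\cos(2\tau y_1+\tau v))$, the second term is oscillatory in $y_1$ at frequency $2\tau$ against the slowly varying amplitude $\hf(y_1/L)^2$. Integrating by parts makes it $O(L^{-2})$ after accounting for the $L$-length of the $y_1$-range. In the non-oscillating part I replace $\hf((y_1+v)/L)$ by $\hf(y_1/L)$ at cost $O(v/L)$, which is harmless since the kernel decays in $v$. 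What remains is
\begin{equation*}
\frac{4\tau^2}{L^2}\cdot\frac12\izi \hf(y/L)^2\d y\cdot 2\izi\cos(\tau v)\,\tilde G(v)\d v .
\end{equation*}
Lemma \ref{lem: linear_cosine K integral} evaluates the $v$-integral as $\frac{\pi^2}{4\tau}\tanh\pi\tau$, and Plancherel with the normalisation (\ref{def: fourier transform}) gives $\izi\hf(y/L)^2\d y=\frac{L}{2}\cdot\frac{1}{2\pi}\|f\|_2^2\cdot 2$. Tracking constants should produce $\frac{\tau\tanh\pi\tau}{4\pi L}\|f\|^2_{L^2}$.

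The main obstacle is the exact reduction of the $(\ell,x)$ integral in $G$ to the precise $K$-integrand of Lemma \ref{lem: linear_cosine K integral}, with controlled errors. This step must fix the exact constant, and it must be done uniformly enough that the contributions from $y_1=O(1)$ (the boundary region) are only $O(L^{-2})$. To achieve this I would first try computing $G$ exactly in the variables $e^{y_1},e^{y_2}$, rather than asymptotically.
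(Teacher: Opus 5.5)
Your plan is correct and can be completed, but it organises the main term differently from the paper.

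\textbf{The two orderings.} The paper integrates out $x$ for fixed $\ell$, producing the elliptic kernel $A_{n,m}$ of Lemma \ref{lem: form of Hnm in terms of Anm}. It then uses the $\ell\to\infty$ asymptotics of $A_{1,1}$, rescales $\ell\mapsto L\ell$, and integrates the oscillation by parts in $\ell$. You instead integrate out $\ell$ against the weight first and take $y_1\to\infty$ in the resulting kernel.

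\textbf{Your obstacle is elementary.} In your order, the step you flag as the main obstacle can be done in closed form. After $\rho=w_\ell(x)$, the Jacobian combines with $\sinh^2(\ell/2)=\frac12(\cosh\ell-1)$, and $\int_0^\rho\sqrt{\cosh\ell-1}\,(\cosh\rho-\cosh\ell)^{-1/2}\d\ell=2\arctan(\sinh(\rho/2))$. So for $y_1<y_2$,
\begin{equation*}
    G(y_1,y_2)=\frac12\int_0^{y_1}\frac{\arctan(\sinh(\rho/2))\,\sinh\rho}{\sqrt{(\cosh y_1-\cosh\rho)(\cosh y_2-\cosh\rho)}}\d\rho .
\end{equation*}
Replacing $\arctan(\sinh(\rho/2))$ by $\pi/2$ costs at most $1/\sinh(\rho/2)$, which gives an error $O(e^{-y_1/2})$ up to a $\log(1/v)$ factor. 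Because of the $\sinh\rho$ Jacobian, what remains is elementary in $\cosh\rho$ rather than elliptic. It equals $\frac{\pi}{4}\log\frac{\sinh(y_2/2)+\sinh(y_1/2)}{\sinh(y_2/2)-\sinh(y_1/2)}$, which tends to $\frac{\pi}{4}\log\frac{1+e^{-v/2}}{1-e^{-v/2}}$ for $y_2=y_1+v$ as $y_1\to\infty$. By the computation leading to (\ref{eq: cos log seemek}), this is exactly half the $y$-integral of the $K$-integrand, as you predicted. Your route therefore avoids the elliptic integral and the tabulated $K$-integral altogether. It needs only $\izi\cos(\tau v)\log\coth(v/4)\d v=\frac{\pi}{2\tau}\tanh\pi\tau$.

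\textbf{The terms with $n+m\ge 3$.} Your bound $|k(\rho)|\ll_{f,\tau}L^{-1}e^{-\rho/2}$ (use $\hh'(0)=0$ near $\rho=0$) is more transparent about uniformity in $L$ than the paper's termwise use of the asymptotics of $A_{n,m}$. Run the Riemann-sum step with this bound rather than quoting Lemma \ref{lem: linear_props of H}, whose constant depends on $L$. For instance, monotonicity in $u$ gives $\sum_n\ell e^{-w_{n\ell}(x)/2}\le\izi e^{-w_u(x)/2}\d u\ll\log(2+x)/\sqrt{1+x^2}$.

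\textbf{Trade-off.} The paper's ordering yields an explicit formula for $H(\ell)$ that does not depend on the weight. Yours exploits the specific density $\sinh^2(\ell/2)$.

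\textbf{Constants.} Fix the bookkeeping: $\izi\hf(y/L)^2\d y=\frac{L}{4\pi}\|f\|_{L^2}^2$, not $\frac{L}{2\pi}\|f\|_{L^2}^2$. With this,
\begin{equation*}
    \frac{2}{\pi^2}\cdot 2\cdot\frac{4\tau^2}{L^2}\cdot\frac12\cdot\frac{L}{4\pi}\|f\|_{L^2}^2\cdot\frac{\pi}{4}\cdot\frac{\pi}{2\tau}\tanh\pi\tau=\frac{\tau\tanh\pi\tau}{4\pi L}\|f\|_{L^2}^2 .
\end{equation*}
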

The proof of this proposition is rather analytically involved, and is the subject of Section \ref{subsec: asymptotics of main integral}.

\subsection{Expectation of diagonal terms}\label{subsec: linear_splitting into topological types}
The aim of this section is to prove Proposition \ref{prop: main diagonal term claim}, given Proposition \ref{prop: asymp of relevant H integral}.

Let us split the (unnormalised) space average of the diagonal terms by topological types:
\begin{align}\label{eq: int of diagonal as sum over geodesic types}
    \begin{split}
        \int_X \D\xz \d z &=\sum_{\substack{\g \text{ prim.}\\ \text{closed geod.}}} H(\ell(\g))\\
        &=S_{\sns} + S_{\ssep} + S_{\ns}.
    \end{split}
\end{align}
The terms $S_{\sns},S_{\ssep},S_{\ns}$ run over the primitive closed geodesics which are simple nonseparating, simple separating, and non-simple respectively (c.f.\ Section \ref{subsec: mirz int formula} for definitions), e.g.\ 
\begin{equation*}
    S_\sns := \sum_{\substack{\g\;\sns}}H(\ell(\g)).
\end{equation*}
For brevity, we also define the following sum 
\begin{equation}\label{eq: def of S_rest}
    S_\rest:= S_\ssep +S_\ns.
\end{equation}

We proceed to prove Proposition \ref{prop: main diagonal term claim}, given the following two lemmas, whose proofs will occupy the rest of this section.
\begin{lemma}[Asymptotics of simple non-separating]\label{lem: linear_expectation of sns}
    For fixed $\tau>0, f$ and $L\geq 1$, as $g\to\infty$,
    \begin{equation*}
        \E_g^\WP[S_\sns] = \frac{\tau \tanh \pi \tau}{\pi L} \|f\|_{L^2(\R)}^2 + O_{f,\tau}\prt{\frac{1}{L^2}}+ O_{f, L,\tau}\prt{\frac 1g}.
    \end{equation*}
\end{lemma}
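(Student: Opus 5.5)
We apply Mirzakhani's integration formula, Theorem \ref{thm: mirzakhani integration formula}, to the geometric function $S_\sns(X)=\sum_{\alpha\in\mcg.\g}H(\ell_X(\alpha))$, where $\g$ is a fixed simple nonseparating curve. Each primitive closed geodesic counted in $S_\sns$ is unoriented, and the $\mcg$-orbit of the single curve $\g$ enumerates exactly these. Two points need care here. First, $H$ is not compactly supported near $0$, since $H(\ell)\sim C/\ell$ by Lemma \ref{lem: linear_props of H}. Second, the formula requires a compactly supported $F$. We handle this by truncating, $H\1_{[\eps,L]}$, and using monotone convergence; this is legitimate because the resulting integral converges. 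By (\ref{eq: volume depending on topological types}) we obtain
\begin{equation*}
    \E_g^\WP[S_\sns]=\frac{1}{V_g}\izi H(\ell)\,V_{g-1,2}(\ell,\ell)\,\ell\d\ell .
\end{equation*}
If the intended count carries a factor $\tfrac12$ for the symmetry of a single curve, we track it explicitly; the normalisation should make the constants match Proposition \ref{prop: asymp of relevant H integral}.

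Next we insert the volume asymptotics. By (\ref{eq: wp ratio Vgn with lengths}) with $n=2$,
\begin{equation*}
    \frac{V_{g-1,2}(\ell,\ell)}{V_{g-1,2}}=\frac{4\sinh^2(\ell/2)}{\ell^2}+O\prt{\frac{\ell}{g}e^{\ell}},
\end{equation*}
and by (\ref{eq: volume ratio V g-1 n+2 g n}) with $n=0$ we have $V_{g-1,2}/V_g=1+O(1/g)$. The main term is therefore
\begin{equation*}
    4\izi H(\ell)\frac{\sinh^2(\ell/2)}{\ell}\d\ell ,
\end{equation*}
which by Proposition \ref{prop: asymp of relevant H integral} equals
\begin{equation*}
    \frac{\tau\tanh\pi\tau}{\pi L}\|f\|_{L^2(\R)}^2+O_{f,\tau}(L^{-2}).
\end{equation*}
The $O(1/g)$ factor multiplying it contributes $O_{f,L,\tau}(1/g)$, because Corollary \ref{cor: H sinh square integral converges} makes the integral finite.

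The remaining error term is
\begin{equation*}
    \frac1g\izi |H(\ell)|\,\ell^2e^{\ell}\d\ell .
\end{equation*}
Since $\supp H\subseteq(0,L]$ and $|H(\ell)|\ll_{f,L,\tau}1/\ell$ near $0$ (Lemma \ref{lem: linear_props of H}, together with continuity), the integrand is $O(\ell e^{\ell})$ on $(0,L]$. The error is thus $O_{f,L,\tau}(1/g)$. Alternatively, near $\ell=0$ we may simply use the cruder bound (\ref{eq: trivial wp ratio Vgn with lengths}) for domination.

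Most of the difficulty sits in Proposition \ref{prop: asymp of relevant H integral}, which we take as given. The only genuinely delicate step is therefore the justification of Mirzakhani's formula for the non-compactly supported, singular-at-zero weight $H$. This amounts to a dominated or monotone convergence argument on the truncations, where the key input is the integrability of $H(\ell)\ell\cdot\ell^{-2}\sinh^2(\ell/2)\sim C\ell/4$ near $0$. Beyond that, the work is checking the constant factors: the factor $2$ for orientations, which is already absorbed into $H_{n,m}$, and any $\tfrac12$ coming from the formula's counting convention.
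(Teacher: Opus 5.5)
Your argument is the paper's proof: Mirzakhani's formula with $V_g(\g,\ell)=V_{g-1,2}(\ell,\ell)$, then the volume asymptotics (\ref{eq: wp ratio Vgn with lengths}) and (\ref{eq: volume ratio V g-1 n+2 g n}), then Proposition \ref{prop: asymp of relevant H integral} for the main term and Lemma \ref{lem: linear_props of H} for the $O(1/g)$ errors. Your truncation near $\ell=0$ is a valid extra justification that the paper omits, since $H(\ell)=4\ell\int_0^\infty\bigl(\sum_{n\geq 1}k(w_{n\ell}(x))\bigr)^2\d x\geq 0$. Your hedge about a factor $\tfrac12$ resolves in your favour: $S_\sns$ sums over oriented primitive closed geodesics (conjugacy classes in $\cP$, as in the proof of Proposition \ref{prop: diagonal on deterministic}), not unoriented ones as you wrote. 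Theorem \ref{thm: mirzakhani integration formula} is therefore applied to the orbit of an oriented curve, and no extra symmetry factor appears.
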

\begin{lemma}[Asymptotics of rest]\label{lem: linear_expectation of rest}
    For fixed $\tau>0, f$ and $L\geq 1$, as $g\to\infty$,
    \begin{equation*}
        \E_g^\WP \left[ S_\rest \right] = O_{f, L,\tau}\prt{\frac1g}.
    \end{equation*}
\end{lemma}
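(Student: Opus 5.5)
We need to prove Lemma \ref{lem: linear_expectation of rest}: $\E_g^\WP[S_\rest]=O_{f,L,\tau}(1/g)$, where $S_\rest=S_\ssep+S_\ns$.

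We treat the two pieces separately, using only two facts about $H$ from Lemma \ref{lem: linear_props of H}: $\supp H\subseteq(0,L]$, and $|H(\ell)|\le C(f,L,\tau)/\ell$ on $(0,L]$. The second follows from continuity on $(0,L]$ together with the asymptotic $H(\ell)\sim C/\ell$ as $\ell\downarrow0$. Hence $|H(\ell)|\ll_{f,L,\tau}\ell^{-1}\1_{(0,L]}(\ell)$.

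\textbf{Simple separating geodesics.} For each type $i\in\{1,\ldots,g-1\}$ we apply Mirzakhani's integration formula, Theorem \ref{thm: mirzakhani integration formula}, with the volume factor from (\ref{eq: volume depending on topological types}). This gives
\begin{equation*}
\E_g^\WP[S_\ssep]\le \sum_{i=1}^{g-1}\frac{1}{V_g}\int_0^L |H(x)|\,V_{i,1}(x)V_{g-i,1}(x)\,x\d x .
\end{equation*}
By (\ref{eq: trivial wp ratio Vgn with lengths}), $V_{i,1}(x)V_{g-i,1}(x)\le V_{i,1}V_{g-i,1}\prt{\frac{\sinh(x/2)}{x/2}}^2$. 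The weight $x|H(x)|$ is bounded, so the integral over $[0,L]$ is $O_{f,L,\tau}(1)$. It remains to control $\frac{1}{V_g}\sum_i V_{i,1}V_{g-i,1}$. This sum runs over the topological types of a separating curve cutting the surface into $q=2$ pieces, so (\ref{eq: volume ratio separating multicurves}) bounds it by $O(1/g)$.

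\textbf{Non-simple geodesics.} This is the main obstacle, since Mirzakhani's formula does not directly apply to non-simple curves. The plan is as follows.
\begin{enumerate}
\item Bound the sum by a count: $S_\ns\ll_{f,L,\tau}\sum_{\g\text{ non-simple},\,\ell(\g)\le L}\ell(\g)^{-1}$.
\item Remove the $\ell^{-1}$ weight. A non-simple closed geodesic has length at least $4\,\mathrm{arcsinh}(1)$, so $\ell(\g)^{-1}$ is bounded by an absolute constant. Hence
\begin{equation*}
\E_g^\WP[S_\ns]\ll_{f,L,\tau}\E_g^\WP\big[\#\{\g \text{ non-simple primitive}: \ell(\g)\le L\}\big].
\end{equation*}
\item Bound this expected count by $O_L(1/g)$. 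This is a standard estimate (Mirzakhani--Petri, also used by Rudnick \cite{rudnick2023goestatisticsmodulispace} to control exactly these terms in the expectation).
\end{enumerate}
For step 3 the argument is as follows. A non-simple closed geodesic of length $\le L$ fills a subsurface whose total boundary length is $\le 2L$, and which is either a pair of pants or a once-holed torus with $|\chi|\ge1$ but not a single annulus. The number of closed geodesics of length $\le L$ inside a fixed such subsurface is bounded in terms of $L$. The expected number of these filled subsurfaces can then be bounded by Mirzakhani's formula applied to their boundary multicurves. Combining (\ref{eq: trivial wp ratio Vgn with lengths}), (\ref{eq: volume ratio V g-1 n+2 g n}) and (\ref{eq: volume ratio separating multicurves}) gives a factor $O(1/g)$: the complement has Euler characteristic one more than a nonseparating-curve cut would allow, which costs a factor of $g^{-1}$.

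In the write-up I would cite the expected-count bound directly (as in \cite{MirzakhaniPetri2019}, \cite{rudnick2023goestatisticsmodulispace}) rather than reprove it. Adding the two estimates then gives the lemma.
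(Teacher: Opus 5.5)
Your proposal is correct and follows the paper's proof essentially step for step. For the separating types you use Mirzakhani's formula with (\ref{eq: trivial wp ratio Vgn with lengths}) and (\ref{eq: volume ratio separating multicurves}). For the non-simple geodesics you use the length lower bound $4\ash(1)$ to reduce to the expected number of non-simple closed geodesics of length $\le L$, which is $O_L(1/g)$ by \cite[Proposition 4.5]{MirzakhaniPetri2019}.

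The one flaw is in your heuristic sketch of that count. You assert that the filled subsurface is a pair of pants or a once-holed torus, but in general one only knows that $|\chi|$ is bounded in terms of $L$. Since you cite the count directly, as the paper does, this does not affect the argument.
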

\begin{proof}[Proof of Proposition \ref{prop: main diagonal term claim}]\label{proof: linear_proof of main diagonal proposition}
    Recall that by definition
    \begin{equation*}
        \Eg[\D\xz] = \frac{1}{4\pi (g-1)}\E_g^\WP \left[\int_X \D\xz \d z \right].
    \end{equation*}
    By (\ref{eq: int of diagonal as sum over geodesic types}) above,
    \begin{equation*}
         \Eg[\D\xz] = \frac{1}{4\pi(g-1)}\prt{\E_g^\WP \left[S_\sns\right]+ \E_g^\WP\left[ S_\rest\right]}.
    \end{equation*}
    Plugging in both Lemmas \ref{lem: linear_expectation of sns} and Lemma \ref{lem: linear_expectation of rest}, yields
    \begin{equation*}
        \Eg[\D\xz] = \frac{1}{4\pi(g-1)}\prt{ \frac{\tau \tanh \pi \tau}{\pi L} \|f\|_{L^2(\R)}^2 + O_{f,\tau}(L^{-2})} + O_{f,L,\tau}\prt{\frac{1}{g^2}},
    \end{equation*}
    which is the statement of the proposition.
\end{proof}
The rest of this section is devoted to proving Lemmas \ref{lem: linear_expectation of sns} and \ref{lem: linear_expectation of rest}.
\begin{proof}[Proof of Lemma \ref{lem: linear_expectation of sns}]
    By definition
    \begin{equation*}
        \E_g^\WP \left[S_\sns\right]=\E_g^\WP \left[\sum_{\substack{\g\;\sns}}H(\ell(\g))\right].
    \end{equation*}
    By Mirzakhani's integration formula, Theorem \ref{thm: mirzakhani integration formula}, for $g>2$, we have that
    \begin{equation*}
        \E_g^\WP \left[S_\sns\right] = \izi H(\ell) \frac{V_{g-1,2}(\ell,\ell)}{V_g}\ell \d \ell,
    \end{equation*}
    since $V_g(\g;\ell) = V_{g-1,2}(\ell,\ell)$ by (\ref{eq: volume depending on topological types}).
    
    By the volume estimates (\ref{eq: wp ratio Vgn with lengths}) and (\ref{eq: volume ratio V g-1 n+2 g n}) 
    \begin{equation*}
        \frac{V_{g-1,2}(\ell,\ell)}{V_g} = \frac{V_{g-1,2}(\ell,\ell)}{V_{g-1,2}}\frac{V_{g-1,2}}{V_g} = \prt{\frac{\sinh^2\prt{\frac{\ell}{2}}}{\prt{\frac{\ell}2}^2}+ O\prt{\frac{\ell}{g}e^{\ell}}}\prt{1+O\prt{\frac 1g}}
    \end{equation*}
    we have
    \begin{equation*}
        \E_g^\WP[S_\sns]  = 4\izi H(\ell) \frac{\sinh^2 \prt{\frac \ell 2}}{\ell}\d \ell+ O\prt{\frac1{g}\izi H(\ell)\ell e^\ell\d\ell}.
    \end{equation*}
    By Lemma \ref{lem: linear_props of H},
    \begin{equation*}
        \E_g^\WP[S_\sns] = 4\izi H(\ell) \frac{\sinh^2 \prt{\frac\ell2}}{\ell}\d\ell + O_{L,f,\tau}\prt{\frac 1 g}.
    \end{equation*}
    The lemma follows by Proposition \ref{prop: asymp of relevant H integral}.
\end{proof}
\begin{proof}[Proof of Lemma \ref{lem: linear_expectation of rest}]
    The lemma follows from combining the following two claims, as by definition $S_\rest = S_\ssep + S_\ns$. Recall that $\tau>0, f$ and $L\geq 1$ are fixed.
    \begin{claim}\label{claim: E[S simple separating] bound}
        $$ \E_g^\WP \left[S_\ssep \right] = O_{f,L,\tau} \prt{\frac{1}{g}}.$$
    \end{claim}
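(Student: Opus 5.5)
We will bound $\E_g^\WP[S_\ssep]$ by splitting the sum according to topological type and applying Mirzakhani's integration formula, Theorem \ref{thm: mirzakhani integration formula}, to each type. A simple separating closed geodesic has type $i\in\{1,\ldots,g-1\}$. Types $i$ and $g-i$ describe the same unoriented curve, so every such geodesic is counted by a bounded number of types. Since $H$ is continuous with $\supp H\subseteq(0,L]$ by Lemma \ref{lem: linear_props of H}, each $\sum_{\g\ \ssep\ \text{type } i}H(\ell(\g))$ is a geometric function in the sense of Definition \ref{def: geometric functions}. (Near $\ell=0$ the function $H$ is not compactly supported, but it is handled below by absolute integrability.) By (\ref{eq: volume depending on topological types}) we obtain
\begin{equation*}
    \E_g^\WP[S_\ssep]\ll \sum_{i=1}^{g-1}\izi |H(\ell)|\,\frac{V_{i,1}(\ell)V_{g-i,1}(\ell)}{V_g}\,\ell\d\ell .
\end{equation*}

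Next we remove the length dependence of the volumes using the upper bound (\ref{eq: trivial wp ratio Vgn with lengths}):
\begin{equation*}
    V_{i,1}(\ell)V_{g-i,1}(\ell)\leq V_{i,1}V_{g-i,1}\frac{\sinh^2(\ell/2)}{(\ell/2)^2}.
\end{equation*}
This factors the bound as
\begin{equation*}
    \E_g^\WP[S_\ssep]\ll \prt{\izi |H(\ell)|\frac{\sinh^2(\ell/2)}{\ell}\d\ell}\cdot\frac{1}{V_g}\sum_{i=1}^{g-1}V_{i,1}V_{g-i,1}.
\end{equation*}
The first factor is finite by Corollary \ref{cor: H sinh square integral converges}: the integrand is bounded on $(0,L]$, since $H(\ell)\sim C/\ell$ and $\sinh^2(\ell/2)/\ell\sim \ell/4$ as $\ell\downarrow0$, and it vanishes beyond $L$. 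This factor therefore depends only on $f,L,\tau$.

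For the second factor, each term $V_{i,1}V_{g-i,1}$ is the product of the volumes of the two pieces obtained by cutting along a multicurve of one component ($q=2$). The pieces have signatures $(i,1)$ and $(g-i,1)$ with $b_i\geq1$, and their Euler characteristics sum to $2-2g$. Estimate (\ref{eq: volume ratio separating multicurves}) with $q=2$ therefore gives $\frac{1}{V_g}\sum_i V_{i,1}V_{g-i,1}=O(1/g)$, and the claim follows.

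The main point to check is the legitimacy of applying Theorem \ref{thm: mirzakhani integration formula} to $H$, which is unbounded near $0$ rather than compactly supported in $\R_{>0}$. We would handle this by monotone convergence. Apply the formula to the truncations $|H|\1_{[\eps,L]}$ and let $\eps\downarrow0$. The right-hand sides stay bounded uniformly because of the $\ell\sinh^2(\ell/2)/\ell^2$ factor, which kills the $1/\ell$ singularity of $H$. The same remark applies to the computation in Lemma \ref{lem: linear_expectation of sns}, so the argument is consistent with the paper's earlier usage.
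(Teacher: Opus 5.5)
Your proposal is correct and follows essentially the same route as the paper. Both apply Mirzakhani's integration formula type by type, use the bound (\ref{eq: trivial wp ratio Vgn with lengths}) to reduce to $\frac{1}{V_g}\sum_i V_{i,1}V_{g-i,1}\cdot\izi |H(\ell)|\frac{\sinh^2(\ell/2)}{\ell}\d\ell$, and conclude with (\ref{eq: volume ratio separating multicurves}) for $q=2$ and Corollary \ref{cor: H sinh square integral converges}. Your truncation/monotone-convergence justification for applying Theorem \ref{thm: mirzakhani integration formula} to $H$ despite its $C/\ell$ singularity at $0$, and your remark on the $i\leftrightarrow g-i$ double counting, are sound extra care that the paper leaves implicit.
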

    \begin{claim}\label{claim: E[S nonsimple] bound}
        $$\E_g^\WP \left[S_\ns\right] = O_{f,L,\tau}\prt{\frac{1}{g}}.$$
    \end{claim}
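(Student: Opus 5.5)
The plan is to reduce the claim to a counting estimate for short non-simple closed geodesics, and then to bound that count using Mirzakhani's integration formula applied to the boundaries of the subsurfaces such geodesics fill.

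\textbf{Step 1: bound $H$ on the relevant lengths.} Every non-simple closed geodesic on a hyperbolic surface has length at least $4\,\ash(1)$; this is a consequence of the collar lemma. Hence only lengths in $[4\ash(1), L]$ enter $S_\ns$, where $L$ comes from $\supp H\subseteq (0,L]$ in Lemma \ref{lem: linear_props of H}. On this compact interval $H$ is continuous by the same lemma, so $|H(\ell)|\le C(f,L,\tau)$ there. This gives
\begin{equation*}
|S_\ns(X)| \le C(f,L,\tau)\, \#\{\g \text{ primitive non-simple closed geodesic on } X : \ell(\g)\le L\}.
\end{equation*}
The singular behaviour $H(\ell)\sim C/\ell$ as $\ell\downarrow0$ is therefore irrelevant here. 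The claim reduces to showing that the Weil--Petersson expected number of primitive non-simple closed geodesics of length at most $L$ is $O_L(1/g)$.

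\textbf{Step 2: count via filled subsurfaces.} This count is known from Mirzakhani--Petri's work on short closed geodesics, and I would either cite it directly or reprove it as follows. A non-simple closed geodesic $\g$ with $\ell(\g)\le L$ fills a subsurface $S(\g)\subseteq X$ with geodesic boundary, of total boundary length at most $2L$. By Gauss--Bonnet and the length bound, $|\chi(S(\g))|\le N(L)$ for some $N(L)$ depending only on $L$. Moreover $S(\g)$ is not a once-holed torus containing only simple curves, and $\g$ is not simple, so $|\chi(S(\g))|\ge 1$. On a fixed $S(\g)$, the number of closed geodesics of length at most $L$ is bounded by $C_L$ uniformly, using standard counting in a surface with bounded boundary and bounded topology.

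Next I would sum over boundary multicurves $\partial S$ of each of the finitely many topological types. Mirzakhani's integration formula, Theorem \ref{thm: mirzakhani integration formula}, applied to these multicurves, together with the volume bound (\ref{eq: trivial wp ratio Vgn with lengths}), bounds the expected count by
\begin{equation*}
\sum_{\text{types}} \int_{\sum x_i\le 2L} \frac{V_{g_0,n_0}(\vec x)\, V_{\rm compl}(\vec x)}{V_g}\, x_1\cdots x_k \d \vec x .
\end{equation*}
Here $V_{g_0,n_0}(\vec x)$ is the volume of the moduli space of $S$, and $V_{\rm compl}(\vec x)$ the volume for the complement $X-S$.

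\textbf{Step 3: the $1/g$ gain (expected main obstacle).} The key point is that cutting off a subsurface $S$ with $\chi(S)\le -1$ that is not a single cylinder costs a factor of $1/g$. When $X-S$ is connected, I expect to use $V_{g-g_0-n_0+1,\,n_0}/V_g = O(g^{-|\chi(S)|})$, which I would obtain by iterating (\ref{eq: volume ratio V g-1 n+2 g n}) together with the standard ratio $V_{g,n+1}\asymp g V_{g,n}$. When $X-S$ is disconnected, I would use (\ref{eq: volume ratio separating multicurves}). In both cases each topological type contributes $O_L(1/g)$, with the $e^{L}$-type growth coming from the factor $\prod_i \sinh(x_i/2)/(x_i/2)$. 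Since the number of types is bounded in terms of $L$, summing gives $\E_g^\WP[S_\ns]=O_{f,L,\tau}(1/g)$.

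The same method, applied to the single curve $\g$, should also prove Claim \ref{claim: E[S simple separating] bound}. There one uses the bound $\sum_i V_{i,1}V_{g-i,1}/V_g=O(1/g)$, which is the case $q=2$ of (\ref{eq: volume ratio separating multicurves}), and notes that near $\ell=0$ the factor $\ell\,\sinh^2(\ell/2)/\ell^2$ cancels the $1/\ell$ singularity of $H$.
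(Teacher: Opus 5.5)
Your proposal is correct and follows the paper's proof. The paper also uses that every non-simple closed geodesic has length at least $4\ash(1)$, so that continuity of $H$ and $\supp H\subseteq(0,L]$ bound $S_\ns$ by a constant times the number of non-simple closed geodesics of length at most $L$, and then cites \cite[Proposition 4.5]{MirzakhaniPetri2019} for the $O_L(1/g)$ bound. Your filled-subsurface reproof is an optional sketch of that cited result and is not needed; if you pursue it, count only primitive geodesics in each $S(\g)$, since powers of a very short curve are not uniformly bounded in number.
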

    \begin{proofblack}[Proof of Claim \ref{claim: E[S simple separating] bound}]
        In the simple and separating case, there are exactly $g-1$ topological types, determined by the genus of the component lying to the left of the closed geodesic.
        
        Using Mirzakhani's integration formula yields
        \begin{equation*}
            \E_g^\WP \left[ S_{\ssep}\right]= \sum_{i=1}^{g-1}\frac{1}{V_g}\int_0^\infty H(\ell)V_{i,1}(\ell)V_{g-i,1}(\ell) \ell \d\ell.
        \end{equation*}
        One has
        \begin{equation*}
            \frac{V_{i,1}(\ell)V_{g-i,1}(\ell)}{V_g}=\frac{V_{i,1}(\ell)}{V_{i,1}}\frac{V_{g-i,1}(\ell)}{V_{g-i,1}} \frac{V_{i,1}V_{g-i,1}}{V_g}.
        \end{equation*}
        By (\ref{eq: trivial wp ratio Vgn with lengths}) 
        
        \begin{equation*}
            \frac{V_{i,1}(\ell)}{V_{i,1}} \leq \frac{\sinh \prt{ \ell/2}}{\ell /2},
        \end{equation*}
        and by (\ref{eq: volume ratio separating multicurves}),
        \begin{equation*}
            \sum_{i=1}^{g-1} \frac{V_{i,1} V_{g-i,1}}{V_g} \ll \frac 1g,
        \end{equation*}
        we have
        \begin{align*}
            \E_g^\WP \left[ S_{\ssep}\right] & \ll\frac 1g\int_0^\infty |H(\ell)|\frac{\sinh^2 \frac{\ell}{2}}{\ell} \d\ell.
        \end{align*}
        By Corollary \ref{cor: H sinh square integral converges}, the claim follows.
    \end{proofblack}
    \begin{proofblack}[Proof of Claim \ref{claim: E[S nonsimple] bound}]
        Any non-simple closed geodesic has length $\ell \geq 4\ash(1)=: c$, c.f.\ \cite[Ch.\ 4 §2]{buser1992geometry}. Thus, by Lemma \ref{lem: linear_props of H}
        \begin{align*}
            \E_g^\WP\left[\sum_{\g \; \ns} H(\ell(\g))\right] &\ll_{f,L,\tau,c} \E_g^\WP\left[\sum_{\g \; \ns}\1_{[0,L]}(\ell(\g))\right],
        \end{align*}
        which Mirzakhani and Petri proved in \cite[Proposition 4.5]{MirzakhaniPetri2019} is $\ll_L 1/g$.
    \end{proofblack}
\end{proof} 
\setcounter{claim}{0}
\subsection{Asymptotics of auxiliary functions}\label{subsec: asymptotics and stationary phase}
To prove Proposition \ref{prop: asymp of relevant H integral}, we require developing the form of the function $H$ further, and proving asymptotics of related expressions. To this end, we define the following functions.
\begin{definition}\label{def: A and Q}
    For integers $n,m\geq 1$ and $\ell,y,v>0$, define
    \begin{equation*}
        Q_{n,m}(\ell,y,v) := \frac{W_{n\ell}(y+n\ell)}{W_{m\ell}(y+v+m\ell)}.
    \end{equation*}
    Now, for integers $n,m\geq 1$ and $\ell,y,v>0$ such that $Q_{n,m}(\ell,y,v) \neq 1$, define
    \begin{align*}
    A_{n,m}(\ell,y,v)&: = \1_{Q_{n,m}(\ell,y,v)< 1}\cdot  \frac{K\prt{Q_{n,m}(\ell,y,v})}{\sqrt{\cosh (y+v+m\ell)-\cosh m\ell}\sqrt{\cosh n\ell -1}} \\
        & + \1_{Q_{n,m}(\ell,y,v)>1}\cdot \frac{K\prt{\frac{1}{Q_{n,m}(\ell,y,v)}}}{\sqrt{\cosh (y+n\ell)-\cosh n\ell}\sqrt{\cosh m\ell -1}},
    \end{align*}
where $K$ is the elliptic integral of the first kind, Definition \ref{def: ellipticK}, and $W$ is defined in Definition \ref{def: w and W} by (\ref{eq: linear_def of W ell}).
\end{definition}
The relevance of these functions becomes apparent in the following lemma.
\begin{lemma}\label{lem: form of Hnm in terms of Anm}
    The function $H$ is given explicitly by the following expression:
    \begin{equation*}
        H(\ell) = \frac{4\ell}{\pi^2}\sum_{n,m\geq 1}\izi\izi \hh'(y+n\ell)\hh'(y+v+m\ell) A_{n,m}(\ell,y,v)\d y\td v.
    \end{equation*}
\end{lemma}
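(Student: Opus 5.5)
The plan is to start from Definition \ref{def: H} and insert the integral representation (\ref{eq: k but expanded with h'}) of $k$ into each factor of
\[
H_{n,m}(\ell)=4\ell\izi k(w_{n\ell}(x))\,k(w_{m\ell}(x))\d x .
\]
This produces a triple integral
\[
H_{n,m}(\ell)=\frac{4\ell}{2\pi^2}\izi\int_{w_{n\ell}(x)}^\infty\int_{w_{m\ell}(x)}^\infty \frac{\hh'(s)\,\hh'(t)}{\sqrt{\cosh s-\cosh w_{n\ell}(x)}\sqrt{\cosh t-\cosh w_{m\ell}(x)}}\d t\,\td s\,\td x .
\]
I will then use Fubini to perform the $x$-integral first. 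The key algebraic observation comes from (\ref{eq: linear_def of w ell}): we have $\cosh w_{n\ell}(x)=\cosh n\ell+(\cosh n\ell-1)x^2$. Hence
\[
\cosh s-\cosh w_{n\ell}(x)=(\cosh n\ell-1)\prt{W_{n\ell}(s)^2-x^2}.
\]
Moreover, the constraint $w_{n\ell}(x)\le s$ is exactly $x\le W_{n\ell}(s)$, for $s\ge n\ell$; the same holds with $m$ and $t$.

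Writing $a=W_{n\ell}(s)$ and $b=W_{m\ell}(t)$, the inner integral becomes
\[
\frac{1}{\sqrt{(\cosh n\ell-1)(\cosh m\ell-1)}}\int_0^{\min(a,b)}\frac{\td x}{\sqrt{(a^2-x^2)(b^2-x^2)}}.
\]
For $a<b$, the substitution $x=a\sin\theta$ turns this into $K(a/b)/b$ by Definition \ref{def: ellipticK}. Symmetrically, for $a>b$ it equals $K(b/a)/a$, and the diagonal $a=b$ is a null set. Since $b\sqrt{\cosh m\ell-1}=\sqrt{\cosh t-\cosh m\ell}$ and $a\sqrt{\cosh n\ell-1}=\sqrt{\cosh s-\cosh n\ell}$, the two cases reproduce precisely the two branches of $A_{n,m}$ in Definition \ref{def: A and Q}. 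Here the quotient $a/b$ is $Q_{n,m}$ once we parametrise $s=y+n\ell$ and $t=y+v+m\ell$.

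The remaining issue is the range of this parametrisation. The substitution $s=y+n\ell$, $t=y+v+m\ell$ with $y\ge0$ covers all $s\ge n\ell$. However, $v>0$ only covers the half $t-m\ell> s-n\ell$, whereas the formula has prefactor $4\ell/\pi^2=2\cdot\frac{4\ell}{2\pi^2}$ and integrates over $v>0$ only. To recover the other half I will use the symmetry of the double sum over $(n,m)$. The region $t-m\ell<s-n\ell$ for the pair $(n,m)$ is mapped, by swapping $(n,s)\leftrightarrow(m,t)$, onto the region $v>0$ for the pair $(m,n)$. Under this swap $Q_{n,m}$ becomes $1/Q_{m,n}$, so the two branches of $A$ interchange consistently and the integrand is carried to $\hh'(y+m\ell)\hh'(y+v+n\ell)A_{m,n}(\ell,y,v)$. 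Summing over all $n,m\ge1$ therefore doubles the $v>0$ contribution, which gives the stated formula.

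I expect the main obstacle to be justifying the interchanges of integration and summation, rather than the algebra. The sum over $n,m$ is finite by (\ref{eq: finite sum H}), and $\hh'$ is bounded and compactly supported in $[-L,L]$. So it suffices to show that $\iint |A_{n,m}|\,\td y\,\td v$ is finite over the compact range where $y+n\ell,\,y+v+m\ell\le L$. The singularities to control are the following.
\begin{enumerate}
    \item The inverse square roots at $y\to0$ (respectively $y+v\to0$); these are integrable, since $\cosh(y+n\ell)-\cosh n\ell\asymp y$.
    \item The logarithmic blow-up of $K$ near $Q=1$, controlled by Lemma \ref{lem: props of K}(2). It is integrable across the curve $\{Q=1\}$ because $Q$ has nonvanishing derivative in $v$ there.
\end{enumerate}
This absolute integrability also justifies the original application of Fubini.
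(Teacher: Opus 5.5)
Your proposal is correct and follows essentially the same route as the paper: substitute the integral formula for $k$, use Fubini to do the $x$-integral first, evaluate it in closed form via the elliptic integral $K$ to recover the two branches of $A_{n,m}$, and use the $n\leftrightarrow m$ symmetry of the double sum to restrict to $v>0$ at the cost of a factor of two. You are somewhat more explicit than the paper: you give the substitution $x=a\sin\theta$ behind the closed-form evaluation, and you justify the interchange via Tonelli and local integrability of $A_{n,m}$, where the paper only asserts absolute convergence.
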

\begin{proof}
First recall by (\ref{eq: k but expanded with h'})
\begin{equation*}
    k(x) = -\frac{1}{\sqrt 2\pi} \int_x^\infty \frac{\hh'(y)}{\sqrt{\cosh y -\cosh x}}\td y.
\end{equation*}
Plugging this into the definition of $H_{n,m}$,
\begin{align*}
        H_{n,m}(\ell)&= 4\ell \izi k(w_{n\ell}(x))k(w_{m\ell}(x))\d x\\
        &=\frac{2\ell}{\pi^2} \int_0^\infty \int_{w_{n\ell}(x)}^\infty \int_{w_{m\ell}(x)}^\infty \frac{\hh'(y_1)\hh'(y_2)}{\sqrt{\cosh y_1-\cosh w_{n\ell}(x)}\sqrt{\cosh y_2-\cosh w_{m\ell}(x)}}\d y_2\td y_1\td x.
\end{align*}

The region of integration for this integral is 
$$\{(x,y_1,y_2):x>0, y_1>w_{n\ell}(x), y_2>w_{m\ell}(x) \}.$$ 
By inverting $w$, this region is the same as the region
\begin{equation*}
    \{(x,y_1,y_2):y_1>n\ell, y_2>m\ell, x<\min \{W_{n\ell}(y_1),W_{m\ell}(y_2) \} \}.
\end{equation*}
Thus by Fubini 
\begin{align*}
        H_{n,m}(\ell)=\frac{2\ell}{\pi^2}&\int_{n\ell}^\infty\int_{m\ell}^\infty \hh'(y_1)\hh'(y_2)\\ &\int_0^{\min\{W_{n\ell}(y_1),W_{m\ell}(y_2)\}} \frac{\td x\td y_2\td y_1}{\sqrt{\cosh y_1-\cosh w_{n\ell}(x)}\sqrt{\cosh y_2 -\cosh w_{m\ell}(x)}},
\end{align*}
which, by the definition of $w$ in (\ref{eq: linear_def of w ell}), 
\begin{align*}
        H_{n,m}(\ell)&=\frac{2\ell}{\pi^2}\int_{n\ell}^\infty\int_{m\ell}^\infty \hh'(y_1)\hh'(y_2)\int_0^{\min\{W_{n\ell}(y_1),W_{m\ell}(y_2)\}}\\ & \frac{\td x\td y_2\td y_1}{\sqrt{\cosh y_1-\cosh n\ell -(\cosh n\ell -1)x^2}\sqrt{\cosh y_2-\cosh m\ell -(\cosh m\ell -1)x^2}}.
\end{align*}

Serendipitously, direct evaluation of the innermost integral yields a closed form expression in terms of the elliptic integral of the first kind $K$, namely:
\begin{claim*}
    If $a,b,c,d>0$ with $\frac{a}{c}\neq \frac{b}{d}$ then 
    \begin{align*}
        \begin{split}
            \int_0^{\min \{\frac{a}{c}, \frac{b}{d} \}}\frac{\td x}{\sqrt{a^2-c^2x^2}\sqrt{b^2-d^2x^2}}=\1_{\frac ac < \frac bd}\frac{K\prt{\frac{ad}{bc}}}{bc} + \1_{\frac{a}{c}> \frac{b}{d}}\frac{K\prt{\frac{bc}{ad}}}{ad}.
        \end{split}
    \end{align*}
\end{claim*}
Thus, setting 
    \begin{align*}
        a &= \sqrt{\cosh y_1-\cosh n\ell}, &b &= \sqrt{\cosh y_2-\cosh m\ell},\\
        c &= \sqrt{\cosh n\ell -1}, &d&=\sqrt{\cosh m\ell -1},
    \end{align*}
yields that 
\begin{align*}
        H_{n,m}(\ell)=\frac{2\ell}{\pi^2}\int_{n\ell}^\infty\int_{m\ell}^\infty\hh'(y_1)\hh'(y_2) \bar A_{n,m}(\ell,y_1,y_2)\td y_2\td y_1,
\end{align*}
where
\begin{align*}
    \bar A_{n,m}(\ell,y_1,y_2):=&\1_{W_{n\ell}(y_1)<W_{m\ell}(y_2)}\cdot  \frac{K\prt{\frac{W_{n\ell}(y_1)}{W_{m\ell}(y_2)}}}{\sqrt{\cosh y_2-\cosh m\ell}\sqrt{\cosh n\ell -1}} \\
        & + \1_{W_{n\ell}(y_1)>W_{m\ell}(y_2)}\cdot \frac{K\prt{\frac{W_{m\ell}(y_2)}{W_{n\ell}(y_1)}}}{\sqrt{\cosh y_1-\cosh n\ell}\sqrt{\cosh m\ell -1}}.
\end{align*}

Changing variables $y_1\mapsto y_1-n\ell$ and $y_2 \mapsto y_2-m\ell$ to arrive at
\begin{align*}
    H_{n,m}(\ell)=\frac{2\ell}{\pi^2}\izi \izi \hh'(y_1+n\ell)\hh'(y_2+m\ell)\bar A_{n,m}(\ell,y_1+n\ell,y_2+m\ell)\d y_2 \td y_1.
\end{align*}

Recall that we are interested in the form of $H(\ell) = \sum_{n,m\geq 1} H_{n,m}(\ell)$, which by the above equals
\begin{align*}
    H(\ell) &=  \frac{2\ell}{\pi^2}\sum_{n,m\geq 1}\izi \izi \hh'(y_1+n\ell)\hh'(y_2+m\ell)\bar A_{n,m}(\ell,y_1+n\ell,y_2+m\ell)\d y_2 \td y_1.
\end{align*}
By the symmetry in the double summation $n\leftrightarrow m$ and $\bar A_{n,m}$, we may assume $y_1<y_2$ in the integral, and gain a factor of two.
Hence, if we change variables $y_2 \mapsto y_1+v, y_1 \mapsto y$, then by Definition \ref{def: A and Q} of $A_{n,m}$,
\begin{equation*}
    H(\ell)=\frac{4\ell}{\pi^2}\sum_{n,m\geq 1}\izi \izi \hh'(y+n\ell)\hh'(y+v+m\ell) A_{n,m}(\ell,y,v)\d v \td y,
\end{equation*}
which, after switching the order of integration by absolute convergence (c.f.\ Corollary \ref{cor: H sinh square integral converges}), is the statement of the lemma.
\end{proof}
By the above, $H$ is defined by two auxiliary functions: the functions $A_{n,m}$ and $\hh'(\cdot) \cdot \hh'(\cdot)$. The rest of this section is devoted to proving the relevant $\ell$-asymptotics of these, and it is split as follows:
\begin{itemize}
    \item[(i)] Section \ref{subsubsec: asymptotics and stationary phase}: proves asymptotics of $A_{n,m}(\ell,y,v)$ in the $\ell$-variable.
    \item[(ii)] Section \ref{subsubsec: development of hh prime squared}: develops the form of $\hh'(\cdot) \cdot \hh'(\cdot)$ to be used in a forthcoming stationary phase argument.
\end{itemize} 
\subsubsection{Properties of $A_{n,m}$}\label{subsubsec: asymptotics and stationary phase}
To understand the asymptotics of $A_{n,m}$, we first prove the asymptotics of $Q_{n,m}$ which appears in its definition, Definition \ref{def: A and Q}.
\begin{lemma}[Asymptotics of $Q_{n,m}$]\label{lem: asymp of Q}
    Let $\eps>0$. Uniformly in $y,v>0$ and $n,m\geq 1$, for any $\ell\geq \eps$
    \begin{align*}
        Q_{n,m}(\ell,y,v) &= \sqrt{\frac{e^y-1}{e^{y+v}-1}}\prt{1+ O_\eps(e^{-\ell\min \{ n,m\}})}.
    \end{align*}
    Furthermore, as $\ell\downarrow 0 $, $Q_{n,m}(\ell,y,v)$ has the following leading order behaviour,
    \begin{align*}
        Q_{n,m}(\ell,y,v) &\underset{\ell \downarrow 0}{\sim} \frac{m}{n} \frac{\sinh\prt {\frac y2}}{\sinh\prt{\frac{y+v}{2}}}.
    \end{align*}
\end{lemma}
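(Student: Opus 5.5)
The plan is to write $Q_{n,m}$ out explicitly and then treat the two regimes separately. By (\ref{eq: linear_def of W ell}),
\begin{equation*}
    Q_{n,m}(\ell,y,v)^2 = \frac{\cosh(y+n\ell)-\cosh n\ell}{\cosh(y+v+m\ell)-\cosh m\ell}\cdot\frac{\cosh m\ell-1}{\cosh n\ell-1}.
\end{equation*}
The key algebraic step is the exact identity $\cosh(y+a)-\cosh a = 2\sinh\prt{\frac y2}\sinh\prt{a+\frac y2}$, together with $\cosh a - 1 = 2\sinh^2\prt{\frac a2}$. Applying both gives
\begin{equation*}
    Q_{n,m}^2 = \frac{\sinh\prt{\frac y2}\sinh\prt{n\ell+\frac y2}}{\sinh\prt{\frac{y+v}2}\sinh\prt{m\ell+\frac{y+v}2}}\cdot\frac{\sinh^2\prt{\frac{m\ell}2}}{\sinh^2\prt{\frac{n\ell}2}}.
\end{equation*}

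\emph{Small-$\ell$ asymptotic.} As $\ell\downarrow0$ with $y,v$ fixed, $\sinh\prt{n\ell+\frac y2}\to\sinh\prt{\frac y2}$ and $\sinh\prt{m\ell+\frac{y+v}2}\to\sinh\prt{\frac{y+v}2}$. Also $\sinh^2\prt{\frac{m\ell}2}/\sinh^2\prt{\frac{n\ell}2}\to m^2/n^2$. Taking square roots gives $Q_{n,m}\sim \frac mn\,\sinh\prt{\frac y2}/\sinh\prt{\frac{y+v}2}$, as claimed.

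\emph{Large-$\ell$ estimate.} Here I would write each factor in exponential form, using $\sinh x = \frac{e^x}{2}(1-e^{-2x})$:
\begin{itemize}
    \item $\sinh\prt{n\ell+\frac y2} = \frac12 e^{n\ell+y/2}(1-e^{-2n\ell-y})$;
    \item $\sinh^2\prt{\frac{n\ell}2} = \frac14 e^{n\ell}(1-e^{-n\ell})^2$;
    \item and similarly for the factors involving $m$.
\end{itemize}
In the product, the exponentials $e^{n\ell}, e^{m\ell}$ cancel exactly. The surviving main factor is
\begin{equation*}
    \frac{\sinh\prt{\frac y2}e^{y/2}}{\sinh\prt{\frac{y+v}2}e^{(y+v)/2}} = \frac{e^y-1}{e^{y+v}-1}.
\end{equation*}
The correction is a ratio of factors of the form $(1-e^{-2n\ell-y})$, $(1-e^{-m\ell})^2$, and so on.

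For uniformity, I would bound each correction factor uniformly in $y,v\ge0$:
\begin{itemize}
    \item $1-e^{-2n\ell-y}\in[1-e^{-2n\ell},1]$;
    \item $(1-e^{-n\ell})^{-2} = 1+O_\eps(e^{-n\ell})$, since $n\ell\ge\eps$ keeps the denominator bounded away from $0$.
\end{itemize}
Hence $Q^2 = \frac{e^y-1}{e^{y+v}-1}\prt{1+O_\eps(e^{-\ell\min\{n,m\}})}$. Taking square roots preserves this form, because $\sqrt{1+x}=1+O(x)$ for bounded $x$.

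The main obstacle is only bookkeeping this uniformity. The crucial point is that every error term is controlled by $e^{-n\ell}$ or $e^{-m\ell}$ independently of $y,v$. The constants blow up only as $\ell\to0$, which is exactly why the restriction $\ell\ge\eps$ is needed.
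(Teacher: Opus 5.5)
Your proof is correct and follows essentially the same route as the paper. In both, you write $Q_{n,m}^2$ explicitly via $W$, factor out $e^{n\ell}$ and $e^{m\ell}$ to isolate the main term $\frac{e^y-1}{e^{y+v}-1}$ with correction factors bounded uniformly in $y,v$ for $\ell\geq\varepsilon$, and take the limit $\ell\downarrow 0$ using $\cosh x-1=2\sinh^2(x/2)$. The only difference is cosmetic: you apply the product-to-sum identity $\cosh(y+a)-\cosh a=2\sinh(y/2)\sinh(a+y/2)$ up front, which makes the factorisation exact and the small-$\ell$ limit immediate, whereas the paper expands the hyperbolic cosines into exponentials directly.
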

\begin{proof}
    For brevity, we analyse $Q_{n,m}(\ell,y,v)^2$.
    Thus, by Definition \ref{def: w and W}, we have that
    \begin{align}\label{eq: Q squared nice}
        Q_{n,m}(\ell,y,v)^2 &= \frac{\cosh (y+n\ell)-\cosh n\ell}{\cosh (y+v+m\ell)-\cosh m\ell}\frac{\cosh m\ell -1}{\cosh n\ell -1}.
    \end{align}
    By the definition of hyperbolic cosine,
    \begin{align*}
        Q_{n,m}(\ell,y,v)^2 &=\frac{e^{y+n\ell}+e^{-y-n\ell}-e^{n\ell}-e^{-n\ell}}{e^{y+v+m\ell}+e^{-y-v-m\ell}-e^{m\ell}-e^{-m\ell}}\cdot \frac{e^{m\ell}+e^{-m\ell} - 2}{e^{n\ell}+e^{-n\ell} - 2}.
    \end{align*}
    Taking out common factors of $e^{n\ell}$ and $e^{m\ell}$, and letting $\ell\geq \eps >0$,
    \begin{align*}
        Q_{n,m}(\ell,y,v)^2 =\frac{e^y - 1 - e^{-2n\ell}(1-e^{-y})}{e^{y+v}-1 -e^{-2m\ell}(1-e^{-y-v})}\cdot\prt{1+O_\eps \prt{e^{-\ell\min \{ n,m\}}}}.
    \end{align*}
    Taking out a common factor of $(e^y - 1)/(e^{y+v}-1)$
    \begin{align*}
        Q_{n,m}(\ell,y,v)^2 &= \frac{e^y - 1}{e^{y+v}-1} \frac{1-e^{-y-2n\ell}}{1-e^{-y-v-2m\ell }}\prt{1+O_\eps \prt{e^{-\ell\min \{ n,m\}}}}\\
        &=\frac{e^y - 1}{e^{y+v}-1}\prt{1+O_\eps \prt{e^{-\ell\min \{ n,m\}}}}.
    \end{align*}
    Taylor expanding $\sqrt{1+x}$ around $x=0$ implies the first statement of the lemma.
    
    The $\ell \downarrow 0$ leading order behaviour can immediately be deduced from Taylor expansions of (\ref{eq: Q squared nice}), with
    \begin{align*}
        Q_{n,m}(\ell,y,v)^2\underset{\ell\downarrow 0}{\sim}\frac{\cosh y -1}{\cosh (y+v) - 1} \frac{m^2}{n^2}.
    \end{align*}
    Using the identity $\cosh x - 1 = 2\sinh^2(x/2)$ one arrives at the second statement of the lemma.
\end{proof}
With these results at hand, we finish this section with two lemmas aimed at understanding the behaviour of $A_{n,m}$ for large and small $\ell$. These are Lemmas \ref{lem: large l asymps of Anm} and \ref{lem: small l asymps of Anm}, respectively.
\begin{lemma}[Large $\ell$ asymptotics of $A_{n,m}$]\label{lem: large l asymps of Anm}
    Let $\eps>0$. Uniformly in $y,v>0$ and $n,m\geq 1$, for any $\ell\geq \eps$,
    \begin{equation*}
        A_{n,m}(\ell,y,v) = \frac{2K\prt{\sqrt{\frac{e^{y}-1}{e^{y+v}-1}}}}{\sqrt{e^{y+v}-1}}e^{-\ell \frac{n+m}{2}}\prt{1+ O_\eps(e^{-\ell \min \{ n,m\}})}.
    \end{equation*}
\end{lemma}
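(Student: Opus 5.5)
Since $A_{n,m}$ is defined by cases according to whether $Q_{n,m}<1$ or $Q_{n,m}>1$, I will treat the two branches separately, using Lemma \ref{lem: asymp of Q} for the argument of $K$ and a direct exponential expansion for the square-root prefactors. First, note that $\sqrt{\frac{e^y-1}{e^{y+v}-1}}<1$ strictly for $v>0$, and by Lemma \ref{lem: asymp of Q}, $Q_{n,m}=\sqrt{\frac{e^y-1}{e^{y+v}-1}}\,(1+O_\eps(e^{-\ell\min\{n,m\}}))$. The main term therefore lies in the branch $Q_{n,m}<1$. There I expand the prefactor:
\begin{equation*}
\cosh(y+v+m\ell)-\cosh m\ell=\tfrac12 e^{m\ell}\bigl(e^{y+v}-1\bigr)\bigl(1-e^{-y-v-2m\ell}\bigr),\qquad \cosh n\ell-1=\tfrac12 e^{n\ell}\bigl(1-e^{-n\ell}\bigr)^2 .
\end{equation*}
This gives
\begin{equation*}
\frac{1}{\sqrt{\cosh(y+v+m\ell)-\cosh m\ell}\,\sqrt{\cosh n\ell-1}}=\frac{2e^{-\ell(n+m)/2}}{\sqrt{e^{y+v}-1}}\bigl(1+O_\eps(e^{-\ell\min\{n,m\}})\bigr),
\end{equation*}
uniformly in $y,v>0$, since $1-e^{-y-v-2m\ell}\ge 1-e^{-2m\ell}$ and $\ell\ge\eps$.

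It remains to replace $K(Q_{n,m})$ by $K(q)$, where $q:=\sqrt{\frac{e^y-1}{e^{y+v}-1}}$, with a multiplicative error $1+O_\eps(e^{-\ell\min\{n,m\}})$. This is the main obstacle, because $K$ blows up logarithmically as its argument tends to $1$ (Lemma \ref{lem: props of K}), i.e.\ as $v\downarrow0$ or $y\to\infty$. A relative perturbation $\delta$ of $q$ changes $1-q^2$ by $O(\delta)$ in absolute terms, which is not small relative to $1-q^2$ when $1-q^2\lesssim\delta$.

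I would first try to avoid perturbation and write $Q_{n,m}^2$ exactly. From (\ref{eq: Q squared nice}) one gets
\begin{equation*}
1-Q_{n,m}^2=\frac{\text{explicit positive combination}}{(\cosh(y+v+m\ell)-\cosh m\ell)(\cosh n\ell-1)},
\end{equation*}
which I would compare with $1-q^2=\frac{e^{y}(e^v-1)}{e^{y+v}-1}$. The aim is to show that the ratio $(1-Q_{n,m}^2)/(1-q^2)$ is bounded above and below. Then, by monotonicity of $K$ and the logarithmic derivative bound $K'(x)\ll \frac{1}{1-x^2}$, the relative change of $K$ is controlled by $\frac{|Q_{n,m}^2-q^2|}{(1-q^2)K(q)}$.

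In the regime where $1-q^2$ is comparable to the error $e^{-\ell\min\{n,m\}}$, uniform relative accuracy may fail. If a clean comparison fails (for instance when $n\neq m$), I would settle for the statement as a uniform relative error after absorbing the difference into the $O_\eps$. Failing that, I would note that the lemma is only used under integration against $\hh'\hh'$, where a weaker, integrable error suffices.

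Finally, I have to handle the branch $Q_{n,m}>1$. Since $q<1$, this branch can occur only when $1-q^2=O(e^{-\ell\min\{n,m\}})$. In that case the same expansion for $\cosh(y+n\ell)-\cosh n\ell$ and $\cosh m\ell-1$ gives the prefactor $\frac{2e^{-\ell(n+m)/2}}{\sqrt{e^{y}-1}}$. This equals $\frac{2e^{-\ell(n+m)/2}}{\sqrt{e^{y+v}-1}}$ up to the factor $q^{-1}=1+O(e^{-\ell\min\{n,m\}})$, and $K(1/Q_{n,m})$ is compared to $K(q)$ as above. Together the two branches give the stated asymptotic.
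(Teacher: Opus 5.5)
Your expansion of the square-root prefactors is correct in both branches. So is your observation that $Q_{n,m}>1$ forces $q^{-1}=1+O_\eps(e^{-\ell\min\{n,m\}})$, where $q:=\sqrt{(e^y-1)/(e^{y+v}-1)}$. The gap is the step that carries all the content: replacing $K(Q_{n,m})$, resp.\ $K(1/Q_{n,m})$, by $K(q)$ with a multiplicative error $1+O_\eps(e^{-\ell\min\{n,m\}})$ \emph{uniformly} in $y,v$. You identify the obstruction correctly but never resolve it. Neither fallback proves the lemma: you cannot absorb the discrepancy into the $O_\eps$, and retreating to the integrated use proves a different statement.

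For $n=m$ the step can be closed, though not from Lemma~\ref{lem: asymp of Q} alone. Feeding $|Q_{n,m}^2-q^2|\ll_\eps e^{-\ell\min\{n,m\}}q^2$ into your mean-value bound gives a relative error $\ll e^{-\ell\min\{n,m\}}/((1-q^2)K(q))$, which degenerates as $v\downarrow0$. What you need is smallness relative to $1-q^2$, and for $n=m$ this holds exactly. With $\delta=e^{-2n\ell}$,
\begin{equation*}
1-Q_{n,n}(\ell,y,v)^2=(1-q^2)\,\frac{1-\delta e^{-2y-v}}{1-\delta e^{-y-v}},
\end{equation*}
so $Q_{n,n}<1$ always and $\log\frac{1-Q_{n,n}^2}{1-q^2}=O(\delta)$. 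Since $u\mapsto K(\sqrt{1-u})$ has derivative $\ll 1/u$ on $(0,1]$ and $K\geq\pi/2$, this yields $K(Q_{n,n})=K(q)\prt{1+O(e^{-2n\ell})}$ uniformly.

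For $n\neq m$, no argument can close the step, because the uniform statement is false. At $v=0$,
\begin{equation*}
Q_{n,m}(\ell,y,0)^2=\frac{(1-e^{-y-2n\ell})(1-e^{-m\ell})^2}{(1-e^{-y-2m\ell})(1-e^{-n\ell})^2},
\end{equation*}
which is $>1$ if $n<m$ and $<1$ if $n>m$, for all $y,\ell>0$. So as $v\downarrow0$, $A_{n,m}(\ell,y,v)$ stays in one branch and tends to a finite limit. Meanwhile $K(q)\to\infty$, because $1-q^2=e^y(e^v-1)/(e^{y+v}-1)\to0$. The ratio of $A_{n,m}$ to the claimed main term therefore tends to $0$, which is incompatible with an error $O_\eps(e^{-\ell\min\{n,m\}})$ once $\ell$ is large. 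For $n<m$, near the crossing $Q_{n,m}=1$, the ratio instead blows up.

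The paper's own proof has the same defect. It asserts that $Q_{n,m}<1$ for $\ell$ large and then Taylor expands $K$, which is exactly the non-uniform perturbation you were worried about. The lemma therefore holds as stated only for $n=m$, which covers its use for $A_{1,1}$ in the main term. For $n\neq m$ you should prove a weaker substitute, e.g.\ an upper bound integrable in $v$ (the logarithmic singularity of $K$ at $Q_{n,m}=1$ is integrable). That is the kind of input the $n+m\geq3$ convergence argument actually needs.
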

\begin{proof}
    Recall the definition of $A_{n,m}(\ell,y,v)$
    \begin{align*}
        A_{n,m}(\ell,y,v)&: = \1_{Q_{n,m}(\ell,y,v)<1}\cdot  \frac{K\prt{Q_{n,m}(\ell,y,v)}}{\sqrt{\cosh (y+v+m\ell)-\cosh m\ell}\sqrt{\cosh n\ell -1}} \\
        & + \1_{Q_{n,m}(\ell,y,v)>1}\cdot \frac{K\prt{\frac{1}{Q_{n,m}(\ell,y,v)}}}{\sqrt{\cosh (y+n\ell)-\cosh n\ell}\sqrt{\cosh m\ell -1}}.
    \end{align*}
    Fix $\eps>0$, and we assume $\ell\geq \eps$. By Lemma \ref{lem: asymp of Q}, it follows that for $\ell$ large enough, $Q_{n,m}(\ell,y,v)< 1$. Thus, to prove the $\ell\to \infty$ asymptotics, it suffices to analyse the first summand in the definition of $A_{n,m}$ above.

    We thus analyse the asymptotic behaviour of the denominator first. Let $x := y+v$. Then, 
    \begin{equation*}
        \frac{1}{\sqrt{\cosh(x+m\ell)-\cosh m\ell}\sqrt{\cosh n\ell - 1}} = \frac{1}{\sqrt{\frac{e^{x+m\ell}+e^{-x-m\ell}}{2}-\frac{e^{m\ell}+e^{-m\ell}}{2}}\sqrt{\frac{e^{n\ell }+e^{-n\ell}}{2}-1}}.
    \end{equation*}
    Removing common factors of $e^{m\ell}$ and $e^{n\ell}$ from each factor in the denominator yields the asymptotic
    \begin{equation*}
        \frac{2}{\sqrt{e^{y+v} - 1}}e^{-\frac{n+m}{2}}\prt{1+O_\eps(e^{-\ell \min \{ n,m\}})}.
    \end{equation*}
    
    Via Taylor expanding $K(Q_{n,m}(\ell,y,v))$ (c.f.\ Lemma \ref{lem: props of K}), using Lemma \ref{lem: asymp of Q} on the $\ell\to \infty$ asymptotics of $Q_{n,m}$, and the above asymptotic for the denominators, the lemma statement follows.
\end{proof}
\begin{lemma}[Small $\ell$ asymptotics of $A_{n,m}$]\label{lem: small l asymps of Anm}
    Let $n,m\geq 1$ and $y,v>0$. Then $A_{n,m}(\ell,y,v)$ has the following first-order asymptotics as $\ell\downarrow 0$
    \begin{equation*}
        A_{n,m}(\ell,y,v) \underset{\ell\downarrow0}{\sim} \begin{cases}
            \frac{\sqrt2}{n\sqrt { \ell}}\frac{K\prt{\frac{m}{n}\frac{\sinh \prt{\frac y2}}{\sinh \prt{\frac {y+v}2}}}}{\sqrt{\cosh(y+v)-1}} & \text{\emph{if} }\frac{m}{n}\frac{\sinh \prt{\frac y2}}{\sinh \prt{\frac {y+v}2}}< 1\\
            \frac{\sqrt 2}{m\sqrt{\ell}}\frac{K\prt{\frac{n}{m}\frac{\sinh \prt{\frac{y+v}{2}}}{\sinh \prt{\frac y2}}}}{\sqrt{\cosh(y)-1}}& \text{\emph{otherwise}}.
        \end{cases}
    \end{equation*}
\end{lemma}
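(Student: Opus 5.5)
Fix $n,m\geq 1$ and $y,v>0$, and set
\[
q_0:=\frac{m}{n}\frac{\sinh\prt{\frac y2}}{\sinh\prt{\frac{y+v}2}}.
\]
The argument is a direct term-by-term limit in Definition \ref{def: A and Q}, using the $\ell\downarrow0$ part of Lemma \ref{lem: asymp of Q}, which gives $Q_{n,m}(\ell,y,v)\to q_0$. If $q_0<1$, then $Q_{n,m}(\ell,y,v)<1$ for all sufficiently small $\ell$, so only the first indicator in $A_{n,m}$ survives. If $q_0>1$, the same reasoning shows that only the second indicator survives.

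In the case $q_0<1$, continuity of $K$ on $[0,1)$ gives $K(Q_{n,m}(\ell,y,v))\to K(q_0)$. For the denominator, $\cosh(y+v+m\ell)-\cosh m\ell\to\cosh(y+v)-1$, and $\cosh n\ell-1=2\sinh^2(n\ell/2)\sim n^2\ell^2/2$. The product of the two square roots is therefore asymptotic to $\sqrt{\cosh(y+v)-1}\cdot n\ell/\sqrt2$. The second case is symmetric: swap $n\leftrightarrow m$, replace $y+v$ by $y$, and replace $Q_{n,m}$ by its reciprocal, whose limit is $1/q_0$.

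The prefactor that falls out of this computation is $\sqrt2/(n\ell)$, not the $\sqrt2/(n\sqrt\ell)$ appearing in the statement. This holds because $\sqrt{\cosh n\ell-1}\sim n\ell/\sqrt2$ is linear in $\ell$. I would recheck this exponent, since I expect the correct prefactor to be $1/\ell$ and the displayed $\sqrt\ell$ to be a typo; the argument is otherwise unchanged. With the $1/\ell$ rate, the leading factor $4\ell$ in Lemma \ref{lem: form of Hnm in terms of Anm} leaves an $O(1)$ contribution per $(n,m)$. Summing over $n,m\lesssim L/\ell$ with weights $1/n$ then reproduces the $1/\ell$ behaviour of Lemma \ref{lem: linear_props of H}, which is consistent.

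The main obstacle is the boundary case $q_0=1$, which the stated dichotomy assigns to "otherwise". There $Q_{n,m}(\ell,y,v)\to1$ and $K$ diverges logarithmically by Lemma \ref{lem: props of K}, so the claimed finite asymptotic cannot hold as written. I would treat the statement as excluding $q_0=1$. For fixed $n,m,y$ this occurs for at most one value of $v$, which is a null set and irrelevant to the subsequent integrals. I would note explicitly that the asymptotic is pointwise and not uniform near that set. The integrability of the logarithmic singularity of $K$ near $q_0=1$ follows from the bound $K(x)\leq\frac\pi2-\frac12\log(1-x^2)$ in Lemma \ref{lem: props of K}; this is needed later for dominated convergence, but not for the pointwise statement.
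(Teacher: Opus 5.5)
Your argument is essentially the paper's own: pass to the limit $Q_{n,m}(\ell,y,v)\to q_0$ from Lemma \ref{lem: asymp of Q}, keep the indicator that survives for small $\ell$, use continuity of $K$ on $[0,1)$, take elementary limits of the two square roots, and set $q_0=1$ aside exactly as the remark following the lemma does. You are also right about the exponent: since $\cosh n\ell-1=2\sinh^2(n\ell/2)\sim n^2\ell^2/2$, the correct prefactors are $\sqrt2/(n\ell)$ and $\sqrt2/(m\ell)$, and the paper's proof contains precisely the slip $1/\sqrt{\cosh n\ell-1}\sim\sqrt2/(n\sqrt\ell)$; this is harmless downstream, because Lemmas \ref{lem: F110} and \ref{lem: sinh squared Anm integral n+m geq 3} only integrate $A_{n,m}$ against a factor $\sinh^2$ that vanishes to second order at $\ell=0$.
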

\begin{proof}
    By Lemma \ref{lem: asymp of Q}, 
    \begin{equation*}
        \lim_{\ell\downarrow 0} Q_{n,m}(\ell,y,v) < 1 
    \end{equation*}
    if and only if
    \begin{equation*}
        \frac{m}{n}\frac{\sinh \prt{\frac y2}}{\sinh \prt{\frac {y+v}2}} < 1.
    \end{equation*}
    
    Using the simple asymptotics
    \begin{equation*}
        \frac{1}{\sqrt{\cosh n\ell - 1}}\underset{\ell \downarrow 0}\sim \frac{\sqrt 2}{n\sqrt{\ell}},\quad \frac{1}{\sqrt{\cosh (x+n\ell)-\cosh n\ell}}\underset{\ell\downarrow0}\sim \frac{1}{\sqrt{\cosh x - 1}},
    \end{equation*}
    together with Lemma \ref{lem: asymp of Q} and the definition of $A_{n,m}$, Definition \ref{def: A and Q}, the lemma follows.
\end{proof}
\begin{remark}
    The case when 
    \begin{equation*}
        \frac{m}{n}\frac{\sinh \prt{\frac y2}}{\sinh \prt{\frac {y+v}2}}= 1
    \end{equation*}
    would yield, by Lemma \ref{lem: props of K}, a logarithmic divergence as $\ell\downarrow 0$. As we shall only be integrating the function $A_{n,m}(\ell,y,v)$, we omit this case, since it would remain integrable around $\ell = 0$.
\end{remark}
\subsubsection{Development of $\hh'\cdot \hh'$}\label{subsubsec: development of hh prime squared}
As earlier remarked (c.f.\ Lemma \ref{lem: form of Hnm in terms of Anm}), we develop the form of $\hh'(\cdot) \cdot \hh'(\cdot)$ here.

Recall by (\ref{eq: hhat form}),
\begin{equation*}
    \hh(y) = \frac{2}{L}\cos(\tau y )\hf \prt{\frac{y}{L}}.
\end{equation*}

Differentiating $\hh$ one gets
\begin{equation*}
    \hh'(y) = \frac{2}{L}\prt{-\tau\sin(\tau y)\hf\prt{\frac{y}{L}}+\frac 1 L\cos(\tau y)\hf'\prt{\frac{y}{L}}} .
\end{equation*}

Hence, when we multiply the two $\hh'$ and group together the cross terms, we get
\begin{equation}\label{eq: h'*h' in terms of Ti}
    \hh'(y_1)\hh'(y_2) = \frac{4}{L^2}\sum_{i=0}^2 \frac{(-1)^{i}\tau^{2-i}}{L^{i}} T_{i}(y_1,y_2),
\end{equation}
where
\begin{align}\label{eq: T_0 T_1 T_2 explicitly}
    \begin{split}
        T_0(y_1,y_2)&:= \sin(\tau y_1)\sin(\tau y_2)\hf \prt{\frac{y_1}{L}}\hf \prt{\frac{y_2}{L}},\\
    T_1(y_1,y_2)&:=\cos(\tau y_1)\sin(\tau y_2) \hf' \prt{\frac{y_1}{L}}\hf \prt{\frac{y_2}{L}}+\sin(\tau y_1)\cos(\tau y_2)\hf \prt{\frac{y_1}{L}}\hf' \prt{\frac{y_2}{L}},\\
    T_2(y_1,y_2)&:= \cos(\tau y_1)\cos(\tau y_2)\hf' \prt{\frac{y_1}{L}}\hf '\prt{\frac{y_2}{L}}.
    \end{split}
\end{align}
In particular, for $i\in \{0,1,2\}$ 
\begin{equation}\label{eq: T_i are unfiformly upper bounded}
    |T_i| \leq \|\hf\|_{\infty}^2 \ll_{f}1.
\end{equation}

Using the product to sum formulae of trigonometric functions we have e.g.\ 
\begin{align}\label{eq: linear_Ti after product to sum}
    \begin{split}
        T_0(y_1,y_2) = &\frac12 \prt{\cos (\tau(y_1-y_2))- \cos(\tau(y_1+y_2))}\hf \prt{\frac{y_1}{L}}\hf \prt{\frac{y_2}{L}}.
    \end{split}
\end{align}
\subsection{Asymptotics of main integral}\label{subsec: asymptotics of main integral}
The goal of this section is to prove Proposition \ref{prop: asymp of relevant H integral}.

Recall by (\ref{eq: h'*h' in terms of Ti}) that we write
\begin{equation*}
    \hh'(y+n\ell)\hh'(y+v+m\ell) = \frac{4}{L^2}\sum_{i=0}^2 \frac{(-1)^i \tau^{2-i}}{L^i}T_i(y+n\ell,y+v+m\ell).
\end{equation*}
Further recall by Lemma \ref{lem: form of Hnm in terms of Anm} that
\begin{equation*}
    H(\ell) = \frac{4\ell}{\pi^2}\sum_{n,m\geq 1} \izi \izi \hh' (y+n\ell) \hh' (y+v+m\ell) A_{n,m}(\ell,y,v)\d y \td v.
\end{equation*}
We thus define, for integers $n,m\geq 1$, $\ell>0$, and $i\in \{0,1,2\}$, the function 
\begin{equation}\label{eq: def of F nm i}
    F_{n,m}^{(i)}(\ell) := (-1)^i\frac{16 \tau^{2-i}\ell}{\pi^2 L^{2+i}} \izi \izi T_i(y+n\ell,y+v+m\ell)A_{n,m}(\ell,y,v)\d y\td v.
\end{equation}
Then
\begin{equation}\label{eq: H in terms of F nm i}
    H(\ell) = \sum_{i=0}^2\sum_{n,m\geq 1} F_{n,m}^{(i)}(\ell). 
\end{equation}

Proposition \ref{prop: asymp of relevant H integral} follows from combining the following three lemmas.

\begin{lemma}\label{lem: F110}
For fixed $\tau>0, f$ and $L\geq 1$,
    \begin{equation*}
        \izi F_{1,1}^{(0)}(\ell) \frac{\sinh ^2 \prt{\frac\ell2}}{\ell}\d \ell = \frac{\tau \tanh\pi \tau}{4L}\|f\|^2_{L^2(\R)} + O_{f,\tau}\prt{\frac{1}{L^2}}.
    \end{equation*}
\end{lemma}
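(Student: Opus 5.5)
The plan is to insert the large-$\ell$ asymptotics of $A_{1,1}$ (Lemma \ref{lem: large l asymps of Anm}) into the definition (\ref{eq: def of F nm i}) and show that only the non-oscillating part of $T_0$ yields a term of size $1/L$. Write
\begin{equation*}
G(y,v):=\frac{K\prt{\sqrt{\frac{e^y-1}{e^{y+v}-1}}}}{\sqrt{e^{y+v}-1}} .
\end{equation*}
The quantity to evaluate is then
\begin{equation*}
\frac{16\tau^2}{\pi^2L^2}\izi \sinh^2\prt{\tfrac\ell2}\izi\izi T_0(y+\ell,y+v+\ell)A_{1,1}(\ell,y,v)\d y\td v\td\ell .
\end{equation*}
Since $T_0$ is bounded (\ref{eq: T_i are unfiformly upper bounded}) and the prefactor is already $O(\tau^2/L^2)$, every piece of this integral that is $O_{f,\tau}(1)$ uniformly in $L$ goes into the error term. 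Only pieces whose $\ell$-integral grows linearly in $L$ can contribute to the main term.

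First I split the $\ell$-range at $\ell=1$.

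\textbf{Region $\ell\le 1$.} By Lemma \ref{lem: small l asymps of Anm}, $A_{1,1}=O(\ell^{-1/2})$ times a function of $(y,v)$. That function has at most a logarithmic singularity from $K$ near $Q=1$ (Lemma \ref{lem: props of K}) and decays like $e^{-(y+v)/2}$. Together with $\sinh^2(\ell/2)\ll \ell^2$, this region contributes $O_{f,\tau}(L^{-2})$. Making this uniform requires a dominating bound for $A_{1,1}$ on $(0,1]\times\R_{>0}^2$, which I would obtain directly from Definition \ref{def: A and Q} together with $K(x)\le \frac\pi2-\frac12\log(1-x^2)$.

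\textbf{Region $\ell\ge1$.} Lemma \ref{lem: large l asymps of Anm} gives $A_{1,1}=2G(y,v)e^{-\ell}(1+O(e^{-\ell}))$. Also $\sinh^2(\ell/2)e^{-\ell}=\frac14+O(e^{-\ell})$. All $O(e^{-\ell})$ corrections are therefore absolutely integrable in $\ell$, provided $\iint G<\infty$ (true by the proof of Lemma \ref{lem: linear_cosine K integral}), and they contribute $O(L^{-2})$. After extending the $\ell$-range back to $(0,\infty)$ at a cost of $O(L^{-2})$, the main quantity is
\begin{equation*}
\frac{8\tau^2}{\pi^2L^2}\cdot\frac12\izi\izi\izi\prt{\cos(\tau v)-\cos(\tau(2y+v+2\ell))}\hf\prt{\tfrac{y+\ell}{L}}\hf\prt{\tfrac{y+v+\ell}{L}}G(y,v)\d\ell\td y\td v,
\end{equation*}
where I used (\ref{eq: linear_Ti after product to sum}).

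\textbf{Oscillatory part.} For the $\cos(\tau(2y+v+2\ell))$ term, I integrate by parts once in $\ell$. Since $\tau>0$ is fixed and the derivative of $\hf(\cdot/L)$ carries a factor $1/L$, the $\ell$-integral is $O_{f,\tau}(1)$ uniformly in $y,v$. This term is therefore $O(L^{-2})$.

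\textbf{Non-oscillatory part.} For the $\cos(\tau v)$ term, I substitute $u=(y+\ell)/L$ and write $\hf(u+v/L)=\hf(u)+O(v/L)$. The error term integrates against $(1+y+v)G$, which is finite since $G\ll e^{-(y+v)/2}$ up to logs. Similarly, changing the lower limit from $y/L$ to $0$ costs $O(y)$. This leaves
\begin{equation*}
L\int_0^\infty\hf(u)^2\d u=\frac{L}{2}\cdot\frac{1}{2\pi}\|f\|_{L^2(\R)}^2
\end{equation*}
by evenness and Parseval with the normalisation (\ref{def: fourier transform}), plus $O(1)$. The remaining double integral $\izi\cos(\tau v)\izi G\d y\td v$ is exactly $\frac{\pi^2}{4\tau}\tanh\pi\tau$ by Lemma \ref{lem: linear_cosine K integral}. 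Multiplying the constants gives $\frac{\tau\tanh\pi\tau}{4\pi L}\|f\|_{L^2(\R)}^2$, which is the normalisation needed for Proposition \ref{prop: asymp of relevant H integral}; I will track the factor of $\pi$ carefully against the statement.

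\textbf{Main obstacle.} I expect the hardest step to be the uniform justification of the exchanges of integrals and of the error bounds. Specifically, I need an $(\ell,y,v)$-uniform dominating function for $A_{1,1}$ that handles three things at once: the $\ell^{-1/2}$ blow-up, the logarithmic singularity along $Q_{1,1}=1$, and the uniformity of the $O_\eps(e^{-\ell})$ remainder in $y,v$. The last point is what makes the corrections integrable against $\sinh^2(\ell/2)$. I would establish this bound first, directly from (\ref{eq: Q squared nice}) and the explicit denominators in Definition \ref{def: A and Q}.
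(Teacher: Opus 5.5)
Your proposal is correct and follows essentially the paper's route. Both arguments use the large-$\ell$ asymptotics of $A_{1,1}$ from Lemma \ref{lem: large l asymps of Anm}, the product-to-sum split of $T_0$, integration by parts in $\ell$ for the oscillatory piece, and a rescaling plus first-order Taylor expansion of $\hf$ combined with Plancherel and Lemma \ref{lem: linear_cosine K integral} for the main piece. Your explicit cut at $\ell=1$ with a dominating bound is simply a more careful version of the paper's one-line dismissal of the non-uniformity near $\ell=0$.

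Your constant $\frac{\tau\tanh\pi\tau}{4\pi L}\|f\|_{L^2(\R)}^2$ is the right one. It is what the paper's own proof derives and what Proposition \ref{prop: asymp of relevant H integral} requires, so the $4L$ in the statement is a misprint. Also, when you write down the dominating bound you will find that $A_{1,1}$ blows up like $\ell^{-1}$ near $\ell=0$, since $\sqrt{\cosh\ell-1}\sim\ell/\sqrt2$, rather than the $\ell^{-1/2}$ you quoted from Lemma \ref{lem: small l asymps of Anm}. This is still harmless against $\sinh^2(\ell/2)\ll\ell^2$.
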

\begin{lemma}\label{lem: F111 and F112}
    For fixed $\tau>0, f$ and $L\geq 1$
    \begin{equation*}
        \izi \prt{\left|F_{1,1}^{(1)}(\ell)\right| +\left|F_{1,1}^{(2)}(\ell)\right|  }\frac{\sinh ^2 \prt{\frac\ell2}}{\ell}\d \ell = O_{f,\tau}\prt{\frac 1{L^2}}.
    \end{equation*}
\end{lemma}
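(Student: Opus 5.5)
The plan is to bound $F_{1,1}^{(1)}$ and $F_{1,1}^{(2)}$ crudely, using the extra powers of $1/L$ in their prefactors. By (\ref{eq: def of F nm i}), $F_{1,1}^{(i)}$ carries the prefactor $\tau^{2-i}\ell/L^{2+i}$. Together with the uniform bound (\ref{eq: T_i are unfiformly upper bounded}), $|T_i|\ll_f 1$, this gives
\begin{equation*}
|F_{1,1}^{(i)}(\ell)| \ll_{f,\tau} \frac{\ell}{L^{2+i}} \int\!\!\int_{R_\ell} A_{1,1}(\ell,y,v)\d y\td v ,
\end{equation*}
where $R_\ell$ is the region where $T_i(y+\ell,y+v+\ell)\neq 0$. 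Since $\supp \hf\subseteq[-1,1]$, the region $R_\ell$ requires $y+\ell\leq L$ and $y+v+\ell\leq L$. In particular $\ell\le L$, $y\le L$ and $v\le L$.

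The key step is to show that the resulting $\ell$-integral is $O(L)$:
\begin{equation*}
J(L):=\int_0^L \sinh^2\prt{\tfrac\ell2} \int\!\!\int_{\{y,v\ge0,\ y+v+\ell\le L\}} A_{1,1}(\ell,y,v)\d y\td v\,\td\ell \ll L .
\end{equation*}
The factor $\ell$ in the prefactor cancels the $1/\ell$ in the weight. Given this bound, the contribution of $F_{1,1}^{(i)}$ is $\ll_{f,\tau} L^{-2-i}\cdot L\le L^{-2}$ for $i=1,2$, which is the claim. If only $J(L)\ll L^{2}$ were available, the argument would fail for $i=1$, so linear growth of $J$ is exactly what is needed.

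To bound $J$, I split into $\ell\ge 1$ and $\ell\le1$.

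\textbf{Case $\ell\ge 1$.} Here I use Lemma \ref{lem: large l asymps of Anm} with $n=m=1$:
\begin{equation*}
A_{1,1}\ll e^{-\ell}\,\frac{K\prt{\sqrt{(e^y-1)/(e^{y+v}-1)}}}{\sqrt{e^{y+v}-1}} .
\end{equation*}
The factor $e^{-\ell}$ cancels $\sinh^2(\ell/2)$ up to a constant. The $y$-integral was computed in the proof of Lemma \ref{lem: linear_cosine K integral}: it equals $\frac{\pi}{2}\log\frac{1+e^{-v/2}}{1-e^{-v/2}}$, which is integrable over $v\in(0,\infty)$, with a logarithmic singularity at $v=0$ and exponential decay. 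Hence each $\ell\in[1,L]$ contributes $O(1)$, and this range contributes $O(L)$ in total.

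\textbf{Case $\ell\le1$.} Here $\sinh^2(\ell/2)\ll\ell^2$. Lemma \ref{lem: small l asymps of Anm} suggests $A_{1,1}\ll \ell^{-1/2}\,K(\cdot)/\sqrt{\cosh(\cdot)-1}$. I need this as a uniform bound rather than an asymptotic. To get it, I would bound $\cosh\ell-1\ge \ell^2/2$ and $\cosh(x+\ell)-\cosh\ell\ge\cosh x-1$ directly in Definition \ref{def: A and Q}, and use Lemma \ref{lem: props of K} to control $K$ by a logarithm of $1-Q^2$.

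The main obstacle is this uniform control near $Q_{1,1}=1$, where $K$ blows up logarithmically, and near $y$ or $v$ equal to $0$, where the denominators $\sqrt{\cosh(\cdot)-1}$ give $1/y$-type singularities. I would handle these by bounding the inner integral $\int_0^\infty A_{1,1}(\ell,y,v)\d y$ exactly, via the same substitution as in Lemma \ref{lem: linear_cosine K integral} applied to the true $Q_{1,1}$, rather than through asymptotics. Alternatively, I would return to the original form (\ref{eq: def of Hnmell k}). There the relevant quantity is
\begin{equation*}
\ell\izi k(w_\ell(x))^2\d x\ll \ell\,\ach\prt{1+\frac{\cosh L-1}{\cosh\ell-1}}^{1/2}\ldots ,
\end{equation*}
and for $\ell\le1$ this is polynomially bounded in $\ell^{-1}$. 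Multiplied by $\ell^2$, it gives an $O_L(1)$ contribution, which I then need to show is $O(1)$ uniformly in $L$ using the support constraint $x\ll e^{L/2}/\ell$. Either route should make the small-$\ell$ range $O(1)$, so that $J(L)\ll L$.
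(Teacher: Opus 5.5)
Your proposal is correct and is essentially the paper's argument: the paper also combines $|T_i|\ll_f 1$ with the extra factor $L^{-i}$ in the prefactor of $F_{1,1}^{(i)}$, and obtains an $O(L)$ bound for the remaining triple integral by repeating the rescaling $\ell\mapsto L\ell$ from the proof of Lemma \ref{lem: F110}; this is exactly your bound $J(L)\ll L$. The small-$\ell$ obstacle you flag is harmless and no case split is needed: since $n=m=1$ one has $Q_{1,1}<1$ and $Q_{1,1}\le Q_\infty:=\sqrt{(e^y-1)/(e^{y+v}-1)}$, and rewriting the denominators of $A_{1,1}$ via $\cosh a-\cosh b=2\sinh\frac{a+b}{2}\sinh\frac{a-b}{2}$ gives $\sinh^2(\ell/2)\,A_{1,1}(\ell,y,v)\le K(Q_\infty)/\sqrt{2(e^{y+v}-1)}$ uniformly for all $\ell>0$.
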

\begin{lemma}\label{lem: sinh squared Anm integral n+m geq 3}
    The following expression
    \begin{equation*}
        \sum_{n+m\geq 3} \izi \izi\izi \sinh^2 \prt{\frac\ell 2} A_{n,m}(\ell,y,v)\d y\td v\td\ell,
    \end{equation*}
    converges.
\end{lemma}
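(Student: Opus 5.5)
The plan is to bound the sum uniformly, exploiting $A_{n,m}\geq 0$ (every factor in Definition \ref{def: A and Q} is nonnegative), so that Tonelli lets us reorder integrals and sums freely. We split the $\ell$-integral at $\ell=1$ and treat large and small $\ell$ by different mechanisms.

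\emph{Large $\ell$.} For $\ell\geq 1$ we apply Lemma \ref{lem: large l asymps of Anm} with $\eps=1$. It gives, uniformly in $y,v>0$ and $n,m\geq1$,
\begin{equation*}
A_{n,m}(\ell,y,v)\ll G(y,v)\,e^{-\ell\frac{n+m}{2}}, \qquad G(y,v):=\frac{K\prt{\sqrt{\frac{e^y-1}{e^{y+v}-1}}}}{\sqrt{e^{y+v}-1}}.
\end{equation*}
Lemma \ref{lem: linear_cosine K integral} at $s=0$, together with positivity of $G$, gives $\izi\izi G\,\td y\,\td v=\pi^3/4<\infty$. There are $k-1$ pairs with $n+m=k$, so the large-$\ell$ contribution is
\begin{equation*}
\ll \int_1^\infty e^{\ell}\sum_{k\geq 3}(k-1)e^{-\ell k/2}\,\td\ell\ll\int_1^\infty e^{-\ell/2}\,\td\ell<\infty .
\end{equation*}
The restriction $n+m\geq3$ is used exactly here: for $n=m=1$ the factor $e^{\ell}e^{-\ell}$ would not decay.

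\emph{Small $\ell$.} For $\ell<1$ the pointwise asymptotics of Lemma \ref{lem: small l asymps of Anm} are not uniform in $n,m$, so we return to the origin of $A_{n,m}$ instead. By the proof of Lemma \ref{lem: form of Hnm in terms of Anm}, before the symmetrisation, $\bar A_{n,m}(\ell,y_1,y_2)$ is the $x$-integral of a product. Dropping the restriction $y_1<y_2$ (the integrand is nonnegative) and applying Tonelli, we get
\begin{equation*}
\izi\izi A_{n,m}(\ell,y,v)\,\td y\,\td v\le \izi J(w_{n\ell}(x))\,J(w_{m\ell}(x))\,\td x,\qquad J(a):=\int_a^\infty\frac{\td y}{\sqrt{\cosh y-\cosh a}} .
\end{equation*}
Summing over all $n,m\geq1$ is harmless by positivity, and bounds the small-$\ell$ part by
\begin{equation*}
\int_0^1\sinh^2\prt{\tfrac\ell2}\izi\Big(\sum_{n\ge1}J(w_{n\ell}(x))\Big)^2\td x\,\td\ell .
\end{equation*}
An elementary estimate gives $J(a)\ll e^{-a/2}\prt{1+\log^+(1/a)}$. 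Since $w_{n\ell}(x)\ge n\ell$ and $e^{w_{n\ell}(x)}\asymp 1+(\cosh n\ell-1)(1+x^2)$, a comparison with an integral (as in the Riemann-sum step of Lemma \ref{lem: linear_props of H}) gives
\begin{equation*}
\sum_{n\geq 1}J(w_{n\ell}(x))\ll \frac{1}{\ell}\,\Phi(x)+J(w_\ell(x)),\qquad \Phi(x)\ll\frac{1+\log(2+x)}{1+x}\,\prt{1+\log^+(1/\ell)}.
\end{equation*}
Indeed, for large $x$ the terms behave like $\min\{1,1/(xn\ell)\}$, giving $\ell^{-1}\log x/x$. Then $\Phi\in L^2(0,\infty)$, and the factor $\sinh^2(\ell/2)\asymp\ell^2$ absorbs the $\ell^{-2}$ coming from squaring. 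The remaining term $\ell^2\izi J(w_\ell(x))^2\,\td x$ is bounded in the same way. What is left is $\int_0^1\log^2(1/\ell)\,\td\ell<\infty$.

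\emph{Main obstacle.} The hardest step is the small-$\ell$ bound on $\sum_n J(w_{n\ell}(x))$, uniform in $x$ and $\ell$, including the logarithmic singularity of $J$ near $a=0$ (only the $n=1$ term can reach $a\approx\ell$). If the logarithms cause trouble, the fallback is to keep $\min\{W_{n\ell}(y_1),W_{m\ell}(y_2)\}$ explicitly in the $x$-integral and use the bound $K(x)\le\frac\pi2-\frac12\log(1-x^2)$ from Lemma \ref{lem: props of K} directly on $A_{n,m}$, integrating the logarithmic divergence along $\{Q_{n,m}=1\}$.
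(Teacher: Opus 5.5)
Your proof is correct. For $\ell\ge 1$ it is the paper's argument: Lemma~\ref{lem: large l asymps of Anm} decouples the integral into $\int_0^\infty\int_0^\infty G=\pi^3/4$ times exponentials in $\ell$, and $n+m\ge 3$ supplies the decay.

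The routes differ at small $\ell$. The paper asserts that the double zero of $\sinh^2(\ell/2)$ lets one drop the non-uniformity in $(n,m)$. It then uses the large-$\ell$ asymptotic on all of $(0,\infty)$, reducing everything to $\sum_{k\ge3}(k-1)\int_0^\infty \sinh^2(\ell/2)e^{-\ell k/2}\,d\ell=\sum_{k\ge 3}\frac{4(k-1)}{k(k^2-4)}<\infty$. This is quick, but it gives no estimate uniform in $(n,m)$ as $\ell\downarrow 0$. There that asymptotic is not an upper bound: each $A_{n,m}$ blows up, in fact like $\ell^{-1}$, since $(\cosh n\ell-1)^{-1/2}\sim \sqrt2/(n\ell)$. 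Moreover $\asymp\ell^{-2}$ pairs have $n\ell,m\ell\lesssim 1$.

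Your return to the $x$-integral representation, using positivity and Tonelli, is what makes this step rigorous. It shows $\sum_{n,m}\int_0^\infty\int_0^\infty A_{n,m}\,dy\,dv=O(\ell^{-2})$ up to logarithms, which is exactly what $\sinh^2(\ell/2)\asymp\ell^2$ can absorb. The mechanism is the same Riemann-sum comparison used in Lemma~\ref{lem: linear_props of H}.

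Two simplifications are available.
\begin{enumerate}
\item Since $J$ is decreasing and $u\mapsto w_u(x)$ is increasing, you get $\sum_{n\ge1}J(w_{n\ell}(x))\le \ell^{-1}\int_0^\infty J(w_u(x))\,du$ directly. The right-hand integral is $\ll (1+\log(2+x))/(1+x)$ with no $\log(1/\ell)$. So the small-$\ell$ bound is simply $C\ell^{-2}$, and your ``main obstacle'' disappears.
\item The inequality $\cosh(a+t)-\cosh a\ge \sinh a\,(e^t-1)$ gives $J(a)\le \pi(\sinh a)^{-1/2}$. Hence $\int_0^\infty J(w_{n\ell}(x))J(w_{m\ell}(x))\,dx\ll e^{-\ell(n+m)/2}$ for $\ell\ge1$, so your $J$-bound also covers the large-$\ell$ range. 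This is safer than relying on the pointwise uniformity in $(y,v)$ claimed in Lemma~\ref{lem: large l asymps of Anm}, which cannot hold as stated. For $n<m$ there is a curve $\{Q_{n,m}=1\}$ inside $\{y,v>0\}$, because $W_{k\ell}(y+k\ell)$ is strictly decreasing in $k$. Along this curve $A_{n,m}$ has a logarithmic singularity, while $G$ stays finite.
\end{enumerate}
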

We now proceed to prove Proposition \ref{prop: asymp of relevant H integral} using these three lemmas. Afterwards, we end the section by proving them.
\begin{proof}[Proof of Proposition \ref{prop: asymp of relevant H integral}]
    By the absolute convergence in Corollary \ref{cor: H sinh square integral converges}, we write
    \begin{equation*}
        I:=\izi H(\ell) \frac{\sinh^2 \prt{\frac \ell2}}{\ell}\d\ell = \sum_{i=0}^2 \sum_{n,m\geq 1} \izi F_{n,m}^{(i)}(\ell) \frac{\sinh^2 \prt{\frac\ell2}}{\ell}\d\ell.
    \end{equation*}
    By Lemmas \ref{lem: F110} and \ref{lem: F111 and F112} we have
    \begin{equation*}
        I = \frac{\tau \tanh\pi \tau }{4L }\|f\|_{L^2(\R)}^2 + \text{DoubSum} + O_{f,\tau}(L^{-2}),
    \end{equation*}
    where
    \begin{equation*}
        \text{DoubSum} := \sum_{i=0}^2 \sum_{n+m\geq 3} \izi F_{n,m}^{(i)}(\ell) \frac{\sinh^2\prt{\frac\ell2}}{\ell}\d\ell.
    \end{equation*}
    We only require bounding the double sum term to prove the proposition.
    By the definition of $F_{n,m}^{(i)}$, as each $T_i$ is uniformly upper bounded by $\|\hf\|_{\infty}^2$ (\ref{eq: T_i are unfiformly upper bounded}), we have that
    \begin{equation*}
        \text{DoubSum} \ll_{f,\tau} \frac{1}{L^2} \sum_{n+m\geq 3} \izi \izi \izi A_{n,m}(\ell,y,v)\sinh^2\prt{\frac \ell 2}\d y\td v\td \ell.
    \end{equation*}
    By Lemma \ref{lem: sinh squared Anm integral n+m geq 3}
    \begin{equation}
        \text{DoubSum} \ll_{f,\tau}\frac 1{L^2},
    \end{equation}
    which proves the proposition.
\end{proof}
We now proceed to prove the three lemmas above.
\begin{proof}[Proof of Lemma \ref{lem: F110}]
    By definition of $F_{1,1}^{(0)}$,
    \begin{align*}
        I&:= \izi F_{1,1}^{(0)}(\ell) \frac{\sinh^2 \prt{\frac \ell 2}}{\ell}\d\ell\\
        &= \frac{16\tau^2}{\pi^2L^2}  \izi\izi\izi \sinh^2\prt{\frac \ell2} T_0(y+\ell,y+v+\ell) A_{1,1}(\ell,y,v)\d y \td v \td \ell.
    \end{align*}

    Anticipating the use of (\ref{eq: linear_Ti after product to sum}), we define for $\ell,y,v>0$,
    \begin{equation*}
        U(\ell,y,v) := \frac{8\tau^2}{\pi^2 L^2}\sinh^2\prt{\frac \ell 2}A_{1,1}(\ell,y,v)\hf\prt{\frac{y+\ell}{L}}\hf\prt{\frac{y+v+\ell}{L}}.
    \end{equation*}
    By the absolute convergence of the integrals, and (\ref{eq: linear_Ti after product to sum}), we let
    \begin{align*}
        I = I_1-I_2
    \end{align*}
    where
    \begin{align*}
        I_1&:= \izi\izi\izi \cos(\tau v)U(\ell,y,v)\d\ell \td  y\td v,\\
        I_2&:= \izi\izi\izi \cos(\tau (v+2y+2\ell)) U(\ell,y,v) \d\ell \td y\td v.
    \end{align*}
    We defer proving that $I_2 = O_{f,\tau}(L^{-2})$ to the end.

    We analyse the large $L$ behaviour of $I_1$. First, we change variables $\ell\mapsto L\ell$, such that
    \begin{align*}
        I_1 &= L \izi\cos(\tau v) \izi\izi U(L\ell,y,v)\d\ell \td y \td v\\
        &=\frac{8\tau^2}{\pi^2 L}\izi \cos(\tau v) \izi \izi \sinh^2\prt{\frac{L\ell}{2}}A_{1,1}(L\ell,y,v)\hf \prt{\ell+\frac yL}\hf\prt{\ell + \frac{y+v}L}\td\ell\td y\td v.
    \end{align*}
    
    We recall by Lemma \ref{lem: large l asymps of Anm} that, for any $\eps>0$
    \begin{equation*}
        A_{1,1}(\ell,y,v) = \frac{2K\prt{\sqrt{\frac{e^y-1}{e^{y+v}-1}}}}{\sqrt{e^{y+v}-1}}e^{-\ell}\prt{1+O_\eps(e^{-\ell})}
    \end{equation*}
    for $\ell\geq \eps$. 
    We may remove the non-uniformity of the asymptotics of $A_{1,1}$ at $\ell = 0$, i.e.\ drop the $\eps$. Indeed, the $\ell$-integral above converges, by the square-root divergence at $\ell = 0$ of $A_{1,1}$ (c.f.\ Lemma \ref{lem: small l asymps of Anm}) and the compact support of $\hf$. Thus,
    \begin{align*}
        I_1 = \frac{16\tau^2}{\pi^2 L}&\izi \cos(\tau v)\izi\frac{K\prt{\sqrt{\frac{e^y-1}{e^{y+v}-1}}}}{\sqrt{e^{y+v}-1}}\\&\izi \sinh^2\prt{\frac{L\ell}2} e^{-L\ell}\hf \prt{\ell+\frac yL}\hf\prt{\ell + \frac{y+v}L}\prt{1+O(e^{-L\ell})} \d\ell \td y \td v.
    \end{align*}
    The $O(e^{-L\ell})$ error is with respect to $L$, and shall remain so throughout this proof.
    
    As $\sinh^2 (L\ell/2) = \frac 14 \prt{e^{L\ell}+e^{-L\ell}-2}$,
    \begin{align}\label{eq: I_1 for I_2}
        \begin{split}
            I_1=\frac{4\tau^2}{\pi^2 L}&\izi \cos(\tau v)\izi\frac{K\prt{\sqrt{\frac{e^y-1}{e^{y+v}-1}}}}{\sqrt{e^{y+v}-1}}\\&\izi\hf \prt{\ell+\frac yL}\hf\prt{\ell + \frac{y+v}L}\prt{1+O(e^{-L\ell})} \d\ell \td y \td v.
        \end{split}
    \end{align}
    Since $\hf \in C_c^\infty(\R)$ with $\supp \hf \subseteq [-1,1]$, the third integral goes from $\ell = 0$ to $\ell =1$. Furthermore, uniformly in $\ell\in [0,1]$, we have for $0\leq x\leq L$,
    \begin{equation*}
        \hf\prt{\ell+\frac{x}{L}} = \hf(\ell) +O_f\prt{\frac{x}{L}}.
    \end{equation*}
    Thus, 
    \begin{align*}
        I_1 &= \frac{4\tau^2}{\pi^2L}\izi \cos (\tau v) \izi \frac{K\prt{\sqrt{\frac{e^y-1}{e^{y+v}-1}}}}{\sqrt{e^{y+v}-1}}\int_0^1 f(\ell)^2+O_f((y+v)e^{-L\ell})\d\ell \td y \td v.
    \end{align*}

    The following ingredients combine to yield that
    \begin{equation*}
        I_1 = \frac{\tau \tanh\pi\tau}{4\pi L}\|f\|_{L^2(\R)}^2 +O_{f,\tau}\prt{\frac 1{L^2}}.
    \end{equation*}
    \begin{itemize}
        \item The function $\hf$ has $\supp \hf \subseteq [-1,1]$, thus $ \int_0^1 f(\ell)^2\d\ell = \|\hf\|_{L^2(\R_{\geq 0})}^2$. Furthermore, as $\hf$ is even, $\|\hf\|_{L^2(\R_{\geq 0})}^2 = \frac 12 \|\hf \|_{L^2(\R)}^2$. Lastly, the Fourier transform is not normalised (c.f.\ Definition \ref{def: fourier transform}). Thus Plancherel's identity yields
        \begin{equation*}
            \int_0^1 \hf(\ell)^2 = \frac{1}{4\pi}\|f\|_{L^2(\R)}^2.
        \end{equation*}
        \item One has that
        \begin{equation*}
            \int_0^1 O_f(e^{-L\ell})\d\ell = O_f\prt{\frac{1}{L}}.
        \end{equation*}
        Furthermore, by the exponential decay in both the $y,v$ variables
        \begin{equation*}
            \izi (y+v)\cos(\tau v)\izi \frac{K\prt{\sqrt{\frac{e^y-1}{e^{y+v}-1}}}}{\sqrt{e^{y+v}-1}}\d y\td v \ll_{\tau} 1.
        \end{equation*}
        \item Lemma \ref{lem: linear_cosine K integral} evaluates the double integral explicitly as
        \begin{equation*}
            \izi \cos (\tau v) \izi \frac{K\prt{\sqrt{\frac{e^y-1}{e^{y+v}-1}}}}{\sqrt{e^{y+v}-1}}\d y\td v = \frac{\pi^2}{4\tau}\tanh\pi\tau.
        \end{equation*}
    \end{itemize}

    What now remains is to prove that $I_2 = O_{f,\tau}(L^{-2})$. By the exact same arguments as for $I_1$ (c.f.\ (\ref{eq: I_1 for I_2})),
    \begin{align*}
        I_2 =\frac{4\tau^2}{\pi^2 L}&\izi\izi\frac{K\prt{\sqrt{\frac{e^y-1}{e^{y+v}-1}}}}{\sqrt{e^{y+v}-1}}\izi \cos(\tau (v+2y+2L\ell))\\&\hf \prt{\ell+\frac yL}\hf\prt{\ell + \frac{y+v}L}\prt{1+O(e^{-L\ell})} \d\ell \td y \td v.
    \end{align*}
    Integration by parts in the $\ell$-variable yields
    \begin{align*}
            &\frac 1L\izi \cos(\tau(2y+v+2L\ell))\hf\prt{\ell +\frac{y}{L}}\hf\prt{\ell +\frac{y+v}{L}}\d\ell\\
            &=  -\frac{1}{2\tau L^2} \sin(\tau(2y+v)) \, \hf\prt{\frac{y}{L}} \hf\prt{\frac{y+v}{L}}\\& - \frac{1}{2\tau L^2} \int_{0}^{\infty} \sin(\tau(2y+v+2L\ell)) \left[ \hf'\prt{\ell +\frac{y}{L}} \hf\prt{\ell +\frac{y+v}{L}} + \hf\prt{\ell +\frac{y}{L}} \hf'\prt{\ell +\frac{y+v}{L}} \right] \d\ell.
        \end{align*}
        Together with Lemma \ref{lem: linear_cosine K integral},
        \begin{equation*}
            I_2 = O_{f,\tau}\prt{\frac 1{L^2}}.
        \end{equation*}
\end{proof} 

\begin{proof}[Proof of Lemma \ref{lem: F111 and F112}]
    Recall by definition (\ref{eq: def of F nm i}) that
    \begin{align*}
        &I:=\izi \prt{\left|F_{1,1}^{(1)}(\ell)\right| +\left|F_{1,1}^{(2)}(\ell)\right|  }\frac{\sinh ^2 \prt{\frac\ell2}}{\ell}\d \ell \\ &= \frac 1{L^2}\sum_{i=1}^2 \frac{16\tau^{2-i}}{L^{i}}\izi\izi\izi \sinh^2\prt{\frac \ell2}|T_i(y+\ell,y+v+\ell)|A_{1,1}(\ell,y,v)\d y\td v\td\ell.
    \end{align*}
    The $i=1$ summand has a $1/L^3$ prefactor, and the $i=2$ summand has a $1/L^4$ prefactor. Thus, if one follows the same steps as in the above Lemma \ref{lem: F110}, the integrals converge, and each summand has a prefactor of at least $1/L^2$. This yields that
    \begin{equation*}
        I = O_{f,\tau}\prt{\frac 1{L^2}}.
    \end{equation*}
\end{proof}
\begin{proof}[Proof of Lemma \ref{lem: sinh squared Anm integral n+m geq 3}]
    Define
    \begin{equation*}
        I_{n,m}:=\izi\izi\izi\sinh^2\prt{\frac \ell2}A_{n,m}(\ell,y,v)\d y\td v\td \ell.
    \end{equation*}
    The goal of the lemma is to prove that $\sum_{n,m\geq 3}I_{n,m} $ converges absolutely.

    Firstly, $I_{n,m}$ converges, by the small $\ell$ and large $\ell$ behaviour of $\sinh^2(\cdot/2)A_{n,m}$ (c.f.\ Lemmas \ref{lem: large l asymps of Anm} and \ref{lem: small l asymps of Anm}). Furthermore, as $\sinh^2(\ell/2)$ has a double zero at $\ell =0$, the non-uniformity in $n,m$ of both of the aforementioned lemmas may be dropped. Thus, by plugging in Lemma \ref{lem: large l asymps of Anm},
    \begin{equation*}
        I_{n,m}= \izi\izi\izi \sinh^2 \prt{\frac\ell 2} \frac{2K\prt{\sqrt{\frac{e^y - 1}{e^{y+v}-1}}}}{\sqrt{e^{y+v}-1}}e^{-\ell\frac{n+m}{2}}\prt{1+O\prt{e^{-\min\{ n,m\} \ell}}}\d\ell \td y \td v.
    \end{equation*}

    This integral is actually decoupled, and thus, by Lemma \ref{lem: linear_cosine K integral}, it suffices to prove that
    \begin{equation*}
        \sum_{n+m\geq 3} \izi \sinh^2 \prt{\frac\ell2} e^{-\ell\frac{n+m}{2}} \d\ell <\infty.
    \end{equation*}
    This follows by direct integration.
\end{proof}
\section{Length-minimising loops and their properties}\label{sec: len minimising loops}
It is well-known that a shortest noncontractible geodesic loop through a point (a.k.a.\ a \emph{systole}) on a closed hyperbolic surface \emph{is simple} (c.f.\ \cite[Lemma 4.1.5]{buser1992geometry}). This is an example of what we define as a \emph{length-minimising geodesic loop}, c.f.\ Definition \ref{def: length min loop}.

The main aim of this section is to prove general results that characterise the shape and topology of length-minimising loops and sequences (c.f.\ \ref{def: length min seq}), that will be useful in the sequel. In particular, \emph{all} length-minimising loops are simple.

For useful definitions, c.f.\ the prerequisites in Section \ref{subsec: topological definitions}.

This section is split as follows:
\begin{itemize}
    \item[(i)] Section \ref{subsec: ALA decomp}: defines a useful decomposition of non-simple arcs.
    \item[(ii)] Section \ref{subsec: len min loops}: defines length-minimising geodesic loops and proves that they are simple.
    \item[(iii)] Section \ref{subsec: len min seq}: defines length-minimising sequences and characterises the topology of the associated length-minimising loops.
    \item[(iv)] Section \ref{subsec: n exploring loops}: provides an example of a length-minimising sequence of loops that iteratively ``explores" the topology of the surface.
    \item[(v)] Section \ref{subsec: geom of two shortest loops}: collects standard results on the geometry around short geodesic loops, which will be used in the sequel.
\end{itemize}
\subsection{Arc-Loop-Arc decomposition}\label{subsec: ALA decomp}
    We begin with the following simple decomposition of a loop into arcs.
    \begin{numnotation}[Arc-Loop-Arc decomposition]\label{not: ALA decomp}
        Let $c:[0,1]\ra X$ be a non-simple arc with a finite number of self-intersections. Define the time of first self-intersection
        \begin{align*}
            t&:=\sup\{ t'>0:\forall s\in[0,t'), c(s)\neq c(t')\}
        \end{align*}
        and $s\in [0,t)$ be the unique time for which $c(s) = c(t)$.
        Then the \emph{arc-loop-arc decomposition} of $c$ (or the \emph{ALA-decomposition} of $c$ for short) is
        \begin{equation*}
            c = \eta_1\dot b\dot\eta_2,
        \end{equation*}
        where
        \begin{align*}
            \eta_1:= c_{|[0,s]}, \quad b := c_{|[s,t]}, \quad\eta_2:= c_{|[t,1]}.
        \end{align*}
    \end{numnotation}
    We note that the arc $\eta_1$ is simple, the loop $b$ is simple, and they trivially intersect. For an example, see Figure \ref{fig: ALAdecomp 2 selfintersections}.

    \begin{numnotation}[ALA-point]\label{not: ALA point}
        The ALA-decomposition of a non-simple arc also yields a common intersection point, which we call the \emph{ALA-point} 
        $$p:= e_1(\eta_1)=e_0(\eta_2)=e_0(b)=e_1(b).$$
    \end{numnotation}
    \begin{figure}
        \centering
        \includegraphics[width=0.45\linewidth]{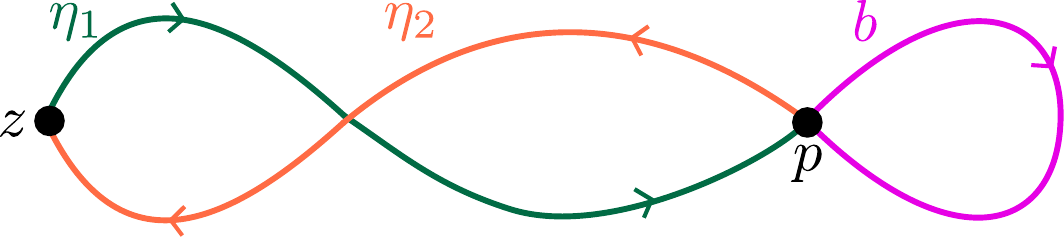}
        \caption{The ALA-decomposition of a loop with two self-intersections}
        \label{fig: ALAdecomp 2 selfintersections}
    \end{figure}
    \subsection{Length-minimising loops}\label{subsec: len min loops}
    From here on out, we fix some closed hyperbolic surface $X$ and a point $z\in X$. 
    \begin{numnotation}
        If we have a subset $G\subseteq \pi_1\xz$, and some loop $\delta$ based at $z$, then we say $\delta\in G$ if its homotopy class is in $G$. In other words, $\delta$ is homotopic to some element in $G$.
    \end{numnotation}
    \begin{definition}[Length-minimising loop]\label{def: length min loop}
        Let $G\lneq \pi_1\xz$ be a subgroup. We say a geodesic loop $\delta$ is \emph{length-minimising with respect to} $G$ if $\delta\not\in G$, and for all geodesic loops $\eta\in \pi_1\xz, \eta \not\in  G$
        \begin{equation*}
            \ell(\delta) \leq \ell(\eta).
        \end{equation*}
    \end{definition}
    \begin{remark}
        Any length-minimising loop is noncontractible as any subgroup must contain $\id$.
    \end{remark}
    A simple (and natural) example of such a length-minimising loop is a \emph{systole based at $z$}, which is a length-minimising loop with respect to the subgroup $G = \{\id\} \lneq \pi_1\xz$. By definition, a systole is a shortest noncontractible loop based at a point, and it is always simple (c.f.\ \cite[Lemma 4.1.5]{buser1992geometry}).
    
    Another example, which forms a central ingredient in our approach to proving Proposition \ref{prop: off diagonal term claim} in Section \ref{sec: offdiagterms}, is the notion of a \emph{second shortest primitive loop}.

    \begin{restatable}[Second shortest primitive loop]{definition}{defsecondshortestgeodesicloop}\label{def: second shortest primitive loop}
        Let $\g$ be a systole based at $z$, which we identify with its homotopy class. A geodesic loop $\delta$ based at $z$ is \emph{a second shortest primitive loop} with respect to $\g$ if it is length-minimising with respect to $G:=\{\g^m \}_{m\in \Z}$.
    \end{restatable}
    
    \begin{remark}
        The notion of second shortest primitive loop depends on the choice of systole, if the systole is not uniquely defined.
    \end{remark}
    We conclude this section by proving that any length-minimising loop has to be \emph{simple}.
    \begin{theorem}[Topology of length-minimising loops] \label{thm: top of len min}
        Let $\delta$ be a length-minimising loop w.r.t.\ a subgroup $G \lneq \pi_1\xz$. Then $\delta$ is simple.
    \end{theorem}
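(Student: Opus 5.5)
We argue by contradiction, using the ALA-decomposition (Notation \ref{not: ALA decomp}). Suppose $\delta$ is length-minimising with respect to $G\lneq\pi_1\xz$ but not simple. By Lemma \ref{lem: on geodesic loops}(ii), self-intersections away from $z$ are finitely many and transverse. A self-intersection can only occur at the basepoint if $\delta$ passes through $z$ at an interior time, and then $\delta$ is already a concatenation of two nontrivial loops based at $z$. Handle that case first. Write $\delta \sim \delta_1\dot\delta_2$ with each $\delta_i$ a loop based at $z$ of length strictly less than $\ell(\delta)$; both are noncontractible, since otherwise $\delta$ would not be the geodesic representative. As $\delta\notin G$ and $G$ is a subgroup, at least one $\delta_i\notin G$. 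Its geodesic representative has length $\leq \ell(\delta_i)<\ell(\delta)$, contradicting minimality.

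In the main case, apply the ALA-decomposition to $\delta$ (viewed as an arc from $z$ to $z$): $\delta=\eta_1\dot b\dot\eta_2$, with ALA-point $p\neq z$ and $b$ a simple loop based at $p$. Form the two loops based at $z$:
\begin{equation*}
\alpha:=\eta_1\dot\eta_2, \qquad \beta:=\eta_1\dot b\dot\bar\eta_1 .
\end{equation*}
Then $\delta\sim \beta\dot\alpha$ in $\pi_1\xz$. Since $\delta\notin G$, at least one of $\alpha,\beta$ lies outside $G$. It remains to show both have homotopy-length strictly less than $\ell(\delta)$, which would contradict Definition \ref{def: length min loop}.

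For $\alpha$ this is immediate: $\ell(\alpha)=\ell(\delta)-\ell(b)<\ell(\delta)$. If $\alpha$ is contractible, it lies in $G$, so we are in the case $\beta\notin G$.

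The main obstacle is $\beta$, since $\ell(\beta)=2\ell(\eta_1)+\ell(b)$ may exceed $\ell(\delta)$. The fix is to choose which end to traverse. Applying the ALA-decomposition to the reversal $\bar\delta$ (equivalently, using $\eta_2$) gives the alternative $\beta'=\bar\eta_2\dot b\dot\eta_2$-type conjugate, whose homotopy class differs from $\beta$ by conjugation by $\alpha$: concretely, $\delta\sim \alpha\dot\beta''$ with $\beta'':=\bar\eta_2\dot b\dot\eta_2$. So $\delta\notin G$ also forces $\alpha\notin G$ or $\beta''\notin G$. Now the shorter of $\eta_1,\eta_2$, say $\eta_1$, satisfies
\begin{equation*}
2\ell(\eta_1)\leq \ell(\eta_1)+\ell(\eta_2),
\end{equation*}
and hence $\ell(\beta)\leq\ell(\delta)$.

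Strictness comes from Lemma \ref{lem: on geodesic loops}(iii). The loop $\eta_1\dot b\dot\bar\eta_1$ has a corner at $p$, because the intersection at $p$ is transverse, so $b$ leaves and enters $p$ in different directions and cannot join smoothly with both $\eta_1$ and $\bar\eta_1$. It is therefore not a geodesic loop, and $\ell_\xz(\beta)<\ell(\beta)\leq\ell(\delta)$.

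Finally, $\beta$ is noncontractible, since $b$ is a simple closed geodesic arc. We then split into cases on which of $\alpha,\beta$ (or $\alpha,\beta''$, symmetrically) lies outside $G$. In every case we obtain an element outside $G$ that is strictly shorter than $\delta$, a contradiction. Hence $\delta$ is simple.
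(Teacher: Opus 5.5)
Your proof is correct and follows essentially the same route as the paper. You use the ALA-decomposition, the loops $\eta_1\cdot\eta_2$ and $\eta_1\cdot b\cdot\bar\eta_1$ (or $\bar\eta_2\cdot b\cdot\eta_2$, whichever end is shorter), the factorisation of $\delta$ as their product, and a separate treatment of the case where $\delta$ passes back through $z$. The only differences are cosmetic: you show $\eta_1\cdot b\cdot\bar\eta_1$ is not a geodesic loop via the corner at $p$ (from transversality), whereas the paper notes it retraces $\eta_1$ and so has infinitely many self-intersections; and your remarks on the ALA-decomposition of $\bar\delta$ and on the noncontractibility of $\beta$ are unnecessary, the latter being automatic since $\mathrm{id}\in G$.
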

\begin{proof}

    Assume by way of contradiction that $\delta$ is not simple. We write $\delta$ in its ALA-decomposition as 
    \begin{equation*}
        \delta = \eta_1\dot b\dot \eta_2
    \end{equation*}
    with ALA-point $p$.

    Define $\eta:=\eta_1\dot\eta_2$. By definition $\ell(\eta)<\ell(\delta)$. Thus in particular $\ell_\xz(\eta) \leq \ell(\eta) < \ell(\delta)$, which implies by the definition of $\delta$ that $\eta\in G$.

    We first prove that $p\neq z$ by contradiction. Assume $p=z$. Then $b$ is a loop based at $z$, and thus by the length-minimality of $\delta$, $b\in G$. By the above, $\eta_2$, which here equals $\eta$, is also in $G$. Thus, as $G$ is closed under concatenation, $\delta = b\dot\eta_2\in G$, contradicting the definition of $\delta$.

    Define the loop 
    \begin{align*}
        \delta' := \begin{cases}
            \eta_1 \dot b \dot \bar\eta_1 & \ell(\eta_1 )\leq \ell(\eta_2) \\
            \bar\eta_2 \dot b\dot \eta_2 & \text{otherwise},
        \end{cases}
    \end{align*}
    c.f.\ Figure \ref{fig: len min figure eight} for an example when $\ell(\eta_1)\leq \ell(\eta_2)$.

    The loop $\delta'$ is not a geodesic loop. Indeed, $\delta'$ has an infinite amount of self-intersections along $\eta_1$, respectively $\eta_2$. Hence by Lemma \ref{lem: on geodesic loops}, $\ell_\xz(\delta') < \ell(\delta')$. Furthermore, $\ell(\delta')\leq \ell(\delta)$ by definition. Thus, by definition of $\delta$, $\delta' \in G$.
    
    We have
    \begin{equation*}
        \delta  = \eta_1 \dot b\dot \eta_2 \sim \begin{cases}
            \eta_1 \dot b \dot \bar\eta_1 \dot \eta_1 \dot \eta_2 =\delta' \dot \eta & \text{if } \ell(\eta_1)\leq \ell(\eta_2)\\
            \eta_1 \dot \eta_2 \dot \bar\eta_2 \dot b \dot \eta_2 = \eta \dot \delta' & \text{otherwise.}
        \end{cases}
    \end{equation*}
    In both cases, as $\delta'$ and $\eta $ are in $G$, $\delta\in G$. This contradicts the definition of $\delta$.

    \begin{figure}[h!]
    \centering
    \begin{subfigure}[t]{0.4\textwidth}
        \centering
        \includegraphics[width=0.7\linewidth]{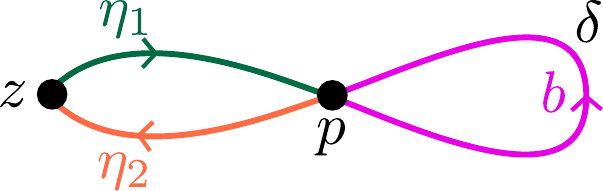}
        \caption{ALA-decomposition of $\delta$.}
        \label{fig: aladecompfigeight}
    \end{subfigure}
    ~
    \begin{subfigure}[t]{0.4\textwidth}
        \centering
        \includegraphics[width=0.7\linewidth]{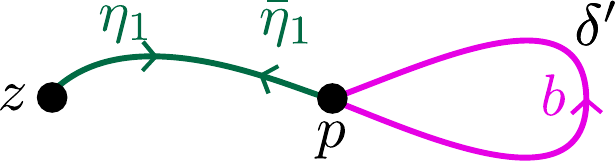}
        \caption{The loop $\delta'$.}
        \label{fig: lenmindeltaprime}
    \end{subfigure}
    \caption{Example of the loops appearing in the proof, when $\delta$ has one self-intersection.}
    \label{fig: len min figure eight}
\end{figure}

\end{proof}
\subsection{Length-minimising sequence}\label{subsec: len min seq}
We continue by defining appropriate \emph{sequences} of length-minimising loops based at a point. This generalises the notion of a second shortest primitive loop.
\begin{definition}[Length-minimising sequence]\label{def: length min seq}
    A sequence of (subgroup, geodesic-loop) pairs $(\{\id\}=G_0,\id ) , (G_1,\g_1),\ldots, (G_n,\g_n)$ is called a \emph{length-minimising sequence of length $n\geq 1$} if for all $i\in \{1,\ldots, n\}$, 
    \begin{itemize}
        \item[] (i) $G_{i-1}\lneq G_i \leq \pi_1\xz$;
        \item[] (ii) $\g_i\in G_i$; 
        \item[] (iii) the geodesic loop $\g_i$ is length-minimising w.r.t.\ $G_{i-1}$.
    \end{itemize}
\end{definition}
We prove that the associated length-minimising geodesic loops \emph{intersect trivially} (i.e.\ only at times $0$ and $1$), which is the subject of the following theorem.
\begin{theorem}\label{thm: len min seq simplicity}
    Let $(G_1,\g_1),\ldots, (G_n,\g_n)$ be a length-minimising sequence of length $n$. Then for any $1 \leq i <  j \leq n$, the geodesic loops $\g_i$ and $\g_j$ intersect trivially.
\end{theorem}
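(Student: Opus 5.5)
The plan is an exchange argument at a hypothetical extra intersection point, using the same mechanism as the proof of Theorem \ref{thm: top of len min}. First I record the order relations. Since $G_{i}\leq G_{j-1}$ for $i<j$, condition (ii) gives $\g_i\in G_{j-1}$. Since $\g_j\notin G_{j-1}\supseteq G_{i-1}$ and $\g_i$ is length-minimising with respect to $G_{i-1}$, we get $\ell(\g_i)\leq \ell(\g_j)$. Also, by Theorem \ref{thm: top of len min} both loops are simple, so neither passes through $z$ at an interior time.

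Suppose, for contradiction, that $\g_i(s)=\g_j(t)=p$ for some interior times $s,t\in(0,1)$; then $p\neq z$. I first dispose of the tangential case. If the two geodesics are tangent at $p$, uniqueness of geodesics forces them to coincide along a maximal common segment, and following it reaches $z$. Then the longer loop $\g_j$ passes through $z$ at an interior time, or the two loops are equal up to orientation, which is excluded because $\g_i\in G_{j-1}$ but $\g_j\notin G_{j-1}$. In the first situation $\g_j$ is not simple, contradicting Theorem \ref{thm: top of len min}. So the intersection at $p$ is transverse.

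Now split $\g_i=a_1\dot a_2$ and $\g_j=b_1\dot b_2$ at $p$, with $a_1,b_1$ running from $z$ to $p$ and $a_2,b_2$ from $p$ to $z$. Write $x_k=\ell(a_k)$ and $y_k=\ell(b_k)$, so that $x_1+x_2\leq y_1+y_2$. The key observation is that any loop $P\dot Q$, where $P$ is a path $z\to p$ and $Q$ a path $p\to z$ taken from $a_1,b_1,\bar a_2,\bar b_2$ and their reversals, and which is not $\g_i^{\pm1}$ or $\g_j^{\pm1}$, has a corner at $p$ by transversality. Hence it is not a geodesic loop, and by Lemma \ref{lem: on geodesic loops}(iii) its geodesic length is strictly less than its length. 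Therefore any such loop of length at most $\ell(\g_j)$ lies in $G_{j-1}$. I then write $\g_j$, up to homotopy, as a product of such loops and $\g_i^{\pm1}$. The candidate identities are
\[
\g_j\sim (b_1\dot \bar s)\dot(s\dot b_2),\qquad
\g_j\sim (b_1\dot\bar a_1)\dot\g_i\dot(\bar a_2\dot b_2),\qquad
\g_j\sim (b_1\dot a_2)\dot\g_i^{-1}\dot(a_1\dot b_2),
\]
where $s$ is any path $z\to p$ among $a_1,\bar a_2$. The goal is to choose the identity according to which of $x_1,x_2,y_1,y_2$ is smallest, so that every factor other than $\g_i^{\pm1}$ has length at most $y_1+y_2$. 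It then lies in $G_{j-1}$, so $\g_j\in G_{j-1}$, a contradiction.

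The main obstacle is this case analysis. If $\min(x_1,x_2)\leq\min(y_1,y_2)$, the first identity works with $s$ the shorter $a$-path. The bad case is $y_1\leq x_1,x_2$, or symmetrically for $y_2$. Then $x_1+x_2\leq y_1+y_2$ only gives $x_1,x_2\leq y_2$, and none of the three identities obviously closes. I would first try to choose $p$ cleverly rather than arbitrarily: take $p$ to be the first point of $\g_j$ after $z$ that lies on $\g_i$, or the intersection point minimising $y_1$. With that choice I would aim to use the loops $b_1\dot\bar a_1$ and $b_1\dot a_2$, which by the minimality of $b_1$ meet $\g_i$ in a controlled way. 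I would combine this with the alternative of exchanging the roles of the two halves through $\g_j^{-1}$, hoping to reduce the bad case to the good one. If that fails, I would fall back on iterating the argument over all intersection points, in the spirit of the ALA-decomposition of Notation \ref{not: ALA decomp}.
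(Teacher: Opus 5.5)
Your preliminary reductions and your good case are correct. These are the order relations, simplicity, the tangential case, and the case $\min(x_1,x_2)\le\min(y_1,y_2)$; the good case is exactly the first case of the paper's proof. One small slip: in the tangential case it need not be $\g_j$ that passes through $z$ at an interior time, but whichever loop does contradicts Theorem~\ref{thm: top of len min}.

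The proposal is nevertheless not a proof. In the remaining case $y_1<\min(x_1,x_2)$ you only list strategies to try, and none is carried out. (Reversing $\g_j$ if needed reduces to this case: it swaps $b_1$ and $\bar b_2$, since $\g_j$ is simple and passes through $p$ once.) As you observed, none of your factorisations of $\g_j$ into loops of length at most $\ell(\g_j)$ closes in this case. Choosing $p$ specially or iterating over intersection points is not needed. The problem is that $\g_j\in G_{j-1}$ is the wrong contradiction to aim for here.

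The missing idea is to use condition (iii) for the shorter loop $\g_i$, i.e.\ its length-minimality with respect to $G_{i-1}$. If $y_1<x_1$ and $y_1<x_2$, then
\[
\ell(b_1\dot\bar a_1)=y_1+x_1<x_1+x_2=\ell(\g_i),\qquad \ell(b_1\dot a_2)=y_1+x_2<x_1+x_2=\ell(\g_i).
\]
So the geodesic representatives of both loops are strictly shorter than $\g_i$, and hence both lie in $G_{i-1}$. The inequalities are strict, so no corner argument is even needed. Since
\[
\g_i=a_1\dot a_2\sim (a_1\dot\bar b_1)\dot(b_1\dot a_2)=\overline{(b_1\dot\bar a_1)}\dot(b_1\dot a_2),
\]
and $G_{i-1}$ is a subgroup, this gives $\g_i\in G_{i-1}$, a contradiction.

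These are exactly the loops $b_1\dot\bar a_1$ and $b_1\dot a_2$ you singled out, so you were one step away. With this step the argument works at an arbitrary transverse intersection point $p$, with no special choice needed. This is precisely the paper's second case: after orienting so that $y_1\le y_2$ and $x_1\le x_2$, that case is $x_1>y_1$.
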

\begin{proof}
    Assume by way of contradiction that there exist $1\leq i < j \leq n$ for which $\g_i$ and $\g_j$ intersect nontrivially. Let $\g := \g_i$ and $\delta := \g_j$.

    First, $\g$ and $\delta$ do not intersect in an interval. If they did, one has by extending the interval of agreement between the two  that $\g= \delta$, as they are simple (by Theorem \ref{thm: top of len min}).
    
    Thus, if they intersect, they do so transversely and at finitely many points. Let $0< t_1 < \cdots < t_I < 1$ be the nontrivial times of intersection of $\delta$ with $\g$.

    Split $\delta = \beta_1\dot\beta_2$ where $\beta_1 = \delta_{|[0,t_1]}, \beta_2 = \delta_{|[t_1,1]}$, and denote $p:=e_1(\beta_1) = e_0(\beta_2)$. Up to orientation we may assume $\ell(\beta_1) \leq \frac{\ell(\delta)}{2}$.

    Split $\g = \alpha_1 \dot\alpha_2$ with $\alpha_1$ being the first time $\g$ intersects $\delta$ at $p$. Again, up to orientation, we may assume $\ell(\alpha_1) \leq \frac{\ell(\g)}{2}$.

    By definition, we have that the loop $\delta_1 := \beta_1\dot\bar \alpha_1$ has length
    $$\ell(\delta_1) = \ell(\beta_1) + \ell(\alpha_1) \leq \frac{\ell(\delta)}{2}+\frac{\ell(\g)}{2}\leq \ell(\delta).$$
    
    One has that $\delta_1$ is homotopic to an element in $G_{j-1}$. Indeed, $\delta_1$ has a non-smooth point (as $p$ is a transverse intersection point) outside of its endpoint, which, by Lemma \ref{lem: on geodesic loops}, implies $\ell_\xz(\delta_1)<\ell(\delta)$. Thus by assumption on $\delta$, we have that $\delta_1 \in G_{j-1}$.

    Now define the loop $\delta_2 := \alpha_1 \dot\beta_2$. We have two options - either $\ell(\delta_2) \leq \ell(\delta)$ or $\ell(\delta_2)   > \ell(\delta)$. See Figure \ref{fig: combined} for an example of how we build these loops.
    \begin{figure}[h] 
    \centering 
        
        \begin{subfigure}[c]{0.49\textwidth}
            \centering
                \includegraphics[width=0.575\linewidth]{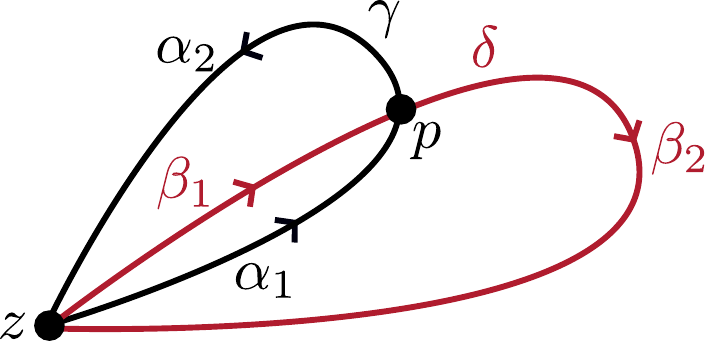}
                \caption{Here $\ell(\beta_1) < \ell(\alpha_1)$.}
                \label{fig: len min gamma delta}
        \end{subfigure}
        ~
        \begin{subfigure}[c]{0.25\textwidth}
            \centering
                \includegraphics[width=0.7\linewidth]{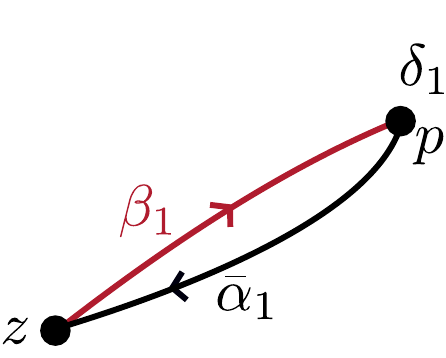}
                \caption{The loop $\delta_1$.}
                \label{fig:delta_1}
        \end{subfigure}
        ~
        \begin{subfigure}[c]{0.25\textwidth}
            \centering
                \includegraphics[width=0.7\linewidth]{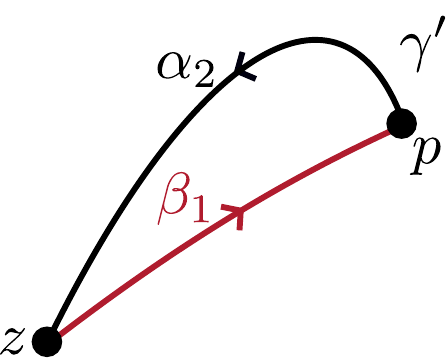}
                \caption{The loop $\g'$.}
                \label{fig: delta_2}
        \end{subfigure}
        \caption{Length-minimising loops $\g$ and $\delta$ intersecting at $p$. The contradiction follows as both $\delta_1$ and $\g'$ (Figures \ref{fig:delta_1} and \ref{fig: delta_2} respectively) have lengths $<\ell(\g)$, and $\g \sim \bar \delta_1 \dot \g'$.}
        \label{fig: combined}
        \end{figure}
    
    In the first case, if $\ell(\delta_2)\leq \ell(\delta)$ then, for the same reasons as for $\delta_1$, we have that $\delta_2\in G_{j-1}$. Thus
    \begin{equation*}
        \delta = \beta_1\dot\beta_2 \sim \beta_1 \dot\bar\alpha_1 \dot\alpha_1\dot \beta_2 = \delta_1\dot\delta_2\in G_{j-1},
    \end{equation*}
    contradicting its definition.

    In the second case, assume $\ell(\delta_2)=\ell(\alpha_1) +\ell(\beta_2) > \ell(\delta)$. In particular, this implies 
    \begin{equation*}
        \ell(\alpha_1) > \ell(\delta) -\ell(\beta_2) = \ell(\beta_1).
    \end{equation*}
    Hence the loops $\delta_1= \beta_1\dot\bar\alpha_1$ and $\g':= \beta_1\dot \alpha_2$ both have lengths $< \ell(\g)$. As such, they are both homotopic to elements in $G_{i-1}$. Thus
    \begin{equation*}
        \g = \alpha_1\dot \alpha_2 \sim \alpha_1\dot\bar\beta_1\dot \beta_1 \dot\alpha_2 = \bar \delta_1\dot\g'.
    \end{equation*}
    Therefore $\g\in G_{i-1}$, contradicting its definition.

    We have reduced all possibilities to contradictions, and thus we are done. 
\end{proof}
The above result, together with Theorem \ref{thm: top of len min}, immediately imply the following corollary. Furthermore, Theorem \ref{thm: second shortest simp and triv} follows as well.
\begin{corollary}\label{cor: length-minimising topology}
    For any length-minimising sequence $(G_1,\g_1),\ldots, (G_n,\g_n)$ with $n\geq 2$ one has:
    \begin{itemize}
        \item the geodesic loop $\g_1$ is a systole;
        \item each geodesic loop $\g_i$ is simple, and intersects every $\g_j, j\neq i$, trivially.
    \end{itemize}
\end{corollary}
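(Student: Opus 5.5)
The plan is to read the corollary off directly from Theorems \ref{thm: top of len min} and \ref{thm: len min seq simplicity}, after first checking the claim about $\g_1$ from the definitions.

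For the first bullet, I will use condition (iii) of Definition \ref{def: length min seq} with $i=1$. Since $G_0=\{\id\}$, this says $\g_1$ is length-minimising with respect to the trivial subgroup. In other words, $\g_1$ is noncontractible, and $\ell(\g_1)\le\ell(\eta)$ for every noncontractible geodesic loop $\eta$ based at $z$. That is exactly the definition of a systole based at $z$. The only check needed is that $\{\id\}\lneq\pi_1\xz$, which holds because $G_0\lneq G_1\le\pi_1\xz$.

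For the second bullet, I first fix $i\in\{1,\dots,n\}$. By (iii), $\g_i$ is length-minimising with respect to the proper subgroup $G_{i-1}$; properness comes from $G_{i-1}\lneq G_i$. Theorem \ref{thm: top of len min} then gives that $\g_i$ is simple.

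For trivial intersection, I take $j\neq i$ and label the pair so that the smaller index comes first. Theorem \ref{thm: len min seq simplicity} applies directly to that ordered pair and gives that the two loops meet only at times $0$ and $1$. Since intersecting trivially is a symmetric condition, this proves the claim for every $j\ne i$.

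There is no real obstacle here; the one point to watch is the properness of each $G_{i-1}$, which is built into condition (i). Finally, Theorem \ref{thm: second shortest simp and triv} follows by applying the corollary with $n=2$ to the sequence $G_1=\{\g_1^m\}_{m\in\Z}$ and $G_2=\pi_1\xz$, where $\g_1$ is the systole and $\g_2$ the second shortest primitive loop. Condition (ii) requires $\g_1\in G_1$ and $\g_2\in G_2$, and both hold by construction. Condition (i) requires $G_1\lneq G_2$, which holds because $\g_2\notin G_1$.
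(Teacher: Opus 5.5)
Your proposal is correct and follows the paper's argument: the systole claim comes from condition (iii) with $G_0=\{\id\}$, simplicity from Theorem~\ref{thm: top of len min} (properness of $G_{i-1}$ coming from condition (i)), and trivial intersection from Theorem~\ref{thm: len min seq simplicity} together with symmetry. In your deduction of Theorem~\ref{thm: second shortest simp and triv}, also record the trivial check $G_0=\{\id\}\lneq G_1=\{\g_1^m\}_{m\in\Z}$, which holds because the systole $\g_1$ is noncontractible.
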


We end by noting that the notion of length-minimising sequences allows for context-dependent definitions of what being ``the $n$-th shortest loop" means. Then Corollary \ref{cor: length-minimising topology} immediately applies, and tells us that these loops are simple and trivially intersect.
\subsection{$n$-exploring loops}\label{subsec: n exploring loops}
We introduce a certain length-minimising sequence that we believe might be of independent interest, which we call \emph{$n$-exploring loops}. This sequence ``explores" the topology of the surface, in the sense of Theorem \ref{thm: euler char of exploring}. Thanks to Theorems \ref{thm: top of len min} and \ref{thm: len min seq simplicity}, we know that each such loop is simple.

Before proceeding with the definition of exploring loops, we need the following notion of a \emph{weakly filled surface}.

\begin{definition}[Weakly filled surface]\label{def: weakly filled surface}
    Let $X$ be a compact hyperbolic surface and $\ug  := (\g_1,\ldots,\g_k)$ be a $k$-tuple of pairwise distinct geodesic loops based at a point $z$.
 
    Then the topological surface \emph{$\Sigma(\ug)$ weakly filled by $\ug$} is built by the following process.
    \begin{itemize}
        \item[1.] Let $\cN_\eps(\ug):= \{ w\in X: d(w,\ug)<\eps\}$ be a regular neighbourhood of $\ug$, where $\eps>0$ is small enough such that $\cN_\eps(\ug)$ retracts to $\ug$. 
        \item[2.] The topological surface $X - \cN_\eps(\ug)$ has $\bigsqcup_{i\in I}C_i$ connected components. We set
        \begin{equation*}
            \Sigma(\ug):=\cN_\eps(\ug) \cup \bigcup_{i\in I_0}C_i,
        \end{equation*}
        where $I_0 = \{ i\in I: C_i \text{ is a topological disk or cylinder}\}$.
    \end{itemize}
\end{definition}

     \emph{Example.} If $\g$ is a simple noncontractible loop, then $\Sigma(\g)$ is a cylinder. If $\g$ has a single self-intersection, then $\Sigma(\g)$ is a topological pair of pants.
     
The weakly filled surface $\Sigma(\ug)$ is connected, since by definition each disk and cylinder is bounded by $\partial \cN_\eps(\ug)$. Furthermore, the specific choice of a small enough $\eps>0$ is topologically inconsequential, as there is an isotopy fixing $T$ from one weakly filled surface to another.

Let $\iota:\Sigma(\ug)\hookrightarrow X$ be the inclusion map. The associated group homomorphism
\begin{equation*}
    \iota_*:\pi_1(\Sigma(\ug),z)\ra \pi_1\xz
\end{equation*}
is injective, as $\Sigma(\ug)$ contains every disk in $X - \cN_\eps(\ug)$, and hence every noncontractible loop in $\Sigma(\ug)$ is noncontractible in $X$.

We say $G(\ug):=\iota_*(\pi_1(\Sigma(\ug),z))\leq \pi_1\xz$ is the subgroup of \emph{$\ug$-discovered loops}. 
\begin{definition}[Exploring sequence]\label{def: exploring loops}
    We define an \emph{exploring sequence} by induction. Let $T_0 = \id$. For $n\geq 1$, if $G(T_{n-1}) \neq \pi_1\xz$, we let $\g_n$ be a length-minimising geodesic loop w.r.t.\ $G(T_{n-1})$ and define $T_n = T_{n-1} \cup \g_n$. Then the length $n$ \emph{exploring sequence} is 
    \begin{equation*}
        ((G(T_1),\g_1), \ldots, (G(T_n),\g_n)    ).
    \end{equation*}
    We call $\g_n$ an \emph{$n$-exploring loop}.
\end{definition}
\emph{Example.} A 2-exploring loop is a second shortest primitive loop, since any loop not homotopic in $X$ to a power of the systole $\g_1$ must leave the cylinder $\Sigma(\g_1)$ of the systole, which has fundamental group $\pi_1(\Sigma(\g_1),z)\cong \{\g_1^m\}_{m\in \Z}$.
    
We end by proving Theorem \ref{thm: euler char of exploring}, which motivates our particular definition of exploring loops.
\begin{theorem}\label{thm: euler char of exploring}
    Let $(G_1,\g_1),\ldots, (G_{n},\g_n)$ be a sequence of exploring loops based at a point $z$. Then the Euler characteristic of the filled surface $\Sigma(\g_1,\ldots,\g_n)$ is
    \begin{equation*}
        \chi(\Sigma(\g_1,\ldots,\g_n)) = 1-n.
    \end{equation*}
\end{theorem}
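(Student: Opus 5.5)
Induction on $n$, with the invariant that $\Sigma_n:=\Sigma(\g_1,\ldots,\g_n)$ is a connected compact surface with boundary, $\chi(\Sigma_n)=1-n$, and $\pi_1(\Sigma_n,z)$ is free of rank $n$ (its image is $G(T_n)$). The key input is Corollary \ref{cor: length-minimising topology}: every $\g_i$ is simple, and distinct $\g_i,\g_j$ meet only at $z$. Hence $\g_1\cup\cdots\cup\g_n$ is a wedge of $n$ circles embedded in $X$, with Euler characteristic $1-n$. The regular neighbourhood $\cN_\eps(T_n)$ deformation retracts onto this wedge, so $\chi(\cN_\eps(T_n))=1-n$. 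It remains to show that adding the disk and cylinder components $C_i$, $i\in I_0$, does not change $\chi$.

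Gluing formula. Each $C_i$ is glued to $\cN_\eps(T_n)$ along boundary circles, which have $\chi=0$. So
\[
\chi(\Sigma_n)=\chi(\cN_\eps(T_n))+\sum_{i\in I_0}\chi(C_i).
\]
Cylinders contribute $0$ and each disk contributes $1$, so the whole task is to rule out disks. Suppose some $C_i$ is a disk. Its boundary is a boundary curve of $\cN_\eps(T_n)$, traced by a cyclic word $w$ in the loops $\g_j^{\pm1}$. Since $C_i$ is a disk, $w$ is trivial in $\pi_1\xz$. Because $\cN_\eps(T_n)$ is a ribbon graph with a single vertex, $w$ is a nonempty, cyclically reduced word in the free group $\pi_1(\cN_\eps(T_n),z)\cong F_n$, and therefore nontrivial there. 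The contradiction will come from injectivity of $\pi_1(\cN_\eps(T_n))\to\pi_1(X)$, established below.

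Injectivity. I would prove by induction that $\pi_1(\cN_\eps(T_n),z)\to\pi_1\xz$ is injective. For this, $\pi_1(\Sigma_{n-1},z)\to\pi_1\xz$ is injective as noted before Definition \ref{def: exploring loops}, so its image $G(T_{n-1})$ is free of rank $n-1$ with basis $\g_1,\ldots,\g_{n-1}$. Indeed, by the inductive hypothesis and the same disk argument at stage $n-1$, $\Sigma_{n-1}$ deformation retracts, up to cylinders, onto $\cN_\eps(T_{n-1})$. The new loop $\g_n\notin G(T_{n-1})$ is simple and meets $\Sigma_{n-1}$ only at $z$, after isotoping $\Sigma_{n-1}$ slightly so that $\g_n$ meets $\cN_\eps$ only near $z$; any other passage of $\g_n$ through a cylinder $C_i$ could be homotoped off it. So $\g_n$ leaves $\Sigma_{n-1}$ through a boundary component and returns through one. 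Van Kampen then identifies $\pi_1$ of $\Sigma_{n-1}\cup\cN_\eps(\g_n)$ with $G(T_{n-1})*\langle\g_n\rangle$. Injectivity into $\pi_1(X)$ should follow because each complementary component of this new surface that is neither a disk nor a cylinder is an incompressible piece. Where a disk component would appear, its boundary word would express a relation, forcing $\g_n$ to be a product in $G(T_{n-1})$ (or in its cylinder extensions), which contradicts $\g_n\notin G(T_{n-1})$.

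Main obstacle. The hardest step is this last one: excluding disk components of $X-\cN_\eps(T_n)$ cleanly. My first approach is Euler-characteristic and boundary bookkeeping. A disk component has boundary word $w$ in which $\g_n$ must appear, since otherwise the disk was already present at stage $n-1$ and was absorbed into $\Sigma_{n-1}$. I would then solve $w=1$ for one occurrence of $\g_n^{\pm1}$ and try to express $\g_n$ through shorter loops lying in $G(T_{n-1})$. If that is awkward, the fallback is a length argument in the style of Theorem \ref{thm: top of len min}. Use the hyperbolic structure to note that a disk bounded by geodesic segments of the $\g_j$ with all corners at $z$ is a geodesic polygon. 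Its interior angles at $z$ sum to less than $2\pi$, and Gauss--Bonnet forces the polygon to have at least three sides. Then use minimality of $\ell(\g_n)$ to show that one could replace $\g_n$ by a shorter loop outside $G(T_{n-1})$, which is a contradiction.
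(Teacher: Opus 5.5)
Your reduction is sound. By Corollary \ref{cor: length-minimising topology} the $\g_i$ form an embedded one-vertex ribbon graph, so $\chi(\cN_\eps(T_n))=1-n$ and
\[
\chi(\Sigma(T_n)) = 1-n+\#\{\text{disk components of } X-\cN_\eps(T_n)\}.
\]
This makes the theorem \emph{equivalent} to excluding complementary disks, and that is exactly the step you leave open. So as written the proposal is not yet a proof.

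The injectivity paragraph does not supply the missing step. Injectivity of $\pi_1(\cN_\eps(T_n),z)\to\pi_1(X,z)$ is itself equivalent to having no complementary disk, so appealing to incompressibility just restates what must be shown. The paragraph also asserts something false: if $\Sigma(T_{n-1})$ has absorbed a cylinder, $G(T_{n-1})$ is still free of rank $n-1$, but $\g_1,\ldots,\g_{n-1}$ generate only a proper subgroup of it, since the loop running through the cylinder is missing. Your first strategy, solving $w=1$ for an occurrence of $\g_n^{\pm1}$, fails as stated when $\g_n$ occurs twice. For instance, $w=\g_n u\g_n^{-1}v$ with $u,v\in\langle\g_1,\ldots,\g_{n-1}\rangle$ yields only a conjugacy relation, not $\g_n\in G(T_{n-1})$. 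The Gauss--Bonnet fallback gives no mechanism for a contradiction.

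The missing idea is a multiplicity count, and no geometry beyond the Corollary is needed.
\begin{enumerate}
\item For small $\eps$, $\g_n$ meets $\cN_\eps(T_{n-1})$ only in two short radii at $z$. So the remaining arc $a$ lies in a single component $C$ of $X-\cN_\eps(T_{n-1})$.
\item $C$ is neither a disk nor a cylinder, since otherwise $\g_n\subseteq\Sigma(T_{n-1})$ and $\g_n\in G(T_{n-1})$. In particular $\g_n$ never enters an absorbed cylinder, so nothing needs to be homotoped off.
\item Up to isotopy, $\cN_\eps(T_n)=\cN_\eps(T_{n-1})\cup\cN(a)$ with $\cN(a)$ a band. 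The new complementary components are those of $C-\cN(a)$. Each of these has a long side of the band in its boundary, and each side lies on exactly one boundary cycle.
\item Hence the boundary word of a disk $D$ contains $\g_n^{\pm1}$ at most twice. It contains it at least once, since otherwise $D$ was already a complementary disk at stage $n-1$, which your formula and the inductive hypothesis exclude.
\item If $\g_n^{\pm1}$ occurs exactly once, then $w=1$ gives $\g_n\in\langle\g_1,\ldots,\g_{n-1}\rangle\subseteq G(T_{n-1})$, a contradiction. Equivalently, $a$ is boundary-parallel in $C$.
\item If it occurs twice, then $C=D\cup\cN(a)$ has $\chi(C)=0$, so $C$ is an annulus, again a contradiction.
\end{enumerate}

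For comparison, the paper argues that $\g_n$ crosses $\partial\Sigma(T_{n-1})$ exactly twice and so attaches a piece $P$ with $\chi(P)=-1$. The check above is precisely what guarantees that no disk is created when forming $P$, a point the paper leaves implicit. Your Euler characteristic formula has the merit of isolating it.
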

\begin{remark}
    In particular, the process of choosing an $n$-exploring loop terminates at $n = 2g-1$, with a weakly filling system of based geodesic loops.
\end{remark}
\begin{proof}
    We prove the claim by induction on $n$.
    The base case $\ug_1 = \g_1$ yields that $\Sigma(\ug_1)$ is a cylinder, since the systole, $\g_1$, is always simple. Thus $\chi(\Sigma(\ug_1)) = 0 = 1-1$. We now assume that $\chi(\Sigma(\ug_{n-1})) = 2-n$ for $n\geq 2$.

    One has that $\g_n$ intersects $\partial \Sigma(\ug_{n-1})$ exactly twice, c.f.\ Figure \ref{fig: intersectionswithregneighs}. Indeed, as $\g_n$ is exploring and based at $z$, it has to leave $\Sigma(\ug_{n-1})$ and return to it, and thus intersects $\partial \Sigma(\ug_{n-1})$ at least twice.
    As by Theorem \ref{thm: top of len min}, $\g_n$ is simple, in a ball of small enough radius around $z$, say $B$, the loop $\g_n$ forms two radii in $B$, intersecting at $z$. Since by Theorem \ref{thm: len min seq simplicity} $\g_n$ does not intersect any of the loops in $\ug_{n-1}$, one may choose $\eps$ small enough such that
    \begin{equation*}
        \g_n \cap \cN_\eps(\ug_{n-1}) \subseteq B,
    \end{equation*}
    which can only intersect $\partial\cN_\eps(\ug_{n-1})$ exactly twice. Since $\partial \Sigma(\ug_{n-1}) \subseteq \partial \cN_\eps(\ug_{n-1})$, $\g_n$ intersects the boundary exactly twice.
    \begin{figure}[h!]
    \centering
    \begin{subfigure}[t]{0.45\textwidth}
        \centering
        \includegraphics[width=0.8\linewidth]{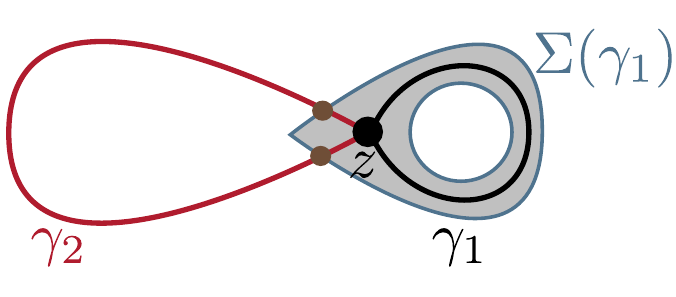}
        \caption{Intersecting one component of\\  $\partial\Sigma(\g_1)$ twice.}
        \label{fig: regneigh1boundaries}
    \end{subfigure}
    ~
    \begin{subfigure}[t]{0.45\textwidth}
        \centering
        \includegraphics[width=0.8\linewidth]{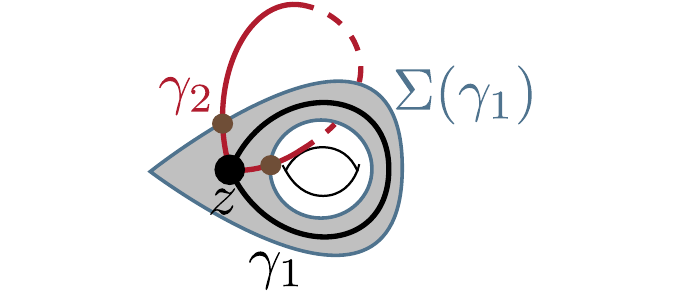}
        \caption{Intersecting two different components of $\partial \Sigma(\g_1)$.}
        \label{fig: regneigh2boundaries}
    \end{subfigure}
    \caption{Illustration of the two different possibilities for intersections of $\g_2$ with $\partial\cN_\eps(\g_1)$. The grey region is $\Sigma(\g_1)$.
    }
    \label{fig: intersectionswithregneighs}
\end{figure}
    
     By the above, the loop $\g_n$ can intersect either one or two boundary components of $\Sigma(\ug_{n-1})$. We note that by Definition \ref{def: weakly filled surface}, if $\g_n$ intersects two different boundary components, then they are not freely homotopic in $X$ (up to orientation). Indeed, the cylinder between them would then be in $\Sigma(\ug_{n-1})$.
    
    As $\g_n$ is simple, the set $\g_n \cap (X-\Sigma(\ug_{n-1}))$, together with the boundary components $\g_n$ intersects, fill a topological pair of pants or once-holed torus $P$ with $P\cap \Sigma(\ug_{n-1}) \subseteq \partial P$, which has $\chi = -1$. By the induction hypothesis and additivity of the Euler characteristic:
    \begin{equation*}
        \chi(\Sigma(\ug_n)) = \chi(\Sigma(\ug_{n-1})) + \chi(P) = 1-n.
    \end{equation*}
\end{proof}
\subsection{The geometry of short loops}\label{subsec: geom of two shortest loops}
In this section we recall two standard results, 
Lemmas \ref{lem: l_ loop collar lemma} and \ref{lem: l_ injrad bound on area around systole}, 
on the geometry around simple geodesic loops on surfaces. We provide a proof of these for the sake of completeness.

We begin with the definition of the injectivity radius based at a point $z$.
\begin{definition}[Injectivity radius]
    Let $\g$ be a systole based at $z$. Then the \emph{injectivity radius} is defined to be
    \begin{equation*}
        \ir\xz:= \frac{\ell(\g)}{2}.
    \end{equation*}
\end{definition}
     \begin{lemma}[Loop Collar Lemma]\label{lem: l_ loop collar lemma}
        Let $\g$ be a systole at $z$. Then any other geodesic loop $\delta$ based at $z$ which is not homotopic w.r.t.\ $z$ to a power of $\g$ has length at least
        \begin{equation*}
            \ell(\delta)\geq 2\ash \prt {\frac{1}{\sinh \frac {\ell(\g)}2}}.
        \end{equation*}
    \end{lemma}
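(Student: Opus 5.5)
Write $\ell := \ell(\g)$ and $w := \ell(\delta)$. The plan is to lift to $\H$ and compare the two deck transformations. Fix a lift $\tilde z\in\H$ and let $A,B\in\G$ be the lifts of $\g$ and $\delta$ based at $\tilde z$, so that $d(\tilde z, A\tilde z)=\ell$ and $d(\tilde z,B\tilde z)=w$. Since $\g$ is a systole, the displacement of every nonidentity element of $\G$ at $\tilde z$ is at least $\ell$. It follows that the open ball $B(\tilde z,\ell/2)$ is embedded, and more usefully that the Dirichlet domain centred at $\tilde z$ contains this ball.

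\emph{Step 1: reduce to two perpendicular bisectors.} Let $P_A$ be the perpendicular bisector of $[\tilde z, A\tilde z]$ and $P_B$ that of $[\tilde z, B\tilde z]$. The point $\tilde z$ is at distance $\ell/2$ from $P_A$ and $w/2$ from $P_B$. I would first show that $P_A$ and $P_{A^{-1}}$ are disjoint. This is where the hypothesis that $\delta$ is not a power of $\g$ matters: it ensures that $B$ and $B^{-1}$ lie outside the cyclic group $\langle A\rangle$, and hence that their bisectors are distinct from those of the powers of $A$.

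\emph{Step 2: the key claim.} The claim is that $P_B$ cannot intersect $P_A$ or $P_{A^{-1}}$ too close to $\tilde z$. More precisely, I would consider the segment $[\tilde z,B\tilde z]$ together with the lifted loop and use the following argument. If $w$ were smaller than the claimed bound, the geodesic $[\tilde z,B\tilde z]$ would have to pass within distance less than the collar width
\[
r(\ell)=\ash\prt{\frac{1}{\sinh(\ell/2)}}
\]
of $\tilde z$ throughout.

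A cleaner route is via the quadrilateral/pentagon trigonometry. I would drop the perpendicular from $\tilde z$ to $P_A$, with foot $m$ the midpoint of $[\tilde z,A\tilde z]$, and likewise the perpendicular to $P_B$. The classical fact I want is this: if two perpendicular bisectors $P_A$ and $P_B$ through points at distances $a=\ell/2$ and $b=w/2$ from $\tilde z$ intersect, then in the resulting right-angled configuration the angle $\theta$ between the segments $[\tilde z,A\tilde z]$ and $[\tilde z,B\tilde z]$ satisfies a trirectangle relation. If instead they are disjoint (ultraparallel), the hyperbolic identity for a right-angled pentagon gives $\sinh a \sinh b \geq 1$ in the borderline, asymptotically parallel case. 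I therefore expect the target inequality to be exactly $\sinh(\ell/2)\sinh(w/2)\ge 1$, which is equivalent to $w\ge 2\ash(1/\sinh(\ell/2))$.

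\emph{Step 3: argue that $P_A$ and $P_B$ do not meet.} Suppose they met at a point $q$. The point $q$ is equidistant from $\tilde z$, $A\tilde z$ and $B\tilde z$. Using the triangle $\tilde z$, $A\tilde z$, $B\tilde z$ and the element $A^{-1}B$ (or $BA^{-1}$), I would produce an element of $\G$ outside $\langle A\rangle$ whose translate of $\tilde z$ is strictly closer than $w$, or else an element with displacement smaller than $\ell$. Choosing $\delta$ shortest among the loops not homotopic to powers of $\g$ (which suffices, since the bound only needs proving for the shortest such loop), the first alternative contradicts minimality and the second contradicts the fact that $\g$ is a systole. This is a Dirichlet-domain adjacency argument.

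\emph{Main obstacle.} I expect Step 3 to be the main difficulty. The whole plan assumes that the only bisectors near $\tilde z$ are $P_{A^{\pm1}}$, and that $P_B$ is the next face; I must justify that $P_B$ is disjoint from $P_{A^{\pm1}}$ rather than merely separated from $\tilde z$. If this fails, I would fall back on a more robust but cruder route. In that route I would project $\delta$ onto the closed geodesic freely homotopic to $\g$ inside the standard collar of width $r(\ell(\g_0))$, then compare $\ell(\g_0)\le \ell$. Since $\delta$ must exit the embedded neighbourhood of the simple loop $\g$ (simple by Theorem \ref{thm: top of len min}) and return, it has length at least twice the collar width. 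The monotonicity of $r$ then finishes the argument.
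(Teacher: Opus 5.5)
Your proposal has genuine gaps. You correctly identify the target $\sinh(\ell/2)\sinh(w/2)\ge 1$, but neither route proves it.

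\textbf{The main route.} Step 2 is wrong as stated. Let $\theta$ be the angle at $\tilde z$ between $[\tilde z,A\tilde z]$ and $[\tilde z,B\tilde z]$. Then $P_A$ and $P_B$ are disjoint (or asymptotic) with $\tilde z$ between them exactly when $\sinh a\sinh b\ge 1+\cosh a\cosh b\cos\theta$. So disjointness gives $\sinh a\sinh b\ge 1$ only when $\theta\le\pi/2$; for $\theta$ close to $\pi$ the bisectors are disjoint for all $a,b$. In the nested configuration disjointness gives nothing at all.

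More seriously, the claim of Step 3 is false, so no Dirichlet-adjacency argument can establish it. Take the regular octagon with angles $\pi/4$ and opposite sides paired, which gives a genus-$2$ surface, and let $z$ be its centre. Every side pairing realises the systole at $z$. If $A$ and $B$ pair two adjacent sides, then $B\notin\langle A\rangle$ is a second shortest primitive loop. Yet $P_A\cap P_B$ is a vertex of the Dirichlet domain, and no shorter group element exists to give your contradiction. The intersecting case genuinely occurs, and your plan has no way to handle it. (Incidentally, $P_A\cap P_{A^{-1}}=\emptyset$ has nothing to do with $B$: both bisectors are perpendicular to the axis of $A$.)

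\textbf{The fallback.} It fails because $z$ need not lie on the core geodesic $\g_0$, so $\delta$ need not cross twice the collar width. Set $d:=d(z,\g_0)$ and $r_0:=\ash(1/\sinh(\ell_0/2))$. Leaving and re-entering the collar only forces $\ell(\delta)\ge 2(r_0-d)$. Since $\sinh(\ell/2)=\sinh(\ell_0/2)\cosh d$, one computes $\cosh d\,\sinh(r_0-d)-\sinh r_0=-\sinh d\,\cosh(r_0-d)<0$. Hence $r_0-d<\ash(1/\sinh(\ell/2))$ whenever $d>0$, so even the corrected collar bound is strictly weaker than the lemma.

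Concretely, take a pair of pants with three very short cuffs and let $z$ be the midpoint of the seam joining two of them. Both loops at $z$ around these cuffs have length close to $2\ash 1$, far below $2r_0$, and the lemma is almost sharp there.

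\textbf{The missing ingredient.} What you need is the universal displacement inequality $\sinh\frac12 d(\tilde z,A\tilde z)\,\sinh\frac12 d(\tilde z,B\tilde z)\ge 1$. It holds for all $\tilde z$ whenever $A,B$ generate a non-elementary torsion-free Fuchsian group, and it is exactly what the paper quotes from Beardon. A systole is primitive and $B\notin\langle A\rangle$, so $\langle A,B\rangle$ is non-elementary, and the lemma follows immediately.
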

    \begin{proof}
        By \cite[page 192]{Beardon1983}, one has the bound
        \begin{equation*}
            \sinh\prt{\frac{\ell(\g)}{2}}\sinh \prt{\frac{\ell(\delta)}{2}}\geq 1
        \end{equation*}
        which implies the lemma.
    \end{proof}
    \begin{lemma}\label{lem: l_ injrad bound on area around systole}
        Let $r>0$. Then there exists a family of collars contained in $X$, each such collar of area at most $2e^r$, such that if $z\in X$ with $\ir\xz = r$, then $z$ is in one of these collars.
    \end{lemma}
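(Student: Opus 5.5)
The family of collars will be indexed by the primitive closed geodesics that are systoles' free homotopy classes: to each point $z$ with $\ir\xz = r$ we attach the closed geodesic freely homotopic to its based systole, and place $z$ in an explicit collar around it. Concretely, fix $z$ with $\ir\xz=r$ and let $\g$ be a systole based at $z$, so $\ell_\xz(\g)=2r$. By Theorem \ref{thm: top of len min} (or \cite[Lemma 4.1.5]{buser1992geometry}) $\g$ is simple and noncontractible, hence freely homotopic to a primitive closed geodesic $\g_0$, which is simple, with $\ell(\g_0)\le 2r$. The collar attached to $\g_0$ will be
\begin{equation*}
    C(\g_0) := \{w\in X : d(w,\g_0) \le \rho(\ell(\g_0),r)\},
\end{equation*}
where $\rho$ is defined so that every point whose based geodesic loop in the class of $\g_0$ has length $2r$ lies at distance at most $\rho$ from $\g_0$. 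The family is then $\{C(\g_0)\}$ over simple closed geodesics $\g_0$ with $\ell(\g_0)\le 2r$.

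The distance computation is done in the universal cover, exactly as in the proof of Lemma \ref{lem: general space average}. Normalise so that a lift of $\g_0$ acts by $w\mapsto e^{\ell_0}w$ with $\ell_0=\ell(\g_0)$. For a lift $\tilde z = x+iy$ we have $\ell_\xz(\g) = w_{\ell_0}(x/y)$, and the distance from $\tilde z$ to the axis $i\R_{>0}$ is $d$ with $\sinh d = |x|/y$. Using (\ref{eq: linear_def of w ell}), $\cosh \ell_\xz(\g) = \cosh\ell_0 + (\cosh \ell_0 -1)\sinh^2 d$, which is the classical identity $\sinh(\ell_\xz(\g)/2)=\sinh(\ell_0/2)\cosh d$. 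Hence $\cosh d = \sinh r/\sinh(\ell_0/2)$, and we set $\rho:=\ach\prt{\sinh r/\sinh(\ell_0/2)}$.

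To bound the area, note the subset of $X$ within distance $\rho$ of $\g_0$ is covered by the image of the lifted region $\{1\le|w|\le e^{\ell_0},\ \text{dist to axis}\le\rho\}$. This region has area $2\ell_0\sinh\rho$. So
\begin{equation*}
    \area C(\g_0) \le 2\ell_0 \sinh \rho \le 2\ell_0\cosh\rho = \frac{2\ell_0\sinh r}{\sinh(\ell_0/2)} \le 4\sinh r \le 2e^r,
\end{equation*}
using $x\le \sinh x$ with $x=\ell_0/2$. Since we only need an upper bound on the area, it does not matter whether $C(\g_0)$ is actually embedded, so the full collar lemma is not required.

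The main obstacle is bookkeeping rather than geometry. We need to make precise that the projection of the lifted strip onto $X$ contains every $z$ with $\ir\xz=r$ whose systole lies in the class of $\g_0$; this follows from the orbit–stabiliser reduction already used in Lemma \ref{lem: general space average}. We also need to check that the indexing (one collar per simple closed geodesic of length $\le 2r$) is what the statement means by ``a family of collars''. If a collar of exactly the width $\rho$ at level $r$ is wanted, the bound above is uniform in $\ell_0\le 2r$, so nothing changes.
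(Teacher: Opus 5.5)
Your proposal is correct and follows essentially the same route as the paper. You index the collars by simple closed geodesics of length at most $2r$, take width $\ach\bigl(\sinh r/\sinh(\ell_0/2)\bigr)$, and bound the area by the Fermi-coordinate computation $2\ell_0\sinh\rho\le 2\ell_0\cosh\rho\le 4\sinh r\le 2e^r$. The only differences are cosmetic: you derive the identity $\sinh(\ell_\xz(\g)/2)=\sinh(\ell_0/2)\cosh d$ from $w_{\ell_0}$ instead of citing Beardon, and you spell out the covering argument that the paper leaves implicit.
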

    \begin{proof}\label{prf: proof of injrad bound on area around}
    Let $\ir\xz = r$ and $\g\in \pi_1\xz$ be a systole, i.e.\ $\ell_\xz(\g) = 2r$. The systole $\g$ is freely homotopic to a simple closed geodesic of some length $\ell \leq 2r$ (c.f.\ \cite[Theorem 1.6.6]{buser1992geometry}), which we will denote by $[\g]$.

    By \cite[Theorem 7.35.1]{Beardon1983}, the distance of $z$ to $[\g]$ has the following expression:
    \begin{equation*}
        \cosh d(z,[\g]) = \frac{\sinh r}{\sinh \frac \ell 2} =: \cosh d_{\max}.
    \end{equation*}
    Thus $z$ is in the $d_{\max}$-neighbourhood of $[\g]$ on the surface $X$. The simple closed geodesics in $X$ of length at most $2r$, and their $d_{\max}$-neighbourhoods, form the core geodesics of the family of collars mentioned in the lemma.
    
    The volume of the $d_{\max}$-neighbourhood is less than the volume of the corresponding collar in the cylinder of the closed geodesic $[\g]$. This volume may be computed using Fermi coordinates
    \begin{equation*}
        \int_0^\ell\int_{-d_{\max}}^{d_{\max}}\cosh \rho \d\rho \td t = 2\ell\sinh d_{\max}.
    \end{equation*}
    As $\sinh d_{\max} \leq \cosh d_{\max}$ 
    \begin{equation*}
        2\ell\sinh d_{\max} \leq 2\ell \cosh d_{\max} = \frac{2\ell \sinh r}{\sinh \frac{\ell}{2}}.
    \end{equation*}
    As $\frac{x}{\sinh \frac{x}{2}} \leq 2$ and $\sinh x \leq e^x/2$, the volume of the $d_{\max}$-neighbourhood, which $z$ is in, is at most $2e^r$, proving the lemma.
\end{proof}
\section{Asymptotics of the off-diagonal terms}\label{sec: offdiagterms}
As a few sections have gone by, we remind the reader of the goal of this final section, which is to prove Proposition \ref{prop: off diagonal term claim}. Namely, that for fixed $\tau>0, f$ and $L\geq 1$, the asymptotic
\begin{equation*}
    \Eg\left[ \od \xz\right] = O_{f,L,\tau} \prt{\frac 1{g^2}},
\end{equation*}
holds, where
\begin{equation*}
    \od \xz  = \sum_{\gamma_0\neq \gamma_1^{\pm 1}\in \cP}\sum_{n,m\geq 1}k(\ell_\xz(\g^n))k(\ell_\xz(\g^m)).
\end{equation*}

We proceed to define for each $\xz\in \Omega_g$ the following two sets
\begin{equation}
    \cG_L\xz := \{\g\in \pi_1\xz: \g \text{ primitive and } \ell_\xz(\g)\leq L \}
\end{equation}
and
\begin{equation*}
    N_L\xz := \frac 12 \# \cG_L\xz,
\end{equation*}
where $\cG_L\xz$ is the set of primitive geodesic loops of length $\leq L$, and $N_L\xz$ counts their number, forgetting their orientation. By the compact support of $k,\; \supp k \subseteq [0,L]$, the double sum in the off-diagonal is over pairs of elements in $\cG_L\xz$.

We prove the following simple lemma, which forms one of the reasons for the definition of second shortest primitive loop. Namely, if there are at least two different short primitive loops, then the systole and second shortest primitive loop must be two of them.
\begin{lemma}\label{lem: systole and second shortest in}
    Let $\g_1$ be a systole, and $\g_2$ be a second shortest primitive loop with respect to it, based at a point $z\in X$.
    If $N_L\xz \geq 2$, then 
    \begin{equation*}
        \g_1,\g_2\in \cG_L\xz.
    \end{equation*}
\end{lemma}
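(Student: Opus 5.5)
The plan is to argue directly from the definitions of systole and length-minimising loop (Definition \ref{def: length min loop}), with no geometric input needed. Since $N_L\xz\geq 2$, the set $\cG_L\xz$ contains at least four elements, namely two primitive loops $\alpha,\beta$ with $\beta\neq\alpha^{\pm1}$ together with their inverses, all of length at most $L$.

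First, $\g_1\in\cG_L\xz$. A systole is primitive: if $\g_1=\eta^k$ with $|k|\geq 2$, then $\eta$ is noncontractible. On a hyperbolic surface the geodesic loop of $\eta^k$ has length $d(\tilde z,\tilde\eta^k\tilde z)$, and this is at least $|k|$ times the translation length of $\tilde\eta$. The cleaner route, however, is to compare directly: $\eta^k$ is homotopic to the $k$-fold concatenation of the geodesic loop of $\eta$, which is non-smooth at $z$ or is a closed geodesic traversed $k$ times. In the latter case it has length $|k|\ell(\eta)>\ell(\eta)$; in the former, Lemma \ref{lem: on geodesic loops}(iii) gives $\ell(\g_1)\leq |k|\ell(\eta)$. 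More simply, I would use that $\ell_\xz(\eta^k)\geq \ell_\xz(\eta)$ fails only if it contradicts minimality, so in any case $\ell(\eta)\leq\ell(\g_1)$ with $\eta$ noncontractible, and strict inequality follows since $\tilde\eta^k\neq\tilde\eta$ and the displacement of powers of a hyperbolic element grows strictly in $|k|$. Alternatively, one can simply invoke Theorem \ref{thm: top of len min}: $\g_1$ is simple, and simple noncontractible loops are primitive. Either way $\g_1$ is primitive, and by minimality $\ell(\g_1)\leq\ell(\alpha)\leq L$, so $\g_1\in\cG_L\xz$.

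Second, $\g_2\in\cG_L\xz$. Among $\alpha,\beta$, at least one is not in $G=\{\g_1^m\}_{m\in\Z}$. Indeed, the only primitive elements of $G$ are $\g_1^{\pm1}$, by uniqueness of the primitive root, since $\g_1$ is primitive. As $\beta\neq\alpha^{\pm1}$, they cannot both lie in $\{\g_1,\g_1^{-1}\}$. Calling this element $\eta\notin G$, the length-minimality of $\g_2$ with respect to $G$ gives $\ell(\g_2)\leq\ell(\eta)\leq L$. It remains to see that $\g_2$ is primitive. By Theorem \ref{thm: top of len min}, $\g_2$ is simple; it is also noncontractible since it lies outside $G$. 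A simple noncontractible loop on a closed hyperbolic surface is primitive. This is the one step needing an outside fact; I would cite it from \cite{buser1992geometry}, or argue via the lift: if $\g_2=\eta^k$ with $|k|\geq2$, then the loop $\eta$, which has strictly shorter length, lies in $G$, as $\g_2$ is length-minimising. But then $\g_2=\eta^k\in G$, a contradiction. This last argument is self-contained and is the one I would use, with $\ell_\xz(\eta)<\ell_\xz(\eta^k)$ for $|k|\geq 2$ as the only (standard) input, coming from strict growth of displacement under powers of a hyperbolic isometry. Hence $\g_2\in\cG_L\xz$.
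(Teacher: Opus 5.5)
Your proposal is correct and follows the paper's proof. Minimality of the systole gives $\ell(\g_1)\leq L$. A primitive $\delta\in\cG_L\xz$ with $\delta\neq\g_1^{\pm1}$ lies outside $\{\g_1^m\}_{m\in\Z}$, so $\ell(\g_2)\leq\ell(\delta)\leq L$.

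You also check that $\g_1$ and $\g_2$ are primitive, which the paper takes for granted. Your clean argument for $\g_2$ (a proper root $\eta$ would be strictly shorter, hence in $G$, forcing $\g_2\in G$) works verbatim for $\g_1$ with $G=\{\id\}$. You can therefore drop the meandering first paragraph, in particular the bound $\ell(\g_1)\leq|k|\ell(\eta)$, which points the wrong way.
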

\begin{proof}
    Since $\g_1$ is a minimal-length primitive geodesic loop, $\g_1\in N_L\xz$. Since $N_L\xz\geq 2$, there exists some primitive $\delta \in \cG_L\xz$ with $\delta \neq \g_1^{\pm 1}$. Thus $\delta \not\in \{\g_1^m \}_{m\in \Z}$, and hence by definition 
    \begin{equation*}
        \ell(\g_2) \leq \ell(\delta) \leq L,
    \end{equation*}
    which implies that $\g_2\in \cG_L\xz$.
\end{proof}

Define
\begin{equation}\label{eq: ndef of XL}
    X_L:= \{z\in X: N_L\xz \geq 2 \},
\end{equation}
i.e.\ the set of points with at least two distinct primitive geodesic loops.
This set is of fundamental importance, and its properties are expanded upon in the next section.
\subsection{Properties of $X_L$} 
Before endeavouring to prove geometric properties of points in $X_L$, we will need the following standard estimate.
\begin{lemma}\label{lem: nmaximum lattice points}
    Let $X$ be a hyperbolic surface, $z\in X$, and $A>0$ some constant. Then
    \begin{equation*}
        |\{\gamma\in \pi_1\xz: \ell_\xz(\g)\leq A \} | \leq \frac{\sinh^2\frac{A+\ir\xz}{2}}{\sinh^2 \frac{\ir\xz}{2}}.
    \end{equation*}
\end{lemma}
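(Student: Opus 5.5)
This is a standard disjoint-balls packing argument in the universal cover. Write $r:=\ir\xz$. Lift $X=\G\ba\H$ and fix a lift $\tilde z\in\H$ of $z$. Each $\g\in\pi_1\xz$ corresponds to a unique deck transformation $\tilde\g\in\G$, and $\ell_\xz(\g)=d(\tilde z,\tilde\g.\tilde z)$. The set we want to count is therefore in bijection with the orbit points $\tilde\g.\tilde z$ lying in the closed ball $\bar B(\tilde z,A)$. Distinct group elements give distinct orbit points, because $\G$ acts freely.

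The first step is to show that the open balls $B(\tilde\g.\tilde z, r/2)$, for $\tilde\g\in\G$, are pairwise disjoint. Take $\tilde\g_1\neq\tilde\g_2$. Then
\begin{equation*}
d(\tilde\g_1.\tilde z,\tilde\g_2.\tilde z)=d(\tilde z,\tilde\g_1^{-1}\tilde\g_2.\tilde z)=\ell_\xz(\g_1^{-1}\g_2).
\end{equation*}
Since $\g_1^{-1}\g_2\neq\id$, this length is at least the length $2r$ of a systole at $z$, by the definition of $\ir\xz$. Balls of radius $r/2$ around points at mutual distance at least $2r$ are certainly disjoint. Radius $r$ would also suffice, but $r/2$ is what matches the stated bound.

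The second step is the volume comparison. Every ball $B(\tilde\g.\tilde z,r/2)$ with $d(\tilde z,\tilde\g.\tilde z)\le A$ lies inside $B(\tilde z,A+r/2)$, by the triangle inequality. The hyperbolic area of a disc of radius $\rho$ is $4\pi\sinh^2(\rho/2)$. Summing the areas of the disjoint small balls therefore gives
\begin{equation*}
N\cdot 4\pi\sinh^2\frac r4\le 4\pi\sinh^2\frac{A+r/2}{2}.
\end{equation*}
This is a bound of the right form but with slightly different arguments from those in the statement.

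The third step is to reconcile this with the exact form in the statement. I would use disjoint balls of radius $r$, which is legitimate since the centres are at distance at least $2r$ apart. These lie in $B(\tilde z,A+r)$, so
\begin{equation*}
N\le\frac{\sinh^2\frac{A+r}{2}}{\sinh^2\frac r2},
\end{equation*}
which is exactly the claimed inequality.

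The only point that needs care is this choice of packing radius: it must be $r$ and not $r/2$, and the justification is the distance bound $2r=\ell(\text{systole})$ from the first step. Everything else is routine.
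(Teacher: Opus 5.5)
Your argument is correct and is essentially the paper's packing argument. The paper places open balls of radius $\ir\xz$ (not $\ir\xz/2$) around the orbit points, which are disjoint because distinct orbit points are at distance at least $2\ir\xz$, and fit inside the ball of radius $A+\ir\xz$; your second step with radius $r/2$ only yields a weaker bound and can simply be dropped.
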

\begin{proof}\label{prf: proof of lattice point counting lemma}
    By lifting the surface to $\H$, $\#\{\g\in \pi_1\xz: \ell_\xz(\g) \leq A \}$ is equivalent to counting the size of 
    \begin{equation*}
        \{\g\in\pi_1\xz:d(\tilde z,\tilde\g.\tilde z)\leq A   \},
    \end{equation*}
    where $\tilde z, \tilde \g$ are appropriate lifts.
    
    Recall that a hyperbolic ball of radius $R$ has area
    \begin{equation*}
        4\pi\sinh^2\frac{R}{2}.
    \end{equation*}
    
    Around each point $\tilde\g.\tilde z \in \H$ such that $d(\tilde z,\tilde \g.\tilde z)\leq A$, put a ball of radius $\ir\xz$. By the definition of the injectivity radius, the interiors of these balls are disjoint, and are contained in the ball of radius $A+\ir\xz$ around $z$, by the triangle inequality. Thus 
    \begin{equation*}
        |\{\g\in \pi_1\xz: d(\tilde z,\tilde \g.\tilde z)\leq L \}| \leq \frac{\vol B_{A+\ir\xz}}{\vol B_{\ir\xz}} = \frac{\sinh^2\frac{A+\ir\xz}{2}}{\sinh^2\frac{\ir\xz}{2}}.
    \end{equation*}
    We note that the above expression is monotone decreasing, and converges to $e^A$ as the injectivity radius $ \ra\infty$.
\end{proof}
An important property of points $z\in X_L$, i.e.\ points with two distinct short primitive geodesic loops, is the following uniform \emph{lower bound} on the injectivity radius.
\begin{lemma}[Injectivity Radii on $X_L$]\label{lem: inj radii on XL}
    For a point $z\in X_L$ one has the following explicit lower bound on the injectivity radius in terms of $L$
    \begin{equation*}
        \ir \xz\geq\frac 12 a(L),
    \end{equation*}
    where $a(L):= 2\ash \prt{\frac{1}{\sinh \frac L2}}$.
\end{lemma}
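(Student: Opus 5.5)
The plan is to combine Lemma \ref{lem: systole and second shortest in} with the Loop Collar Lemma, Lemma \ref{lem: l_ loop collar lemma}, and then use monotonicity of the function $x\mapsto 2\ash\prt{1/\sinh(x/2)}$.

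First, let $z\in X_L$. Fix a systole $\g_1$ based at $z$ and a second shortest primitive loop $\g_2$ with respect to it. Since $N_L\xz\geq 2$, Lemma \ref{lem: systole and second shortest in} gives $\g_1,\g_2\in\cG_L\xz$, so in particular
\begin{equation*}
    \ell(\g_2)\leq L.
\end{equation*}
By Definition \ref{def: second shortest primitive loop}, $\g_2\notin\{\g_1^m\}_{m\in\Z}$, so $\g_2$ is not homotopic relative to $z$ to a power of $\g_1$. Lemma \ref{lem: l_ loop collar lemma} therefore applies with $\delta=\g_2$, giving
\begin{equation*}
    L\geq \ell(\g_2)\geq 2\ash\prt{\frac{1}{\sinh\frac{\ell(\g_1)}{2}}}.
\end{equation*}

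Next, the function $\phi(x):=2\ash\prt{1/\sinh(x/2)}$ is a strictly decreasing bijection of $(0,\infty)$ onto itself. It is also an involution: setting $u=\phi(x)$ gives $\sinh(u/2)=1/\sinh(x/2)$, a relation symmetric in $x$ and $u$. Hence $L\geq\phi(\ell(\g_1))$ implies $\ell(\g_1)\geq\phi(L)=a(L)$. Since $\ir\xz=\ell(\g_1)/2$ by definition, this yields $\ir\xz\geq\frac12 a(L)$.

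The main point needing care is choosing $\delta=\g_2$ rather than an arbitrary second element of $\cG_L\xz$: the collar lemma needs $\delta$ not homotopic to a power of $\g_1$. Lemma \ref{lem: systole and second shortest in} supplies exactly such a loop of length at most $L$. The remaining step is the elementary inversion of the collar inequality, which is straightforward.
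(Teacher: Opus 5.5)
Your proof is correct and follows the paper's argument: both rest on the loop collar lemma together with the fact that $a$ is a decreasing involution, $a(a(x))=x$. The only difference is cosmetic. You argue directly, using the second shortest primitive loop $\g_2$ supplied by Lemma \ref{lem: systole and second shortest in}. The paper argues by contrapositive: it shows that $\ir\xz<\frac12 a(L)$ forces every loop not a power of the systole to have length $>L$, so $N_L\xz\leq 1$.
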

\begin{proof}
    We prove this claim in the contrapositive. If $z\in X$ has 
    \begin{equation*}
        \ir\xz < \frac{1}{2} a(L),
    \end{equation*}
    then by definition, the length of a systole $\g$ based at $z$ is upper bounded by
        \begin{equation*}
            \ell(\g) < a(L).
        \end{equation*} 
        
        As the function $a(x)$ is monotonic decreasing, the loop collar lemma, Lemma \ref{lem: l_ loop collar lemma}, implies that every other loop that is not a power of $\g$ has length at least 
        \begin{equation*}
            a(\ell(\g)) > a(a(L)) = L,
        \end{equation*}
        which implies $N_L\xz \leq 1$ and hence $z\not\in X_L$.
\end{proof}
We may finally prove a result that will allow us to estimate the expectation of the off-diagonal.
\begin{lemma}[Properties of $\od\xz$ on and off $X_L$]\label{lem: n_properties of OD on XL} For any $z\in X$,
        \begin{itemize}
            \item[(i)] if $z\in X_L$, then $\od\xz \ll_{L,f,\tau} 1$;
            \item[(ii)] otherwise, $z\in X-X_L$, and $\od\xz = 0$.
        \end{itemize}
    \end{lemma}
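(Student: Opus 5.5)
Part (ii) comes first and is almost immediate from the support of $k$. If $z\notin X_L$, then $N_L\xz\le 1$, so $\cG_L\xz\subseteq\{\g_1,\g_1^{-1}\}$ for a systole $\g_1$ (and possibly $\cG_L\xz$ is empty). Consider a summand of $\od\xz$, indexed by primitive $\g_0\neq\g_1^{\pm1}$ and powers $n,m\ge1$. By Lemma \ref{lem: props of k}, $\supp k\subseteq[0,L]$, so the summand is nonzero only if $\ell_\xz(\g_0^n)\le L$ and $\ell_\xz(\g_1^m)\le L$. Since the geodesic loop of $\g^n$ is at most $n$ times as long as that of $\g$, and is at least as long (as $d(\tilde z,\tilde\g^n\tilde z)\ge d(\tilde z,\tilde\g\tilde z)$ fails in general), I would avoid using monotonicity in $n$. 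Instead I use directly that $\ell_\xz(\g^n)\ge \ell(\text{closed geodesic of }\g^n)=n\,\ell([\g])$. The cleanest route is to note that a nonzero summand requires two distinct unoriented primitive classes $\g_0,\g_1$ each having some power of length $\le L$. I would then show that a power $\g^n$ having length $\le L$ forces $\g$ itself to have length $\le L$. This follows from the displacement formula: $\cosh d(\tilde z,\tilde\g^n\tilde z)=\cosh(n\ell_0)+(\cosh n\ell_0-1)x^2$ in the coordinates of Definition \ref{def: w and W}, i.e.\ $w_{n\ell_0}(x)$. This is increasing in $n$, so $\ell_\xz(\g)\le\ell_\xz(\g^n)\le L$. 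Hence $\g_0,\g_1\in\cG_L\xz$ are distinct up to orientation, giving $N_L\xz\ge2$, a contradiction. So every summand vanishes and $\od\xz=0$.

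For part (i), take $z\in X_L$. Bound each summand by $\|k\|_\infty^2\le C(f,L,\tau)^2$ using Lemma \ref{lem: props of k}. The number of nonzero summands is at most the square of $\#\{\g\in\pi_1\xz:\ell_\xz(\g)\le L\}$, since each nonzero term is determined by the ordered pair $(\g_0^n,\g_1^m)$ of nontrivial elements of length $\le L$, and an element $\g^n$ determines $(\g,n)$ uniquely by unique roots. Lemma \ref{lem: nmaximum lattice points} with $A=L$ bounds this count by $\sinh^2\frac{L+\ir\xz}{2}/\sinh^2\frac{\ir\xz}{2}$. This quantity is decreasing in $\ir\xz$, so by Lemma \ref{lem: inj radii on XL} it is at most
\[
\frac{\sinh^2\frac{L+a(L)/2}{2}}{\sinh^2\frac{a(L)}{4}},
\]
a constant depending only on $L$. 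Therefore $\od\xz\ll_{L,f,\tau}1$.

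The main obstacle is the uniformity in part (i): without a lower bound on the injectivity radius the lattice-point count blows up, and the bound must not depend on $g$ or $X$. Lemma \ref{lem: inj radii on XL} supplies exactly this lower bound. The only delicate point in (ii) is the monotonicity $\ell_\xz(\g)\le\ell_\xz(\g^n)$, which I handle through the explicit formula for $w_{n\ell_0}$.
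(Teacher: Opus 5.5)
Your proposal is correct and follows the paper's proof: part (i) bounds each summand by $\|k\|_\infty^2$ and squares the lattice-point count of Lemma \ref{lem: nmaximum lattice points}, made uniform by Lemma \ref{lem: inj radii on XL}, and part (ii) follows from $\supp k\subseteq[0,L]$. Your check that $\ell_\xz(\g)\le\ell_\xz(\g^n)$ (because $w_{n\ell_0}$ is increasing in $n$) simply makes explicit what the paper uses implicitly when it restricts the off-diagonal sum to $\cG_L\xz$. You should delete the parenthetical claiming that $d(\tilde z,\tilde\g^n\tilde z)\ge d(\tilde z,\tilde\g\tilde z)$ ``fails in general'', and the sentence about avoiding monotonicity: for the hyperbolic elements of a cocompact Fuchsian group this inequality holds, exactly as your displacement-formula argument then shows.
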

    \begin{proof}
        For the first claim we trivially estimate $\od \xz$. One has by definition that
        \begin{equation*}
            \od \xz = \sum_{\g_0 \neq \g_1^{\pm 1}\in \cG_L\xz}\sum_{n,m\geq 1} k(\ell_\xz (\g_0^n))k(\ell_\xz(\g_1^m)),
        \end{equation*}
        which by the triangle inequality is trivially
        \begin{equation*}
            \od\xz\leq \prt{\sum_{\id\neq\g\in \pi_1\xz }|k(\ell_\xz (\g))| }^2.
        \end{equation*}
        The boundedness properties of $k$, Lemma \ref{lem: props of k}, imply
        \begin{equation*}
            \od\xz\ll_{L,f,\tau} \prt{|\{\g\in \pi_1\xz: \ell_\xz(\g) \leq L \}|}^2,
        \end{equation*}
        which by Lemma \ref{lem: nmaximum lattice points} 
        \begin{equation}\label{eq: brrr}
            \od\xz\leq \prt{\frac{\sinh^2\frac{L+\ir \xz}{2}}{\sinh^2 \frac{\ir\xz}{2}}}^2.
        \end{equation}
        By the injectivity radius lower bound of Lemma \ref{lem: inj radii on XL}, and the fact that (\ref{eq: brrr}) is monotone decreasing in $\ir\xz$, we have the bound
        \begin{equation*}
            \od\xz\leq \prt{\frac{\sinh^2 \frac{L+\frac 12 a(L)}{2}}{\sinh^2 \frac{a(L)}{4}}}^2 \ll_L 1,
        \end{equation*}
        which proves the first claim of the Lemma.
        
        The second claim follows by definition of $\od$ and $X_L$. If $z\in X-X_L$ then $$N_L(X,z)=\frac 12 \cG_L(X,z) \leq 1$$ which implies that the double sum in the definition of $\od\xz$ (\ref{eq: def of D and OD}) is empty.
    \end{proof}
\subsection{Proof of Proposition \ref{prop: off diagonal term claim}}
Before proceeding with the promised proof, we will require just one last ingredient. 

If $\cL > 0$ and $X\in\cMg$ then we let $N_X^{\chi = -1}(\cL)$ be the number of embedded geodesic once-holed tori or pairs of pants of total boundary length at most $\cL$ in $X$. 

The following asymptotic in $g$ for the expected number of such subsurfaces follows from work of Monk and Thomas.
\begin{lemma}[Theorem 5 of \cite{monkthomastanglefree}]\label{lem: nexpectation of oneholed tori or pops}
  \begin{equation*}
      \E_g^\WP\left[N_X^{\chi =-1}(\cL)\right] =  O_\cL\prt{\frac{1}{g}}.
  \end{equation*}  
\end{lemma}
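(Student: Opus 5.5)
We will prove the bound by expressing $N_X^{\chi=-1}(\cL)$ as a finite sum of geometric functions and applying Mirzakhani's integration formula, Theorem \ref{thm: mirzakhani integration formula}, to each. An embedded geodesic subsurface $P$ of signature $(1,1)$ or $(0,3)$ is determined by its boundary multicurve $\vec\beta=\partial P$. Two boundary components of a pair of pants may coincide in $X$, in which case $\vec\beta$ has fewer than three components. So $\vec\beta$ is a multicurve with $k\in\{1,2,3\}$ components, and $\ell_X(\vec\beta)\leq \cL$ (boundary components counted with multiplicity). Moreover, $X-\vec\beta$ has at least two connected components: $P$ itself, and at least one further component, since $X$ is closed of genus $g\geq 3$ and $P\neq X$.

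First, we group the subsurfaces $P$ by the $\mcgg$-orbit of the pair $(\vec\beta, P)$. This gives
\begin{equation*}
    N_X^{\chi=-1}(\cL)=\sum_{[\vec\beta]} F^{\vec\beta}(X), \qquad F(\vec x):=\1_{\{\sum_i c_i x_i\leq \cL\}},
\end{equation*}
where $c_i\in\{1,2\}$ records multiplicity. The sum runs over the topological types of such multicurves, together with a choice of which complementary component is $P$, so it has at most one term per type. Applying Theorem \ref{thm: mirzakhani integration formula} to each term gives
\begin{equation*}
    \E_g^\WP\left[N_X^{\chi=-1}(\cL)\right]=\sum_{[\vec\beta]}\frac{1}{V_g}\int_{\R_{>0}^k}F(\vec x)\prod_{i=1}^q V_{g_i,n_i}(\vec x_i)\, x_1\cdots x_k\d \vec x,
\end{equation*}
with $q\geq 2$ pieces. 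One of these pieces is $P$, contributing $V_{0,3}(\vec x)=1$ or $V_{1,1}(x)$.

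Next, we remove the length dependence. By (\ref{eq: trivial wp ratio Vgn with lengths}), each ratio $V_{g_i,n_i}(\vec x_i)/V_{g_i,n_i}$ is at most $\prod \sinh(x_j/2)/(x_j/2)$, which is bounded in terms of $\cL$ on the support of $F$. The domain of integration $\{\sum x_i\leq \cL\}$ is bounded, and so is the weight $x_1\cdots x_k$. Hence each term is at most $C(\cL)\cdot \frac{1}{V_g}\prod_{i=1}^q V_{g_i,n_i}$. Summing over types, every type produces a cut into $q\geq 2$ pieces, each with $b_i\geq 1$ boundaries, and with Euler characteristics adding to $2-2g$. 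Therefore (\ref{eq: volume ratio separating multicurves}), summed over $q\in\{2,3,4\}$ (at most $k+1\leq 4$ pieces), gives
\begin{equation*}
    \E_g^\WP\left[N_X^{\chi=-1}(\cL)\right]\ll_\cL \sum_{q=2}^{4} O\prt{g^{-(q-1)}} = O_\cL\prt{\frac 1g}.
\end{equation*}

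The main obstacle is the bookkeeping of topological types. We must ensure that every embedded $\chi=-1$ subsurface, including degenerate pairs of pants with glued cuffs, is counted by a geometric function. We must also check that the estimate (\ref{eq: volume ratio separating multicurves}) genuinely applies to all these types. This includes the case where the complement of $P$ is connected, as with a nonseparating pair of pants cutting $X$ into $\Sigma_{0,3}\sqcup\Sigma_{g-2,3}$. I expect this to be covered, since the cited Mirzakhani--Petri estimate sums over all multicurve types with $q$ pieces. If it turns out not to be, I would fall back on the standard volume asymptotic $V_{g',n+1}\asymp g' V_{g',n}$, together with (\ref{eq: volume ratio V g-1 n+2 g n}), to get $V_{g-2,3}/V_g\asymp V_{g-1,1}/V_g = O(1/g)$ directly.
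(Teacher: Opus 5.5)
Your argument is correct, but it takes a different route from the paper. The paper gives no proof of this lemma; it imports the statement as a black box from Theorem 5 of \cite{monkthomastanglefree}.

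You instead re-derive the qualitative bound from tools already collected in Section \ref{sec: prelims and defs}. You sort the subsurfaces by the topological type of the boundary multicurve, and this includes the degenerate pants with two cuffs glued, whose cut is $\Sigma_{0,3}\sqcup\Sigma_{g-1,1}$. You then apply Theorem \ref{thm: mirzakhani integration formula}, remove the length dependence with (\ref{eq: trivial wp ratio Vgn with lengths}) on the bounded domain, and sum using (\ref{eq: volume ratio separating multicurves}) for $q\in\{2,3,4\}$. This is essentially the mechanism behind the cited result, and it works.

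Three points deserve tightening:
\begin{itemize}
\item The orbit of an ordered multicurve counts each $P$ a bounded number of times, because of symmetry and ordering factors. So you really get $N_X^{\chi=-1}(\cL)\ll\sum_{[\vec\beta]}F^{\vec\beta}(X)$ rather than an equality, and this inequality is all you need.
\item Your $F$ is an indicator function and is not compactly supported in $\R_{>0}^k$. Mirzakhani's formula therefore has to be applied via monotone convergence.
\item Your concern about the pants with connected complement is resolved within the paper's framework. The multicurve $\partial P$ disconnects $X$, with $P$ as one side, so it falls under (\ref{eq: volume ratio separating multicurves}) with $q=2$. That estimate requires only $b_i\geq 1$ and additivity of Euler characteristics, not that each curve separate. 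Since your multicurves have at most three components, you only need the estimate for a bounded number of components. Your fallback computation $V_{g-2,3}/V_g\asymp V_{g-1,1}/V_g\asymp 1/g$ is also correct.
\end{itemize}

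As for what each route buys: the citation gives explicit control in $\cL$ that remains uniform as $\cL$ grows with $g$. Monk--Thomas need this for tangle-freeness at scales of order $\log g$. Your crude constant $C(\cL)$ yields only $O_\cL(1/g)$. That is exactly what the paper uses, since $\cL=4L$ is fixed. Your version has the advantage of being self-contained and of making explicit which subsurface types are counted.
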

We now proceed to prove the main result of this section.
\begin{proof}[Proof of Proposition \ref{prop: off diagonal term claim}]\label{prf: n proof of main offdiag proposition}
    Recall the definition of the expectation we are to estimate
    \begin{equation*}
        \Eg[\od\xz] = \frac{1}{4\pi(g-1)} \E_g^\WP\left[\int_X \od\xz \d z\right].
    \end{equation*}
    By Lemma \ref{lem: n_properties of OD on XL}
    \begin{equation}\label{eq: whee}
        \frac{1}{4\pi(g-1)}\E_g^\WP \left[\int_X \od\xz \d z\right] \ll_{L,f,\tau} \frac{1}{g}\E_g^\WP \left[\vol X_L\right].
    \end{equation}
    
    Let $z\in X_L$, $\g_1$ be a systole, and $\g_2$ a second shortest primitive loop, both based at $z$, as per Definition \ref{def: second shortest primitive loop}. As $z\in X_L$, by Lemma \ref{lem: systole and second shortest in},
    $$\ell(\g_1),\ell(\gamma_2)\leq L.$$ 
    By Corollary \ref{cor: length-minimising topology}, $\g_2$ is simple and trivially intersects $\g_1$. The geodesic surface weakly filled by $(\g_1,\g_2)$ is a pair of pants or once-holed torus with total boundary length at most $4L$. As by Lemma \ref{lem: l_ injrad bound on area around systole}, $z$ is in some neighbourhood  of area at most $2e^{\ir\xz} \leq 2e^{L/2}$ of the systole $\g_1$, we have the volume estimate
    \begin{equation*}
        \vol X_L \leq 2e^{L/2}N_X^{\chi=-1}(4L).
    \end{equation*}
    This yields
    \begin{align*}
        \E_g^\WP \left[\vol X_L\right] \leq 2e^{L/2}\E_g^\WP [N_X^{\chi=-1}(4L)],
    \end{align*}
    which implies, together with Lemma \ref{lem: nexpectation of oneholed tori or pops}, that
    \begin{equation*}\label{eq: final XL volume bound}
        \E_g^\WP \left[ \vol X_L\right] \ll_L \frac{1}{g}.
    \end{equation*}
    This finally implies, by (\ref{eq: whee}),
    \begin{equation*}
        \Eg[\od\xz] = O_{L,f,\tau}\prt{\frac{1}{g^2}}.
    \end{equation*}
\end{proof}
\appendix
\section{Gaussian behaviour of eigenfunctions}\label{app: berry agrees}
The aim of this appendix is to prove Theorem \ref{thm: random wave}. First, let us recall our setup. For a hyperbolic surface $X$, we let $\{r_j\}_{j = 0}^\infty\subseteq (0,1/2)i\cup [0,\infty)$ be associated to an orthonormal Laplacian eigenbasis $\{ \phi_j\}_{j\geq 0}^\infty$, with
\begin{equation*}
    \Delta \phi_j = \prt{\frac 14+r_j^2}\phi_j.
\end{equation*}
Fix parameters $L\geq 1, \tau>0$ and a function $f$ with compactly supported Fourier transform $\hf\in C_c^\infty(\R)$, and define
\begin{equation*}
    h(r) := f(L(r-\tau))+f(L(r+\tau)).
\end{equation*}
Then our local Weyl law is
\begin{equation*}
    N\xz = \sum_{j=0}^\infty h(r_j) |\phi_j(z)|^2.
\end{equation*}

We now recall some standard facts about Gaussian random variables. Let $(\Omega, \mathcal F, \P)$ be a probability space. We say a real-valued random variable $\xi:\Omega \to \R$ is a \emph{Gaussian with mean $m$ and variance $\sigma^2$} if its pushforward law
\begin{equation*}
    \xi_* \P=\cN(m,\sigma^2)
\end{equation*}
is the normal Gaussian probability measure of mean $m$ and variance $\sigma^2>0$.
\begin{proposition}[Proposition D.16 of \cite{FollmerSchied2025}]
    Any atomless probability space supports sequences of i.i.d.\ Gaussians.
\end{proposition}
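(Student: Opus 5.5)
Since the statement is existence-level, the plan is the standard measure-theoretic construction. First, I reduce to a canonical atomless space. Let $(\Omega,\mathcal F,\P)$ be atomless. By the Sierpi\'nski intermediate value property of atomless measures, for every measurable $A$ and every $t\in[0,\P(A)]$ there is a measurable $B\subseteq A$ with $\P(B)=t$. Using this, I will build a single random variable $U:\Omega\to[0,1]$ with $U_*\P$ equal to Lebesgue measure. The construction is by dyadic refinement. At stage $k$ I partition $\Omega$ into $2^k$ measurable sets $A_{k,i}$, each of measure $2^{-k}$, with each $A_{k,i}$ split into $A_{k+1,2i}$ and $A_{k+1,2i+1}$. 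I then set $U:=\sum_k 2^{-k}\1_{\{\omega\text{ lies in an odd-indexed set at stage } k\}}$. The binary digits of $U$ are then i.i.d.\ fair bits, so $U$ is uniform on $[0,1]$.

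Second, I produce infinitely many independent uniforms from one. Write $U=\sum_{k\ge1} b_k 2^{-k}$ with $(b_k)$ i.i.d.\ Bernoulli$(1/2)$; the dyadic-rational ambiguity is a null set. Fix a bijection $\N\times\N\to\N$, $(j,i)\mapsto n(j,i)$, and set $U_j:=\sum_i b_{n(j,i)}2^{-i}$. The families of digits feeding different $U_j$ are disjoint, so the $U_j$ are independent, and each is uniform since its digits are i.i.d.\ fair bits.

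Third, I push forward to Gaussians. Let $\Phi$ be the standard normal CDF and set $\xi_j:=m+\sigma\,\Phi^{-1}(U_j)$, which is defined off the null set $\{U_j\in\{0,1\}\}$. Each $\xi_j$ has law $\cN(m,\sigma^2)$ by the inverse-CDF transform. Since they are measurable functions of independent variables, the $\xi_j$ are i.i.d. Centred Gaussians of variance $(4\pi(g-1))^{-1}$ are then obtained by choosing $m=0$ and $\sigma^2=(4\pi(g-1))^{-1}$.

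I expect the main obstacle to be the first step, namely constructing the nested partition using only atomlessness. The key input is the intermediate value property, which I would prove by a Zorn's lemma or greedy exhaustion argument: repeatedly take measurable subsets of measure at most the remaining deficit, each chosen with at least half the largest achievable measure, and show that the deficit tends to $0$ because there are no atoms. Everything afterwards is routine bookkeeping of independence via disjoint digit sets.
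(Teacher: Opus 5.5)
Your argument is correct. The paper gives no proof of its own and simply cites F\"ollmer--Schied, whose proof follows essentially your route: the intermediate value property for atomless measures, a nested dyadic splitting yielding i.i.d.\ fair bits, regrouping these bits along a bijection $\{1,2,\ldots\}^2\to\{1,2,\ldots\}$ into independent uniforms, and the quantile transform $\Phi^{-1}$.

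The only step that deserves a line more than you give it is the exhaustion argument. There the deficit is forced to $0$ because every set of positive measure in an atomless space contains subsets of arbitrarily small positive measure (by repeated halving). So a positive shortfall would keep the largest admissible measure bounded below by a fixed positive constant at every stage. That is incompatible with the chosen pieces, each of at least half that size and pairwise disjoint with total measure at most $t$, having summable measures.
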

We note that the square of a standard normal  Gaussian has a particular distribution known as the $\chi^2$-distribution, which is a certain type of gamma distribution. The square of a centered Gaussian with variance $\sigma^2>0$ is $\frac{1}{\sigma^2}\chi^2$ distributed.

The following theorem proves that, in terms of the variance, our local Weyl law $N\xz$ behaves as though the eigenfunctions were (appropriately normalised) real-valued i.i.d.\ Gaussians, independent of their energies.
\begin{theorem}\label{thm: random wave}
    For any hyperbolic surface $X$, let  $\{\xi_j^X\}_{j=0}^\infty$ be a sequence of i.i.d.\ real Gaussian random variables of mean 0 and variance $\prt{4\pi(g-1)}^{-1}$ on the probability space 
    \begin{equation*}
        \prt{X,\mathcal B, \frac{\vol}{4\pi(g-1)}}, \quad \text{where $\mathcal B$ is the  standard Borel $\sigma$-algebra of $X$}.
    \end{equation*}

    Then for fixed $\tau>0, f$ and any $L\geq 1$, we have, as $g\to\infty$,
    \begin{equation*}
        \Varg\prt{\sum_{j\geq 0}h(r_j) |\xi_j^X(z)|^2} = \Varg\prt{N\xz} + O_{f,\tau}\prt{\frac{1}{L^2g}}+ O_{f,L,\tau}\prt{\frac{1}{g^2}}.
    \end{equation*}
\end{theorem}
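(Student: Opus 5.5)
The plan is to compute the left-hand side exactly by the law of total variance, conditioning on the surface $X$, and then to evaluate each resulting piece with the trace formula. Write $\sigma^2:=(4\pi(g-1))^{-1}$ and $G(X,z):=\sum_{j\geq0}h(r_j)|\xi_j^X(z)|^2$. Since $h$ is Schwartz, Weyl's law gives $\sum_j|h(r_j)|<\infty$ for each fixed $X$, so $G(X,\cdot)\in L^2$. Then
\begin{equation*}
    \Varg(G)=\E_g^\WP\big[\Var_z(G(X,\cdot))\big]+\Var_g^\WP\big(\E_z[G(X,\cdot)]\big),
\end{equation*}
where $\E_z,\Var_z$ are taken over the normalised area measure on the fixed surface $X$.

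For each fixed $X$ I will use the independence of the $\xi_j^X$ and the Gaussian moments $\E[\xi^2]=\sigma^2$ and $\Var(\xi^2)=2\sigma^4$. These give
\begin{equation*}
    \E_z[G]=\sigma^2\sum_j h(r_j),\qquad \Var_z(G)=2\sigma^4\sum_j h(r_j)^2.
\end{equation*}
Exchanging the sum and the variance is justified by $L^2$ convergence, which follows from absolute summability of $h(r_j)$.

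\textbf{Main term.} The main term comes from $2\sigma^4\E_g^\WP[\sum_j h(r_j)^2]$. The function $h^2$ is even with $\widehat{h^2}=\hh*\hh$ supported in $[-2L,2L]$, so Lemma \ref{lem: selberg trace formula} applies to it. It gives
\begin{equation*}
    \sum_j h(r_j)^2=(g-1)\imii h(r)^2 r\tanh(\pi r)\d r+(\text{sum over closed geodesics}).
\end{equation*}
The WP expectation of the geodesic sum is $O_{f,L,\tau}(1)$ by the same computation as in \cite[Prop.\ 4.1]{rudnick2023goestatisticsmodulispace} (as in Lemma \ref{lem: rudnick main expectation}), so it contributes $O_{f,L,\tau}(\sigma^4)=O(g^{-2})$.

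Expanding $h^2$, the cross term $f(L(r-\tau))f(L(r+\tau))$ is $O_{f,\tau}(L^{-N})$ after integration, because $f$ is Schwartz and $\tau>0$ is fixed. The two squared terms contribute equally by evenness. Substituting $r=\tau+u/L$ and Taylor expanding $r\tanh\pi r$ around $\tau$ gives
\begin{equation*}
    \imii h^2 r\tanh\pi r\d r=\frac{2\tau\tanh\pi\tau}{L}\|f\|_{L^2(\R)}^2+O_{f,\tau}(L^{-2}).
\end{equation*}
The region near $r=0$ and the tails are handled by Schwartz decay. Multiplying by $2\sigma^4(g-1)$ yields
\begin{equation*}
    \frac{1}{4\pi(g-1)}\frac{\tau\tanh\pi\tau}{\pi L}\|f\|_{L^2(\R)}^2+O_{f,\tau}\prt{\frac{1}{L^2g}}+O_{f,L,\tau}\prt{\frac{1}{g^2}}.
\end{equation*}

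\textbf{Remaining term.} The remaining term is $\sigma^4\Var_g^\WP(\sum_j h(r_j))$. By Rudnick's result \cite{rudnick2023goestatisticsmodulispace}, the WP variance of the smooth linear statistic $\sum_j h(r_j)$ in a fixed window is bounded (indeed convergent) as $g\to\infty$, so this term is $O_{f,L,\tau}(g^{-2})$.

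Comparing the total with Theorem \ref{thm: main theorem on the variance} gives the claim. The main obstacle is the uniformity bookkeeping in the main term. Specifically, I must check that the $O(1/L^2)$ error from the Taylor expansion depends only on $f$ and $\tau$, and that the Selberg transform of $h^2$ satisfies the hypotheses needed to import Rudnick's $O(1)$ bounds.
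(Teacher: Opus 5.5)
Your proposal is correct and follows essentially the same route as the paper. Both arguments use the law of total variance and the Gaussian moments $\E[\xi^2]=\sigma^2$, $\Var(\xi^2)=2\sigma^4$ to reduce everything to $2\sigma^4\,\E_g^\WP[\sum_j h(r_j)^2]$ and $\sigma^4\,\Var_g^\WP(\sum_j h(r_j))$. Both then apply the trace formula to $h^2$, use Rudnick's results to bound the geodesic sum and the linear-statistic variance, and compare with Theorem~\ref{thm: main theorem on the variance}. The only differences are cosmetic: you condition on $X$ where the paper conditions on $N(X)=\sum_j h(r_j)$, which is equivalent since the conditional mean $\sigma^2 N(X)$ depends on $X$ only through $N(X)$, and your treatment of the cross term and the Taylor expansion supplies the large-$L$ asymptotic that the paper states without detail.
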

\begin{remark}
    Isometries preserve independence and Gaussianity.
\end{remark}
\begin{proof}
    We first set 
\begin{equation*}
    N(X):= \sum_{j\geq 0 } h(r_j).
\end{equation*}
and
\begin{equation*}
    \cN\xz:= \sum_{j=0}^\infty h(r_j)|\xi_j^X(z)|^2.
\end{equation*}

For any hyperbolic surface $X$, we let $\mu =\vol /4\pi(g-1)$. 

Crucially, as $\xi_j^X$ is Gaussian of variance $(4\pi(g-1))^{-1}$, the pushforward law of $|\xi_j^X|^2_* \mu$ is $\frac{1}{4\pi(g-1)}\chi^2$ distributed.

By Eve's Law
\begin{align}\label{eq: exp of var}
    \Varg (\cN\xz) &= \Eg \left[ \Varg \prt{\cN\xz \middle| N(X)}\right]\\ \label{eq: var of exp}& + \Varg \prt{\Eg\left[\cN\xz \middle| N(X) \right]}.
\end{align}
We start by estimating the second summand.

The variance of the expectation (\ref{eq: var of exp}) equals
\begin{align*}
    \text{VE}:=\Varg &\left(\E_g^\WP\left[\int_X \sum_{j\geq 0} h(r_j)|\xi_j^X(z)|^2 \d \mu(z)\middle| N(X)  \right] \right),
\end{align*}
which, as 
$$\int_X |\xi_j^X(z)|^2 \d\mu(z) =\E_z \left[|\xi_j^X(z)|^2\right]= \frac{1}{4\pi(g-1)},$$ implies,
\begin{equation*}
     \text{VE}=\Varg \prt{\frac{1}{4\pi(g-1)} \E_g^\WP \left[N(X)\middle| N(X) \right]}.
\end{equation*}
The conditional expectation of a random variable with respect to itself is the random variable itself. Thus
\begin{equation*}
     \text{VE}=\frac{1}{\prt{4\pi(g-1)}^2}\Varg \prt{N(X)}.
\end{equation*}
As $N(X)$ is independent of the $z$ variable, by work of Rudnick \cite[Theorem 1.1]{rudnick2023goestatisticsmodulispace}, we have the upper bound of
\begin{equation*}
    \text{VE} = O_{f,\tau,L}\prt{\frac{1}{g^2}}.
\end{equation*}

We now estimate the expectation of the variance (\ref{eq: exp of var}). By standard identities, the expectation of the conditional variance equals 
\begin{equation*}
    \text{EV} := \Eg\left[\Varg \prt{\cN\xz \middle| N(X)} \right] = \Eg \left[\prt{\cN\xz - \Eg\left[\cN\xz \middle | N(X)\right]}^2  \right].
\end{equation*}
As $\Eg\left[\cN\xz \middle | N(X)\right] = \frac{1}{4\pi(g-1)}N(X)$ by the above,
\begin{equation*}
    \text{EV}= \Eg \left[\prt{\cN\xz - \frac{1}{4\pi(g-1)} N(X)}^2 \right].
\end{equation*}
Expanding the square yields 
\begin{equation*}
    \text{EV}=\Eg \left[ \sum_{j,k} h(r_j) h(r_k) \prt{\xi_j^X(z)^2 - \frac{1}{4\pi(g-1)}}\prt{\xi_k^X(z)^2 -\frac{1}{4\pi(g-1)} }\right].
\end{equation*}
Since the $\xi_j^X(z)^2$ are independent, and in particular uncorrelated, the cross-terms vanish and yield
\begin{align*}
    \text{EV}=\E_g^\WP&\left[\sum_{j}h(r_j)^2 \int_X \prt{\xi_j^X(z)^2-\frac{1}{4\pi(g-1)}}^2\d\mu \right].
\end{align*}
The inner integral equals $\Var_z\prt{\xi_j(z)^2}$. As each $\xi_j^X(z)^2$ is $\frac {1}{4\pi(g-1)}\chi^2$ distributed, their variance is $\frac{2}{\prt{4\pi(g-1)}^2}$. Thus,
\begin{align*}
    \text{EV}=\E_g^\WP \left[\sum_{j}h(r_j)^2 \cdot \frac{2}{(4\pi(g-1))^2} \right].
\end{align*}
By Selberg's trace formula, Lemma \ref{lem: selberg trace formula},
\begin{align*}
    \text{EV}&= \frac{1}{4\pi(g-1)}\imii h(r)^2\frac{r\tanh \pi r}{2\pi}\td r\\& + \frac{1}{\prt{4\pi(g-1)}^2}\E_g^\WP \left[ \sum_{\substack{\g \text{ prim.}\\ {\text{closed geod.}}}} \sum_{n\geq 1}\frac{\ell(\g)}{\sinh \frac{n\ell(\g)}{2}}\cF \prt{h^2}(n\ell(\gamma))\right].
\end{align*}
As $\cF(h^2)\in C_c^\infty(\R)$, by the aforementioned work of Rudnick, we have 
\begin{equation*}
    \text{EV}=\frac{1}{4\pi(g-1)}\imii h(r)^2\frac{r\tanh \pi r}{2\pi}\td r + O_{f,\tau,L}\prt{\frac{1}{g^2}}.
\end{equation*}
Expanding the first term, by definition of $h$, it equals
\begin{equation*}
    \frac{1}{4\pi(g-1)}\imii\prt{f(L(r-\tau))+f(L(r+\tau))}^2\frac{r\tanh \pi r}{2\pi}\td r.
\end{equation*}
For large $L$, this has the asymptotic
\begin{equation*}
    \frac{1}{4\pi(g-1)} \frac{\tau \tanh\pi \tau}{\pi L} \|f\|_{L^2(\R)}^2 +O_{f,\tau}\prt{\frac{1}{L^2 g}},
\end{equation*}
which yields the statement of the proposition, by Theorem \ref{thm: main theorem on the variance}.
\end{proof}
\printbibliography
\end{document}